\newcommand{\norm}[2]{\left\|{#1}\right\|_{#2}}
\newcommand{\abs}[1]{\left|{#1}\right|}
\newcommand{\gap}{{\rm gap}}
\newcommand{\spec}{{\rm spec}}
\renewcommand{\i}{{\mathrm i}}
\DeclareMathAlphabet{\mathbf}{OT1}{cmr}{bx}{it}
\DeclareMathAlphabet{\mathbf}{OT1}{cmr}{bx}{it}
\DeclareMathOperator{\Image}{Im}
\DeclareMathOperator{\dist}{dist}
\newcommand{\vy}{\mathbf y}
\newcommand{\vxi}{\boldsymbol{\xi}}
\newcommand{\bbN}{\mathbb{N}}
\newcommand{\bbR}{\mathbb{R}}
\newcommand{\bbC}{\mathbb{C}}
\newcommand{\bbP}{\mathbb{P}}
\newcommand{\bbE}{\mathbb{E}}
\newcommand{\mc}{\mathcal}
\DeclareMathOperator{\e}{\mathrm e}
\renewcommand{\d}{\mathrm{d}}
\theoremstyle{definition}
\newtheorem{defi}{Definition}
\newtheorem{remark}[defi]{Remark}
\newtheorem{propo}[defi]{Proposition}
\newtheorem{theo}[defi]{Theorem}
\newtheorem{lem}[defi]{Lemma}
\newtheorem{cor}[defi]{Corollary}
\title{On a generalization of the preconditioned Crank-Nicolson Metropolis algorithm}
\author{
Daniel Rudolf
\\
Institut f\"ur Mathematik, Universit\"at Jena\\
Ernst-Abbe-Platz 2, 07743 Jena, Germany\\
email: 
daniel.rudolf@uni-jena.de\\ \\
Bj\"orn Sprungk
\\
Technische Universit\"at Chemnitz\\ 
Reichenhainer Str. 41, 09107 Chemnitz, Germany\\ 
email: 
bjoern.sprungk@mathematik.tu-chemnitz.de
}
\begin{document}
\maketitle
\begin{abstract}
Metropolis algorithms for approximate sampling of probability measures on infinite dimensional Hilbert spaces are  considered and a generalization of the preconditioned Crank-Nicolson (pCN) proposal is introduced. 
The new proposal is able to incorporate information on the measure of interest.   
A numerical simulation of a Bayesian inverse problem indicates that a Metropolis algorithm with such a proposal 
performs independently of the state space dimension and the variance of the observational noise.
Moreover, a qualitative convergence result is provided by a comparison argument for spectral gaps.
In particular, it is shown that the generalization inherits geometric convergence
from the Metropolis algorithm with pCN proposal.
\end{abstract}

\vspace*{2ex}

{\small
{\bf Keywords: } Markov chain Monte Carlo, Metropolis algorithm, spectral gap, conductance, Bayesian inverse problem.


}

\section{Introduction}
\label{sec: intro}

Consider a target probability distribution $\mu$ defined 
on a possibly infinite dimensional separable Hilbert space $\mc H$. 
It is of interest to sample from this probability measure
and assumed that there is a density of $\mu$
w.r.t. a Gaussian reference measure $\mu_0$ on $\mc H$
given by
\begin{equation} \label{equ:mu}
	 \frac{\d\mu}{\d\mu_0}(u) = \frac{1}{Z} \exp(-\Phi(u)), \qquad u\in \mc H.
\end{equation}
Here $\Phi \colon \mc H \to \mathbb{R}_+$ 
is a measurable function and $Z= \int_{\mc H} \exp(-\Phi(u))\, \mu_0(\d u)$ 
the normalizing constant.
Such probability measures $\mu$ arise as 
 posterior distributions in Bayesian inference with $\mu_0$ as a Gaussian prior.
Common examples 
in infinite dimensional spaces are inferring spatially 
distributed properties of porous media
or stock prices.

Unfortunately, 
the fact that the normalizing constant $Z$ is typically 
unknown and that $\Phi$ is only available in the form of function evaluations
makes it difficult to sample $\mu$ directly. 
But Markov chains and in particular Metropolis-Hastings (MH) algorithms are applicable
for approximate sampling. 
These algorithms consist of a proposal and an acceptance/rejection step. 
A state is proposed by
a proposal kernel 
but it is only accepted with a certain probability 
which depends on $\frac{\d \mu}{\d \mu_0}$.
The authors of \cite{BeskosEtAl2008} suggested
a modification of a 
Gaussian random walk proposal which is $\mu_0$-reversible. 
The latter property leads to a well-defined MH algorithm in 
infinite dimensional Hilbert spaces, see also \cite{Ti98}.
This proposal was later \cite{CotterEtAl2013} 
referred to as \emph{preconditioned Crank-Nicolson (pCN)} proposal.
Remarkably, the Markov chain of the
resulting pCN Metropolis algorithm
has 
\emph{dimension-independent sampling efficiency}, see \cite{CotterEtAl2013},\cite{HaStVo14}. 
This is a significant advantage compared to earlier,
popular MH algorithms whose performance usually 
deteriorates with increasing state space dimension \cite{CotterEtAl2013},\cite{HaStVo14},\cite{RobertsRosenthal2001}.

We extend the pCN proposal to incorporate information about the target measure $\mu$.
Such an adaption might account for
the anisotropy of the covariance of $\mu$ or the local curvature of $\Phi$.
Intuitively, the resulting Markov chain has on average a larger step size and, thus, explores the state space faster.
This idea is not entirely new.
It is already mentioned in \cite{Ti94} where it is suggested to choose the covariance of the proposal adapted to the target measure.
Later in \cite{GirolamiCalderhead2011} the authors explain how to propose new states using general local metric tensors.
Moreover, in \cite{MartinEtAl2012} the Hessian 
of the negative log density $\Phi$ of $\mu$ is 
employed as local curvature information to design 
a stochastic Newton MH method in finite dimensions
and in \cite{CuiEtAl2014},\cite{Law2014}
a Gauss-Newton variant for capturing global curvature in an 
infinite dimensional setting is outlined.   

Our approach for adapting the proposal to the target measure $\mu$ 
has a similiar motivation as the proposals considered
in \cite{CuiEtAl2014},\cite{Law2014}. It comes from a
local linearization of the unknown-to-observable map in Bayesian inverse problems.
This suggests a particular form for approximating the 
covariance of the 
target measure, 
namely $(C + \Gamma)^{-1}$,
where $C$ denotes the covariance of the
reference measure $\mu_0$ and $\Gamma$ is a 
suitable self-adjoint and positive operator.
We then consider the class of Gaussian proposals with covariance $C_\Gamma = (C + \Gamma)^{-1}$.
By enforcing $\mu_0$-reversibility 
we derive our class of generalized pCN (gpCN) proposal kernels $P_\Gamma$.

In a numerical simulation the resulting Metropolis algorithm seems to perform
independent of dimension and variance. 
Here variance independence refers to the variance of the observational noise, which affects 
the covariance of the target distribution $\mu$.
Particularly, if the variance of the noise decreases the measure 
$\mu$ becomes more concentrated.
Our numerical experiments also indicate that other popular MH 
or random walk algorithms perform worse, i.e., variance dependent.

Moreover, we present a convergence result
for the gpCN Metropolis based on spectral gaps.
It is well known, see \cite{RoRo97}, 
that for Markov chains with reversible transition kernels $K$ a 
strictly positive spectral gap, denoted 
$\gap(K)>0$, is equivalent to a form of geometric ergodicity.
The latter roughly means that, in an appropriate setting, 
the distribution of the $n$th step
of a Markov chain converges exponentially fast to its stationary measure.
We refer to Section~\ref{sec:MC} for precise definitions and further details.

Our main theoretical result, stated in Theorem~\ref{theo:gpCN_conv}, is as follows.
Let us assume that the transition kernel $M_0$ of the pCN algorithm 
has a positive spectral gap, i.e. $\gap(M_0)>0$. 
Then, for any 
$\varepsilon > 0$ 
there is 
an explicitly given probability measure $\mu_R$ such that 
\[
\norm{\mu-\mu_R}{\text{tv}} \leq \varepsilon 
\quad
\text{ and }
\quad
\gap(M_{\Gamma,R})>0
\]
where $M_{\Gamma,R}$ denotes the transition kernel of the gpCN Metropolis algorithm 
targeting the measure $\mu_R$ and 
$\norm{\cdot}{\text{tv}}$ is the total variation 
distance, see \eqref{eq:tv_definition}.

The key for the proof is a new comparison theorem for spectral gaps of Markov chains generated by MH algorithms.
In order to apply this comparison argument we show that the proposal kernels of the pCN and gpCN Metropolis are equivalent and that their Radon-Nikodym derivative 
belongs to an $L_p$-space for a $p>1$.
We note that in \cite{HaStVo14} under additional assumptions on the density function $\frac{\d \mu}{\d \mu_0}$
it is proven that there exists a strictly positive spectral gap of the pCN Metropolis.
Thus, in this setting the gpCN Metropolis algorithm targeting $\mu_R$ also converges exponentially.

The remainder of the paper is organized as follows.
In Section \ref{sec:pre} we state the precise framework, recall preliminary facts, 
and give a brief introduction to Markov chain Monte Carlo 
and MH algorithms including the pCN Metropolis algorithm.
The gpCN Metropolis algorithm is motivated and defined in Section \ref{sec:gpCN}.
Particularly, in Section \ref{subsec: numerics} 
we illustrate its superior 
performance compared to other popular MH algorithms.
In Section \ref{sec:Conv} we state a general result for comparing spectral gaps of MH algorithms and then apply it to the gpCN and pCN Metropolis.
Section~\ref{subsec: loc_gpCN} provides an outlook to gpCN algorithms in 
infinite dimensions which use 
Gaussian proposals with state-dependent covariance.
For the convenience of the reader we 
recall some facts about Gaussian measures 
in Appendix~\ref{sec:Gaussian} and relegate 
more technical proofs to Appendix~\ref{sec:proofs}.

\section{Preliminaries} \label{sec:pre}

\noindent 
Let $\mc H$ be a separable Hilbert space with
inner-product and norm denoted
by $\langle \cdot, \cdot \rangle$ and $\|\cdot\|$. 
By $\mathcal{B}(\mc H)$ we denote the corresponding Borel $\sigma$-algebra 
and by $\mc L(\mc H)$ the set of all bounded, linear operators 
$A\colon \mc H \to \mc H$.
Further, we have a Gaussian measure $\mu_0 = N(0, C)$
on $(\mc H,\mathcal{B}(\mc H))$. 
Here and in the remainder of the 
paper $C\colon \mc H \to \mc H$ denotes a nonsingular \emph{covariance operator}
on $\mc H$, i.e., a bounded, self-adjoint and positive trace class operator with $\ker C= \{0\}$.
By $\mu$ we denote the probability measure 
of interest on $(\mc H,\mathcal{B}(\mc H))$ given through 
the density defined in \eqref{equ:mu}.
Typically, the desired distribution is complicated and the density only 
known up to a constant, 
which makes direct sampling from 
$\mu$ difficult.
This is the reason why Markov chains are 
used for approximate sampling according to $\mu$.

\subsection{Markov chains and spectral gaps}\label{sec:MC}
We give a short introduction to Markov chains and Markov chain Monte Carlo (MCMC)
methods on general state spaces.
We call a mapping $K \colon \mc H \times \mathcal{B}(\mc H) \to [0,1]$ a \emph{transition kernel}, 
if $K(x,\cdot)$ is a probability measure on $(\mc H,\mathcal{B}(\mc H))$ 
for each $x\in \mc H$ and $K(\cdot,A)$ is a measurable function for each $A\in\mathcal{B}(\mc H)$.
Then, a \emph{Markov chain with transition kernel $K$} 
is a sequence of random variables $(X_n)_{n\in \bbN}$, mapping from some probability
space $(\Omega,\mathcal{F},\mathbb{P})$ to $(\mc H,\mathcal{B}(\mc H))$,
satisfying
\[
 \mathbb{P}(X_{n+1} \in A \mid X_1,\dots,X_n) 
 = \mathbb{P}(X_{n+1} \in A \mid X_n)
 = K(X_n,A)
\]
almost surely for all $A\in \mathcal{B}(\mc H)$.
Most properties of a Markov chain can be expressed as properties of its 
transition kernel.
For example, 
we say the transition kernel $K$ is \emph{$\mu$-reversible} if 
\begin{equation}  \label{eq: det_balance_gen}
 	K(u, \d v) \, \mu(\d u) = K(v, \d u) \, \mu(\d v)
\end{equation}
in the sense of measures on $\mc H\times \mc H$.
This property is also known as the
\emph{detailed balance condition} and it implies 
that the distribution $\mu$ is a stationary 
or invariant probability measure of a 
Markov chain with transition kernel $K$, i.e., if $X_1 \sim \mu$ then also $X_{2} \sim \mu$.

Each $\mu$-reversible transition kernel $K$ on $(\mc H,\mathcal{B}(\mc H))$ 
induces a \emph{Markov operator}, which we shall also denote by $K$, given by
\[
 Kf(u) = \int_{\mc H} f(v)\, K(u,\d v), \quad f\in L_2(\mu),
\]
where
\[
 L_2(\mu) = \left\{ f\colon \mc H \to \bbR
 \mid \| f \|_{2,\mu} 
 := \left( \int_{\mc H} |f(u)|^2\,\mu(\d u) \right)^{1/2} <\infty \right\},
\]
is the Hilbert space of measurable, square integrable functions with respect to
$\mu$.
By the $\mu$-reversibility we have that $K\colon L_2(\mu)\to L_2(\mu)$ is a 
bounded and self-adjoint linear operator.
We also introduce the closed subspace
\[
 L_2^0(\mu) = \left\{ f\in L_2(\mu) \mid \int_{\mc H} f(u)\, \mu(\d u)=0 \right\}
\]
of $L_2(\mu)$
and the operator norm 
\[
 \norm{K}{\mu} = 
 \sup_{f\in L_2^0(\mu),\, f\not =0} 
 \frac{\norm{Kf}{2,\mu}}{\norm{f}{2,\mu}}
\]
for $K: L_2^0(\mu) \to  L_2^0(\mu)$.
Let $\spec(K\,|\, L_2^0(\mu))$ denote 
the spectrum of $K$ on $L_2^0(\mu)$. Then, we also have
\[
 \norm{K}{\mu} = \sup\{\abs{\lambda}\colon \lambda \in \spec(K\,|\, L_2^0(\mu))\}.
\]
We define the \emph{spectral gap of $K$ (w.r.t. $\mu$) by} $\gap(K) = 1- \norm{K}{\mu}$. 
This is an important quantity which can be used to formulate 
conditions ensuring an exponentially fast convergence of the distribution of $X_n$ to $\mu$. 
To be more precise, we introduce the total variation distance of two
probability measures $\nu_1,\nu_2$ on $(\mc H,\mathcal{B}(\mc H))$
by
\begin{equation}  \label{eq:tv_definition}
  \norm{\nu_1 - \nu_2}{\text{tv}} 
 := \sup_{A\in\mathcal{B}(\mc H)} \abs{\nu_1(A)-\nu_2(A)}.
\end{equation}
Let $\nu$ be the initial distribution of our Markov chain, i.e., $X_1\sim \nu$.
Then,
with 
\[
	K^{n}(u,A) = \int_{\mc H} K^{n-1}(v, A)\,K(u, \d v),\quad A\in \mc B(\mc H),
\]
for $n\in \mathbb{N}$, the distribution of $X_{n+1}$ is given by
\[
 \nu K^n(A) = \int_{\mc H} K^n(u,A)\, \nu(\d u).
\]

In the setting above it is well known, see \cite[Proposition~2.2]{RoRo97}, 
that $\norm{K}{\mu}<1$, or equivalently $\gap(K)>0$, 
holds, iff the transition kernel is $L_2(\mu)$-geometrically ergodic. 
Here by $L_2(\mu)$-geometric ergodicity we mean that,
there exists a number $r\in [0,1)$ such that
for any probability measure $\nu$, which has a density 
$\frac{\d\nu}{\d\mu}\in L_2(\mu)$ w.r.t $\mu$,
there is a constant $C_\nu<\infty$ such that
\[
 \norm{\nu K^n - \mu}{\text{tv}} \leq C_\nu\, r^n, \qquad n\in\mathbb{N}.
\]
If the distribution of $X_{n}$ converges to $\mu$, then the Markov chain $(X_n)_{n\in\bbN}$ 
can be used for approximate sampling from 
$\mu$.
This leads to Markov chain Monte Carlo methods for the computation of expectations. 
The mean $\mathbb{E}_\mu(f)$ of a function $f\colon \mc H \to \bbR$ w.r.t $\mu$ 
can then be approximated by the time average
\[
	S_{n,n_0}(f) = \frac{1}{n} \sum_{j=1}^n f(X_{j+n_0})
\]
where $n$ is the sample size and $n_0$ a burn-in parameter to decrease the influence of the initial distribution.
The spectral gap of $K$ of the Markov chain $(X_n)_{n\in \bbN}$ can then be applied to assess the error of the time average $S_{n,n_0}(f)$.
We assume $\gap(K)>0$ and mention two results.
The first is rather classical and due to Kipnis and Varadhan \cite{Kipnis86}. 
If the initial distribution is $\mu$ and $f\in L_2(\mu)$, then the error 
$\sqrt{n}(S_{n,n_0}(f)-\mathbb{E}_{\mu}(f))$ converges weakly to $N(0,\sigma_{f,K}^2)$ with
\[
	\sigma_{f,K}^2 
	= \langle (I+K)(I-K)^{-1}(f-\mathbb{E}_\mu(f)),(f-\mathbb{E}_\mu(f)) \rangle_{\mu}
	\leq \frac{2 \norm{f}{2,\mu}^2}{\gap(K)}
\]
where $\langle \cdot,\cdot \rangle_{\mu}$ denotes the inner-product in $L_2(\mu)$.
The second result is more recent and provides a non-asymptotic bound for 
the mean square error. We have
\[
	\sup_{\norm{f}{4}\leq 1} \mathbb{E}\abs{S_{n,n_0}(f)-\mathbb{E}_\mu(f)}^2
	\leq \frac{2}{n \cdot \gap(K)} + \frac{C_\nu \norm{K}{\mu}^{n_0}}{n^2 \cdot \gap(K)^2}
\]
with $\norm{f}{4} = \left( \int_{\mc H} \abs{f(u)}^4\, \mu(\d u)\right)^{1/4}$ and a number $C_\nu\geq 0$ depending on the initial distribution $\nu$. 
We refer to \cite{Ru12} for 
details.\par
This shows that $\gap(K)$ is a crucial quantity in the study of Markov chains and the numerical analysis of MCMC methods.

\subsection{Metropolis algorithm with pCN proposal} \label{sec: pCN}
In this work we focus on Markov chains generated by 
the Metropolis algorithm.
This algorithm employs a transition kernel on $(\mc H,\mc B(\mc H))$ for proposing new states which we shall denote by $P$ and call \emph{proposal kernel}.
Moreover, let $\alpha\colon \mc H \times \mc H \to [0,1]$ be a measurable function denoting the \emph{acceptance probability}.
Then, a transition of a Markov chain $(X_n)_{n\in \bbN}$ generated by the Metropolis algorithm can be represented in algorithmic form:
\begin{enumerate}
 \item 
 Given the current state $X_n = u$, draw independently a sample $v$ of a random variable $V\sim P(u,\cdot)$ and a sample $a$ of a random variable $A\sim \text{Unif}[0,1]$.
 \item 
 If $a < \alpha(u,v)$, then set $X_{n+1} = v$, otherwise set $X_{n+1} = u$.
\end{enumerate}
The transition kernel of such a Markov chain is then
\begin{equation} \label{eq: metro_kern}
 	M(u, \d v) =  \alpha(u,v) P(u,\d v) 
	+\delta_u(\d v) \,\int_{\mc H} (1 - \alpha(u, w)) \, P(u,\d w)
\end{equation}
and we call it \emph{Metropolis kernel}. 
It is well known, see \cite{Ti98}, that $M$ is reversible w.r.t. 
$\mu$ if $\alpha(\cdot,\cdot)$ is chosen as
\begin{align} \label{al: acc_prob}
	\alpha(u,v) = \min\left\{ 1 , \frac{\d \eta^\bot}{\d \eta}(u,v)\right\}, \qquad u,v\in \mc H,
\end{align}
where $\frac{\d \eta^\bot}{\d \eta}$ denotes the Radon-Nikodym derivative of the measures
\begin{align*}
   \eta(\d u,\d v) & := P(u, \d v) \, \mu(\d u) \qquad \mbox{and} \qquad \eta^\bot (\d u,\d v) := P(v, \d u) \, \mu(\d v),
\end{align*}
which we assume to exist.
For finite dimensional state spaces
the condition of absolute continuity  
of $\eta^\bot$ w.r.t. 
$\eta$ is often easily satisfied.
However, for infinite dimensional state spaces
this becomes a real issue, since there measures tend to be mutually singular.
As pointed out in \cite{BeskosEtAl2008},\cite{CotterEtAl2013} a possible way to ensure the existence of $\frac{\d \eta^\bot}{\d \eta}$ is to choose a proposal kernel $P$ which is $\mu_0$-reversible, i.e.,
\begin{equation}  \label{eq: det_balance}
 	P(u, \d v) \, \mu_0(\d u) = P(v, \d u) \, \mu_0(\d v).
\end{equation}
Then, due to the fact that $\frac{\d \mu}{\d \mu_0}$ and $\frac{\d \mu_0}{\d \mu}$ exist,
see \eqref{equ:mu}, it follows that
\begin{align}
	\frac{\d \eta^\bot}{\d \eta}(u,v) = 
	\frac{\d \mu}{\d\mu_0}(v) \frac{\d \mu_0}{\d\mu}(u)
	= \exp( \Phi(u) - \Phi(v))
\end{align}
and, hence, $\alpha(u,v) = \min\left\{1, \exp( \Phi(u) - \Phi(v))\right\}$.

We next introduce  
the Metropolis algorithm with the \emph{preconditioned Crank-Nicolson} (pCN) proposal, 
see also \cite{CotterEtAl2013} for details.
The pCN proposal kernel arises from a discretization of an Ornstein-Uhlenbeck process with invariant measure $\mu_0$
and takes the form
\begin{align}\label{al: pCN}
	P_{0}(u,\cdot) = N(\sqrt{1-s^2}u,s^2 C)
\end{align} 
where $s\in[0,1]$ denotes a variance or stepsize parameter.
It is straightforward to verify that $P_{0}$ is $\mu_0$-reversible.
Namely, by applying \eqref{equ:Gaussian_affine} from 
Appendix \ref{sec:Gaussian} we deduce
\[
	P_{0}(u, \d v) \, \mu_0(\d u) 
	= N\left( \begin{bmatrix}  0\\ 0 \end{bmatrix}, \begin{bmatrix}  C & \sqrt{1-s^2}C \\ \sqrt{1-s^2}C & C \end{bmatrix} \right)
	= P_{0}(v, \d u) \, \mu_0(\d v).
\]
In the following we call the resulting Metropolis algorithm with proposal $P_{0}$
simply pCN Metropolis algorithm or 
pCN Metropolis and denote its Metropolis kernel by $M_{0}$.

Next, we generalize the pCN Metropolis algorithm to allow for proposal kernels which employ a different covariance structure than the covariance of $\mu_0$.

\section{Metropolis with gpCN proposals}  \label{sec:gpCN}
In recent years many authors have proposed and pursued the idea to construct proposals 
which try to exploit certain geometrical features of the target measure, see for example \cite{GirolamiCalderhead2011},\cite{MartinEtAl2012},\cite{Law2014},\cite{CuiEtAl2014}.
\par
We consider generalized pCN (gpCN) proposals 
which aim to adapt to the covariance structure of the target measure $\mu$.
We motivate our gpCN proposal, show that it is well-defined in function spaces 
and illustrate its superior  
performance in a simple but common setting.

\subsection{Motivation from Bayesian inference} \label{subsec: Motiv}
We briefly recall the Bayesian framework for inverse problems and refer to \cite{ErnstEtAl2015} for an overview 
and to \cite{Stuart2010} for a comprehensive introduction to the topic.

Assume $X$ is  a random variable on $(\mc H,\mathcal{B}(\mc H))$ 
with distribution $\mu_0 = N(0,C)$.
Here $\mu_0$ is called the \emph{prior} distribution and describes our initial uncertainty 
about $X$.
Let $Y$ be a random variable on $\mathbb{R}^m$ given by
\begin{equation} \label{eq: bay_model}
  Y = G(X) + \varepsilon
\end{equation}
with a continuous map $G\colon \mc H\to \mathbb{R}^m$ 
and $\varepsilon \sim N(0,\Sigma)$, independent of $X$, with $\Sigma \in \bbR^{m\times m}$.
The variable $Y$ models an observable quantity depending on  
$X$ via the map $G$ which is perturbed by additive noise $\varepsilon$.
Then, given some observation $y\in \mathbb{R}^m$ of $Y$ 
we want to infer $X$, i.e., we are interested in the conditional distribution of $X$ given the event $Y=y$.
We denote this conditional distribution by $\mu$ and call it \emph{posterior} distribution.
In particular, in this setting $\mu$
admits a representation of the form \eqref{equ:mu} with
\begin{equation}\label{equ:Phi}
	\Phi(u) = \frac 12 |y- G(u)|^2_{\Sigma^{-1}}
\end{equation}
where $\abs{x}_{\Sigma^{-1}}^2 = x^T \Sigma^{-1} x $ for $x\in \bbR^m$.

A special situation appears if $G(u)=L u + b$ with a linear mapping 
$L\colon \mc H \to \mathbb{R}^m$ and $b\in \mathbb{R}^m$. 
Then, it is known from \cite{Mandelbaum1984} that $\mu= N(m,\widehat C)$ with
\begin{equation}  \label{eq: target_normal}
  m = C L^*(LCL^*+\Sigma)^{-1} (y-b), 
  \qquad \widehat C=(C^{-1}+L^*\Sigma^{-1}L)^{-1},
\end{equation}
where $L^*$ denotes the adjoint operator of $L$.
If we want to sample approximately from a Gaussian 
target measure $\mu = N(m,\widehat C)$ by Metropolis 
algorithms with Gaussian proposals, it seems beneficial to 
employ $s^2 \widehat C$ as proposal covariance, 
see for example \cite{Ti94},\cite{RobertsRosenthal2001},\cite{Law2014}.
Intuitively, since then the Gaussian proposal possesses the same principal directions and the same ratio of variances as the Gaussian target measure, the proposed states should be accepted more often than for other proposals. See also Figure~\ref{Fig: gauss_prop} for an illustration. 
This leads to a higher average acceptance probability 
and, thus, a faster exploration of the state space.

\begin{figure}[htb]
\centering
\def\target{(0,0) ellipse (3 and 8)} 
\def\targetZ{(0,0) ellipse (4.5 and 8*4.5/3)} 
\def\targetZZ{(0,0) ellipse (6 and 8*6/3)} 
\def\propball{(0,8) ellipse (2 and 2)} 
\def\propballZ{(0,8) ellipse (3 and 3)} 
\def\propballZZ{(0,8) ellipse (4 and 4)} 
\def\proptarg{(0,8) ellipse (3/2.45 and 8/2.45)} 
\def\proptargZ{(0,8) ellipse (3/1.63 and 8/1.63)} 
\def\proptargZZ{(0,8) ellipse (3/1.23 and 8/1.23)} 

\tikzstyle{P_1} = [draw,black,ultra thick]
\tikzstyle{P_21} = [draw,blue!100,ultra thick]
\tikzstyle{P_22} = [draw,blue!70,ultra thick]
\tikzstyle{P_23} = [draw,blue!40,ultra thick]
\tikzstyle{P_31} = [draw,red!100,ultra thick]
\tikzstyle{P_32} = [draw,red!70,ultra thick]
\tikzstyle{P_33} = [draw,red!40,ultra thick]

\tikzset{
  pat1/.style={pattern=horizontal lines,pattern color=#1},
  pat1/.default=red
}

\tikzset{
  pat2/.style={pattern=vertical lines,pattern color=#1},
  pat2/.default=black
}
 \begin{minipage}{0.49\textwidth} 
 \centering
 (a)\\
\begin{tikzpicture}[scale=0.2,>=latex']
    \begin{scope}[rotate=-110]
	\fill[fill=black!10] \targetZZ;
	\fill[fill=black!25] \targetZ;
	\fill[fill=black!50] \target;

	\fill(0,8)circle(12pt);
	\node at (0.7,8.8) {$u$};
      
	\path[P_1] \target;
	\path[P_1] \targetZ;
	\path[P_1] \targetZZ;
        \path[P_21] \propball;
        \path[P_22] \propballZ;
        \path[P_23] \propballZZ;
    \end{scope}
 \end{tikzpicture}
\end{minipage}
\hfill
 \begin{minipage}{0.49\textwidth} 
 \centering
 (b)\\
\begin{tikzpicture}[scale=0.2,>=latex']
    \begin{scope}[rotate=-110]
	\fill[fill=black!10] \targetZZ;
	\fill[fill=black!25] \targetZ;
	\fill[fill=black!50] \target;

	\fill(0,8)circle(12pt);
	\node at (0.6,8.9) {$u$};
	\path[P_1] \target;
	\path[P_1] \targetZ;
	\path[P_1] \targetZZ;
        \path[P_31] \proptarg;
        \path[P_32] \proptargZ;
        \path[P_33] \proptargZZ;

    \end{scope}
 \end{tikzpicture}
\end{minipage}
\caption{\label{Fig: gauss_prop}
For a Gaussian target measure 
$\mu=N(m,\widehat{C})$ 
and current state $u$ the 
region of acceptance $\{v: \alpha(u,v) = 1\}$ 
(dark grey region) as well as two regions of possible 
rejection $\{v: \underline{p} \leq \alpha(u,v) < \overline{p} \leq 1\}$ (lighter grey regions) are displayed.
Moreover, we present the contour lines (blue and red, resp.) 
of Gaussian proposals $N(u,s^2C)$ with covariance $C=I$ 
in part (a) and target covariance $C=\widehat C$ in part (b).
}
\label{fig:motiv_gpCN}
\end{figure}

The affine case indicates how we can construct good Gaussian proposal kernels if the map $G$ is nonlinear but smooth.
For a fixed $u_0 \in \mc H$ local linerization leads to
\[
	G(u) = G(u_0) + \nabla G(u_0)\,(u-u_0) + r(u)
\]
with a remainder 
term $r(u) \in \mathbb{R}^m$. 
For a sufficiently smooth $G$ the remainder 
$r$ is small (in a neighborhood of $u_0$), so that
\[
 \widetilde G(u) = G(u_0) + \nabla G(u_0)\,(u-u_0)
\]
is close to $G(u)$ (in a neighborhood of $u_0$).
The substitution of $G$ by $\widetilde G$ 
in the model \eqref{eq: bay_model} leads to a 
Gaussian target measure $\widetilde \mu = N(\widetilde m, \widetilde C)$ 
with covariance 
\[
 \widetilde C = (C^{-1} + L^* \Sigma^{-1} L)^{-1}, 
 \qquad
 L = \nabla G(u_0).
\]
By the fact that $G$ and $\widetilde G$ are close,  
we also have that the measures $\mu$ and $\widetilde \mu$ are close as well. 
Then, it is reasonable to use $\widetilde C$ in the covariance operator of the proposal in a Metropolis algorithm.
Of course, there might be other choices besides 
a simple linearization of $G$ at one point.
For example, averaging linearizations at several points $u_1,\dots,u_n \in \mc H$ leads to
\[
	\widetilde C = \Big(C^{-1} + \frac 1N \sum_{n=1}^N L_n^* \, 
	\Sigma^{-1}L_n\Big)^{-1}, \qquad L_n = \nabla G(u_n).
\]
Natural candidates for the points $u_1,\dots,u_N$ are 
samples according to the prior or 
samples taken from a short run of a preliminary 
Markov chain with the posterior as stationary measure,
cf. the adaptive method in \cite[Section 3.4]{CuiEtAl2014}. 
One could also think of a state-dependent covariance $C(u)$. 
This motivates the study of 
proposals which use covariances of the form
$
 C_{\Gamma} = (C^{-1} + \Gamma)^{-1}
$
for suitably chosen operators $\Gamma$.

\subsection{Well-defined gpCN proposals} \label{sec:well_gpCN}
In this section we introduce the gpCN proposal kernel 
and prove that the Metropolis algorithm with this proposal is well-defined 
in the sense that it leads to a $\mu$-reversible transition kernel.

For this we introduce the set $\mc L_+(\mc H)$ of all bounded, 
self-adjoint and positive linear operators $\Gamma: \mc H \to \mc H$.
We define the operators 
\begin{equation} \label{equ:tildeC}
	C_\Gamma := (C^{-1} + \Gamma )^{-1}, \qquad \Gamma \in \mc L_+(\mc H),
\end{equation}
motivated in Section~\ref{subsec: Motiv}, where $C$ denotes the covariance operator of the prior measure $\mu_0 = N(0,C)$, for which we also use the equivalent representation 
\begin{equation} \label{equ:tildeC_2}
	C_\Gamma = C^{1/2} \; ( I + H_\Gamma)^{-1} \; C^{1/2}, 
 	\qquad H_\Gamma:=C^{1/2} \Gamma C^{1/2}.
\end{equation}
In the following we prove that $C_\Gamma$ can
be considered as covariance operator.

\begin{propo} \label{propo:tildeC}
Let $C$ be a nonsingular covariance operator 
on $\mc H$, $\Gamma\in \mc L_+(\mc H)$ 
and $C_\Gamma$ with $H_\Gamma$ given as in \eqref{equ:tildeC_2}.
Then $H_\Gamma \in \mc L_+(\mc H)$ is trace class and $C_\Gamma$ is also a nonsingular covariance operator on $\mc H$.
\end{propo}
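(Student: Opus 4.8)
The plan is to verify the three assertions about $C_\Gamma$ and $H_\Gamma$ in turn, using the factored representation \eqref{equ:tildeC_2} to reduce everything to statements about $H_\Gamma = C^{1/2}\Gamma C^{1/2}$ and the bounded operator $(I+H_\Gamma)^{-1}$. First I would observe that $C^{1/2}$ is itself a bounded, self-adjoint, positive operator (as the square root of such an operator), hence $H_\Gamma = C^{1/2}\Gamma C^{1/2}$ is self-adjoint, and for any $x\in\mc H$ we have $\langle H_\Gamma x, x\rangle = \langle \Gamma\, C^{1/2}x, C^{1/2}x\rangle \geq 0$ since $\Gamma\in\mc L_+(\mc H)$; this gives $H_\Gamma\in\mc L_+(\mc H)$. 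The boundedness is immediate from $\|H_\Gamma\|\leq \|C^{1/2}\|^2\|\Gamma\| = \|C\|\,\|\Gamma\|$.

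For the trace-class property I would use that $C$ is trace class, so $C^{1/2}$ is Hilbert--Schmidt, and that the ideal of Hilbert--Schmidt operators absorbs bounded operators on either side: $C^{1/2}\Gamma$ is Hilbert--Schmidt (bounded times Hilbert--Schmidt), and then $H_\Gamma = (C^{1/2}\Gamma)\,C^{1/2}$ is a product of two Hilbert--Schmidt operators, hence trace class. One can also get a clean bound $\trace(H_\Gamma) \leq \|\Gamma\|\,\trace(C)$ by writing $\trace(H_\Gamma) = \sum_n \langle \Gamma C^{1/2}e_n, C^{1/2}e_n\rangle \leq \|\Gamma\|\sum_n \|C^{1/2}e_n\|^2 = \|\Gamma\|\,\trace(C)$ for an orthonormal basis $(e_n)$, which incidentally re-proves trace-class-ness directly. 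Since $H_\Gamma$ is positive, $I + H_\Gamma$ has spectrum in $[1,\infty)$, so it is boundedly invertible with $\|(I+H_\Gamma)^{-1}\|\leq 1$, and $(I+H_\Gamma)^{-1}$ is self-adjoint and positive; moreover it is injective. This makes \eqref{equ:tildeC} and \eqref{equ:tildeC_2} well-defined and shows they agree, since $C^{-1}+\Gamma = C^{-1/2}(I+H_\Gamma)C^{-1/2}$ on the domain of $C^{-1}$.

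Now for $C_\Gamma = C^{1/2}(I+H_\Gamma)^{-1}C^{1/2}$: self-adjointness and positivity follow exactly as for $H_\Gamma$, since $(I+H_\Gamma)^{-1}\in\mc L_+(\mc H)$. Trace class follows again from $\trace(C_\Gamma)\leq \|(I+H_\Gamma)^{-1}\|\,\trace(C)\leq\trace(C)$, or by the Hilbert--Schmidt factorization argument. For nonsingularity, i.e. $\ker C_\Gamma = \{0\}$: if $C_\Gamma x = 0$ then $\langle (I+H_\Gamma)^{-1}C^{1/2}x, C^{1/2}x\rangle = 0$, and since $(I+H_\Gamma)^{-1}$ is positive and injective this forces $C^{1/2}x = 0$, hence $x = 0$ because $\ker C^{1/2} = \ker C = \{0\}$. (Alternatively, $C_\Gamma^{1/2}$ being the relevant square root, injectivity of $C_\Gamma$ is equivalent to density of its range, which one can trace back to density of $\Image(C^{1/2})$.)

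The main obstacle, such as it is, is purely bookkeeping around unbounded operators: $C^{-1}$ and hence $C^{-1}+\Gamma$ are only densely defined, so I would phrase the identity $C_\Gamma := (C^{-1}+\Gamma)^{-1} = C^{1/2}(I+H_\Gamma)^{-1}C^{1/2}$ carefully — the right-hand side is the bounded object we actually work with, and one checks it inverts $C^{-1}+\Gamma$ on the domain of the latter. Everything else is a routine application of standard facts about trace-class and Hilbert--Schmidt operators and the functional calculus for positive self-adjoint operators; I would cite the Gaussian-measure appendix for the statement that a bounded, self-adjoint, positive, trace-class operator with trivial kernel is a legitimate covariance operator.
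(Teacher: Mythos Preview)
Your proof is correct and follows essentially the same route as the paper's: both establish $H_\Gamma\in\mc L_+(\mc H)$ and trace class via the Hilbert--Schmidt factorization $C^{1/2}\Gamma C^{1/2}$, then show $(I+H_\Gamma)^{-1}$ is a bounded positive operator, and conclude the properties of $C_\Gamma$ from the factored form. The only noteworthy variation is that you obtain the invertibility of $I+H_\Gamma$ from the spectral inclusion $\spec(I+H_\Gamma)\subset[1,\infty)$, whereas the paper invokes the Fredholm alternative (compactness of $H_\Gamma$ plus $\langle(I+H_\Gamma)u,u\rangle\geq\|u\|^2$); your spectral argument is slightly more direct and also immediately delivers the bound $\|(I+H_\Gamma)^{-1}\|\leq 1$. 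Your explicit kernel computation for the nonsingularity of $C_\Gamma$ and the remark on the domain issues for $C^{-1}+\Gamma$ are welcome clarifications that the paper handles more tersely.
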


\begin{proof}
That $H_\Gamma \in \mc L_+(\mc H)$ follows by construction.
Furthermore, since $H_\Gamma$ is a composition of two Hilbert-Schmidt 
and one bounded operator, $C^{1/2}$ and $\Gamma$, respectively, 
it is trace class \cite[Proposition~1.1.2]{DaPrato2004}.
Since $H_\Gamma$ is 
selfadjoint and compact, we have from Fredholm operator theory that the 
operator $I+H_\Gamma$ is invertible iff ${\rm ker}\,H_\Gamma = \{0\}$.
The latter is the case since $H_\Gamma$ is positive which implies $\langle (I+H_\Gamma)u, u\rangle \geq \langle u,u \rangle$.
Hence, the inverse $(I+H_\Gamma)^{-1}$ exists and, moreover, $(I+H_\Gamma)^{-1} \in \mc L_+(\mc H)$ with $\|(I+H_\Gamma)^{-1}\|\leq 1$.
The self-adjointness and positivity of $C_\Gamma$ follows immediately and since $C_\Gamma$ is a composition of two nonsingular Hilbert-Schmidt operators and a nonsingular bounded operator, $C^{1/2}$ and $( I + H)^{-1}$, respectively, it is trace class and nonsingular as well.
\end{proof}

By Proposition \ref{propo:tildeC} we can use the covariance operator $C_\Gamma$ for constructing proposal kernels. Specifically, we consider
\begin{equation} \label{equ:gpCN_ansatz}
	P(u, \cdot) = N( Au, s^2 C_\Gamma), \qquad s \in [0,1), \; \Gamma \in \mc L_+(\mc H),
\end{equation}
where $A\colon \mc H \to \mc H$ denotes a suitably chosen bounded linear operator on $\mc H$. 
Here $A$ should be chosen such that $P$ is $\mu_0$-reversible, 
which means that a Metropolis kernel with proposal $P$ is $\mu$-reversible, 
see Section~\ref{sec: pCN}.
By applying \eqref{equ:Gaussian_affine} we obtain in this setting 
\[
	P(u, \d v)\, \mu_0(\d u) 
	= N\left( \begin{bmatrix}  0\\ 0 \end{bmatrix}, \begin{bmatrix}  C 
	& CA^* \\ AC & ACA^* + s^2C_\Gamma\end{bmatrix} \right)
\]
and
\[
       P(v, \d u)\, \mu_0(\d v)  
	= N\left( \begin{bmatrix}  0\\ 0 \end{bmatrix}, 
	\begin{bmatrix}  ACA^* + s^2 C_\Gamma & AC \\ CA^* & C\end{bmatrix} \right).
\]
Thus, for satisfying \eqref{eq: det_balance} we need to choose $A$ 
so that
\begin{equation} \label{equ:A_cond}
	AC = CA^*, \qquad ACA^* +s^2 C_\Gamma = C.
\end{equation}
By straightforward calculation we obtain as the formal solution to \eqref{equ:A_cond}
\begin{equation} \label{equ:A}
	A = A_{\Gamma} = C^{1/2} \; \sqrt{I - s^2 \left( I + H_\Gamma \right)^{-1} }  C^{-1/2}.
\end{equation}
The following lemma ensures that this choice of 
$A$ yields a well-defined bounded linear operator on $\mc H$.

\begin{lem} \label{lem:A}
Let the assumptions of Proposition \ref{propo:tildeC} be satisfied and let $s\in[0,1)$.
Then \eqref{equ:A} defines a bounded linear operator $A_{\Gamma}:\mathrm{Im}\, C^{1/2} \to \mc H$.
\end{lem}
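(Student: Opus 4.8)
The plan is to identify the formal expression~\eqref{equ:A}, restricted to $\mathrm{Im}\,C^{1/2}$, with the holomorphic functional calculus $g(C\Gamma)$ of the \emph{bounded} operator $C\Gamma$, for a scalar function $g$ that is holomorphic near $\spec(C\Gamma)$. First I would set $B:=\sqrt{I-s^2(I+H_\Gamma)^{-1}}$, defined via the continuous functional calculus of the self-adjoint operator $H_\Gamma$: by Proposition~\ref{propo:tildeC} we have $(I+H_\Gamma)^{-1}\in\mc L_+(\mc H)$ with norm $\le 1$, so $\spec\bigl(I-s^2(I+H_\Gamma)^{-1}\bigr)\subseteq[1-s^2,1]$ and $B$ is a well-defined, bounded, self-adjoint, positive operator; moreover $B=g(H_\Gamma)$ for $g(t):=\bigl((1+t-s^2)/(1+t)\bigr)^{1/2}$, since $g(t)^2=1-s^2/(1+t)$. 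Thus $A_\Gamma=C^{1/2}\,g(H_\Gamma)\,C^{-1/2}$ on $\mathrm{Im}\,C^{1/2}$, and the task reduces to showing that this composition is the restriction of a bounded operator on $\mc H$.

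The key step is the resolvent intertwining identity
\[
	C^{1/2}(\zeta I-H_\Gamma)^{-1}C^{-1/2}=(\zeta I-C\Gamma)^{-1}\quad\text{on }\mathrm{Im}\,C^{1/2},\qquad \zeta\notin\spec(H_\Gamma).
\]
To establish it I would first record that $C\Gamma$ is compact with $\spec(C\Gamma)\setminus\{0\}\subseteq\spec(H_\Gamma)\setminus\{0\}\subseteq(0,\|H_\Gamma\|]$: indeed any eigenvalue $\lambda\ne 0$ of $C\Gamma$, say $C\Gamma v=\lambda v$, forces $v=\lambda^{-1}C\Gamma v\in\mathrm{Im}\,C\subseteq\mathrm{Im}\,C^{1/2}$, and then $C^{-1/2}v$ is an eigenvector of $H_\Gamma=C^{1/2}\Gamma C^{1/2}$ for $\lambda$, so $\lambda>0$ by positivity of $H_\Gamma$. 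In particular $\spec(C\Gamma)\subseteq[0,\|H_\Gamma\|]$ and $\zeta\notin\spec(H_\Gamma)$ implies $\zeta I-C\Gamma$ is boundedly invertible. Now, for $x\in\mathrm{Im}\,C^{1/2}$ put $y:=C^{-1/2}x$ and $z:=(\zeta I-H_\Gamma)^{-1}y$; applying the bounded operator $C^{1/2}$ to $\zeta z-C^{1/2}\Gamma C^{1/2}z=y$ gives $(\zeta I-C\Gamma)(C^{1/2}z)=C^{1/2}y=x$, hence $C^{1/2}z=(\zeta I-C\Gamma)^{-1}x$, which is the asserted identity.

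Finally I would invoke the holomorphic functional calculus. The function $g$ has a holomorphic branch on a neighbourhood of $[0,\infty)$ (the ratio $(1+t-s^2)/(1+t)$ is meromorphic with pole at $t=-1$ and takes values in $(-\infty,0]$ only for $t\in(-1,s^2-1]$), hence on a neighbourhood of the compact set $\spec(H_\Gamma)\supseteq\spec(C\Gamma)$. Choosing a contour $\gamma$ enclosing $\spec(H_\Gamma)\subseteq[0,\|H_\Gamma\|]$ and contained in $\{\,\mathrm{Re}\,\zeta>-\delta\,\}$ for a small $\delta\in(0,1-s^2)$, the calculus yields $B=g(H_\Gamma)=\frac{1}{2\pi\i}\oint_\gamma g(\zeta)(\zeta I-H_\Gamma)^{-1}\,\d\zeta$ (consistent with the spectral definition of $B$) and $g(C\Gamma)=\frac{1}{2\pi\i}\oint_\gamma g(\zeta)(\zeta I-C\Gamma)^{-1}\,\d\zeta\in\mc L(\mc H)$. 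Pulling the bounded operator $C^{1/2}$ into the norm-convergent integral, applying it to $C^{-1/2}x$, and using the intertwining identity, I obtain $A_\Gamma x=C^{1/2}B\,C^{-1/2}x=g(C\Gamma)x$ for every $x\in\mathrm{Im}\,C^{1/2}$. Thus $A_\Gamma$ agrees on $\mathrm{Im}\,C^{1/2}$ with the bounded operator $g(C\Gamma)$, and since $\mathrm{Im}\,C^{1/2}$ is dense (as $\ker C=\{0\}$) it extends uniquely to $g(C\Gamma)\in\mc L(\mc H)$.

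I expect the main obstacle to be precisely the non-self-adjointness of $C\Gamma$: it is not even normal, so one cannot transport the spectral calculus of $H_\Gamma$ through the unbounded similarity $C^{1/2}(\cdot)C^{-1/2}$ — operator norms are not preserved, and a naive square root of $I-s^2(I+C\Gamma)^{-1}$ carries no obvious norm bound (a power-series attempt fails because $\|(I+C\Gamma)^{-1}\|$ need not be $\le1$). The resolvent identity combined with the Cauchy integral is exactly what circumvents this, since $\|(\zeta I-C\Gamma)^{-1}\|$ is finite and norm-continuous along the compact contour $\gamma$, so the integral is a bounded operator regardless of normality. Two minor technical points requiring care are the coincidence of the holomorphic and continuous functional calculi for the self-adjoint operator $H_\Gamma$, and the choice of a branch of $g$ holomorphic on a neighbourhood of $[0,\infty)$.
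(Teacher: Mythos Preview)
Your proof is correct and follows essentially the same strategy as the paper: both identify $A_\Gamma$ on $\mathrm{Im}\,C^{1/2}$ with $f(C\Gamma)\in\mc L(\mc H)$, where $f(z)=\sqrt{1-s^2(1+z)^{-1}}$, via the holomorphic functional calculus together with the spectral inclusion $\spec(C\Gamma)\subseteq[0,\|H_\Gamma\|]$. The only difference is in how the intertwining $C^{1/2}f(H_\Gamma)C^{-1/2}=f(C\Gamma)$ is justified: the paper approximates $f$ uniformly by polynomials $p_n$, uses the algebraic identity $C^{1/2}p_n(H_\Gamma)=p_n(C\Gamma)C^{1/2}$, and passes to the limit (citing \cite{Hladnik88} for $\spec(C\Gamma)=\spec(H_\Gamma)$), whereas you establish the resolvent identity $C^{1/2}(\zeta I-H_\Gamma)^{-1}C^{-1/2}=(\zeta I-C\Gamma)^{-1}$ directly on $\mathrm{Im}\,C^{1/2}$ and integrate along the Cauchy contour --- a slightly more streamlined route to the same conclusion.
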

The well-definedness of $A_{\Gamma}:\mathrm{Im}\, C^{1/2} \to \mc H$ follows rather easily whereas its boundedness is not trivial. Namely, one easily can construct a bounded $B \in \mc L(\mc H)$ such that $C^{1/2}B C^{-1/2}$ is unbounded on $\mathrm{Im}\, C^{1/2}$.
Since the proof of Lemma \ref{lem:A} is rather technical, it is postponed to Appendix \ref{sec:proof_lem:A}.

Lemma \ref{lem:A} allows us now to extend $A_\Gamma$ to $\mc H$ by continuation, because the Cameron-Martin space $\mathrm{Im}\, C^{1/2}$ is a dense subspace of $\mc H$. 
For simplicity we denote this continuous extension again by $A_\Gamma:\mc H\to\mc H$.

\begin{defi}[gpCN proposal] \label{def:gpCN_proposal}
 For $s \in [0,1)$ and $\Gamma \in \mc L_+(\mc H)$ 
 the \emph{generalized pCN proposal kernel} is given by
\begin{equation} \label{equ:gpCN}
	P_{\Gamma}(u, \cdot) := N(A_{\Gamma} u, s^2 C_\Gamma).
\end{equation}
\end{defi}
For the zero operator $\Gamma = 0$ we recover  
the pCN proposal.
By Lemma~\ref{lem:A} and the arguments given 
in Section~\ref{sec: pCN}
we obtain the following important result.
\begin{cor} \label{cor:gpCN}
Let $\mu_0 = N(0,C)$ 
and $\mu$ be given by \eqref{equ:mu}.
Let the assumptions of Lemma \ref{lem:A} be satisfied.
Then, a gpCN proposal kernel $P_{\Gamma}$ given by \eqref{equ:gpCN} 
and an acceptance probability 
$\alpha(u,v) = \min\left\{1, \exp( \Phi(u) - \Phi(v))\right\}$ 
induce a $\mu$-reversible Metropolis kernel denoted by $M_{\Gamma}$.
\end{cor}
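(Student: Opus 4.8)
The plan is to verify the two structural facts that make the abstract Metropolis construction of Section~\ref{sec: pCN} applicable to the pair $(\mu_0,P_\Gamma)$, namely that $P_\Gamma$ is a genuine transition kernel on $(\mc H,\mc B(\mc H))$ and that it is $\mu_0$-reversible; once these are in hand, Corollary~\ref{cor:gpCN} follows immediately from the general argument already recalled around \eqref{eq: det_balance}, which shows $\frac{\d\eta^\bot}{\d\eta}(u,v)=\exp(\Phi(u)-\Phi(v))$ and hence that \eqref{eq: metro_kern} with $\alpha(u,v)=\min\{1,\exp(\Phi(u)-\Phi(v))\}$ defines a $\mu$-reversible kernel.

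First I would record that $P_\Gamma$ is a well-defined transition kernel. By Lemma~\ref{lem:A} the operator $A_\Gamma$ is a bounded linear operator on $\mc H$ (after the continuous extension noted just before Definition~\ref{def:gpCN_proposal}), and by Proposition~\ref{propo:tildeC} the operator $C_\Gamma$ is a nonsingular covariance operator, so for every $u\in\mc H$ the formula $P_\Gamma(u,\cdot)=N(A_\Gamma u,s^2C_\Gamma)$ does define a Borel probability measure on $\mc H$; measurability of $u\mapsto P_\Gamma(u,A)$ follows from continuity of the translation together with the explicit Gaussian form (e.g. via characteristic functions). This step is routine.

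Next I would establish $\mu_0$-reversibility, which is the only point requiring a short computation and is also the main (mild) obstacle. Using the affine-image formula \eqref{equ:Gaussian_affine} for Gaussian measures exactly as in the displayed computations preceding \eqref{equ:A_cond}, the joint measures $P_\Gamma(u,\d v)\,\mu_0(\d u)$ and $P_\Gamma(v,\d u)\,\mu_0(\d v)$ are centred Gaussians on $\mc H\times\mc H$ with covariance blocks built from $C$, $A_\Gamma C=CA_\Gamma^*$ and $A_\Gamma C A_\Gamma^*+s^2C_\Gamma$. Hence \eqref{eq: det_balance} holds precisely when $A_\Gamma$ satisfies the two identities in \eqref{equ:A_cond}. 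The substance of the argument is therefore to check that the explicit operator in \eqref{equ:A}, namely $A_\Gamma=C^{1/2}\sqrt{I-s^2(I+H_\Gamma)^{-1}}\,C^{-1/2}$, does solve \eqref{equ:A_cond}. Writing $B:=\sqrt{I-s^2(I+H_\Gamma)^{-1}}$ (a bounded, self-adjoint, positive operator since $0\le I-s^2(I+H_\Gamma)^{-1}\le I$ by Proposition~\ref{propo:tildeC}), one has on $\mathrm{Im}\,C^{1/2}$ that $A_\Gamma C=C^{1/2}BC^{1/2}=CA_\Gamma^*$ (using self-adjointness of $B$ and $C^{1/2}$), which is the first identity; and $A_\Gamma C A_\Gamma^*=C^{1/2}B^2C^{1/2}=C^{1/2}\big(I-s^2(I+H_\Gamma)^{-1}\big)C^{1/2}=C-s^2C_\Gamma$ by the representation \eqref{equ:tildeC_2}, which is the second. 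Since $\mathrm{Im}\,C^{1/2}$ is dense and all operators involved are bounded, both identities extend to $\mc H$, so \eqref{eq: det_balance} holds.

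Finally I would assemble the conclusion: with $P_\Gamma$ shown to be a $\mu_0$-reversible transition kernel, the computation in Section~\ref{sec: pCN} gives $\frac{\d\eta^\bot}{\d\eta}(u,v)=\frac{\d\mu}{\d\mu_0}(v)\frac{\d\mu_0}{\d\mu}(u)=\exp(\Phi(u)-\Phi(v))$, so the choice $\alpha(u,v)=\min\{1,\exp(\Phi(u)-\Phi(v))\}$ is exactly \eqref{al: acc_prob}, and therefore the Metropolis kernel $M_\Gamma$ defined by \eqref{eq: metro_kern} is $\mu$-reversible by \cite{Ti98}. The only place where care is genuinely needed is the density/absolute-continuity step hidden in Section~\ref{sec: pCN}'s argument, i.e. that $\eta^\bot\ll\eta$ with the claimed Radon--Nikodym derivative; this is guaranteed here by the $\mu_0$-reversibility of $P_\Gamma$ together with the mutual absolute continuity of $\mu$ and $\mu_0$ from \eqref{equ:mu}, so no extra work beyond the reversibility check is required.
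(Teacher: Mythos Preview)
Your proposal is correct and follows exactly the route the paper indicates: the paper states the corollary as an immediate consequence of Lemma~\ref{lem:A} together with the general $\mu_0$-reversibility argument of Section~\ref{sec: pCN}, and you have simply made explicit the (straightforward) verification that the operator $A_\Gamma$ of \eqref{equ:A} does satisfy the two identities in \eqref{equ:A_cond}. There is nothing to add or correct.
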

For simplicity we also call the Metropolis algorithm with transition kernel $M_{\Gamma}$ just gpCN Metropolis.
There are connections of the gpCN Metropolis to other 
recently developed Metropolis algorithms for general Hilbert spaces which also use more sophisticated choices for the proposal than the pCN proposal.
The following two remarks address these connections.

\begin{remark}
The gpCN proposals form a subclass of the 
\emph{operator weighted proposals} introduced in \cite{CuiEtAl2014},\cite{Law2014}.
The particular form of the gpCN proposal 
allows us to derive properties such as boundedness of the ``proposal mean operator'' 
$A_\Gamma$ and the convergence of 
the resulting Markov chain, see Section \ref{sec:Conv}.
These issues were left open in \cite{CuiEtAl2014},\cite{Law2014}.
\end{remark}

\begin{remark}
In \cite{PinskiEtAl2014} the authors compute a Gaussian measure $\mu_* = N(m_*, C_*)$ which comes closest 
to $\mu$ w.r.t. the Kullback-Leibler distance.
The admissible class of Gaussian measures 
considered there is closely related to our 
parametrized proposal covariances $C_\Gamma$, 
although their class of Gaussian measures is slightly larger.
The measure $\mu_*$ is then used to construct a proposal kernel 
$P_*(u,\cdot) = N(m_* + \sqrt{1-s^2}(u-m_*),s^2C_*)$ for Metropolis algorithms. 
Note that $P_*$ is not $\mu_0$-reversible but $\mu_*$-reversible, 
since it is a pCN proposal given 
the prior $\mu_*$.
In order to obtain a $\mu$-reversible Metropolis 
kernel the authors need to adapt the acceptance probability 
by including terms of $\frac{\d \mu_*}{\d \mu_0}$, cf. Section \ref{subsec: loc_gpCN}.
Thus, the authors of \cite{PinskiEtAl2014} also use a different covariance operator than the prior covariance in a pCN proposal in order to increase the efficiency of the Metropolis algorithm.
The difference to our approach is the way they ensure the $\mu$-reversibility of the algorithm.
They keep the mean of the original pCN proposal and modify the acceptance probability whereas we modify also the mean of the proposal to maintain its $\mu_0$-reversibility and, therefore, can leave the acceptance probability unchanged.
\end{remark}

\subsection{Numerical illustrations} \label{subsec: numerics}

We illustrate the gpCN Metropolis 
algorithm for approximating samples of a posterior distribution in Bayesian inference.
In particular, we compare different Metropolis algorithms 
and investigate which of these 
perform independently of the state space dimension and of the variance of the involved noise. 

We consider the same setting and inference problem as in \cite[Section 6.1]{PinskiEtAl2014}. 
Assume noisy observations $y_j = p(0.2j) + \varepsilon_j$ with $j=1,\ldots,4$, of the solution $p$ of
\begin{equation} \label{equ:PDE}
	\frac \d{\d x}\left( \e^{u(x)}\, \frac \d{\d x}p(x)\right) = 0, \qquad p(0) = 0,\; p(1) = 2,
\end{equation}
on $D=[0,1]$ are given and we want to infer $u$.
Here the $\varepsilon_j$ are independent realizations of
the normal distribution $N(0, \sigma^2_\varepsilon)$. 
We place a Gaussian prior $N(0,\Delta^{-1})$ 
with $\Delta = \frac {\d^2}{\d x^2}$ on the completion $\mc H_c$ of $H_0^1(D) \cap H^2(D)$ in $L^2(D)$.
Recall that $(\Omega,\mathcal{F},\mathbb{P})$ is a probability space 
and let $U\colon\Omega \to  \mc H_c \subset L^2(D)$
be a random function with  distribution $N(0,\Delta^{-1})$.
This allows us to represent the random function $U$ as
\begin{equation} \label{equ:KLE}
U(\omega)(x)
	= \frac{\sqrt 2}{\pi} \sum_{k=1}^\infty \xi_k(\omega) \sin(k \pi x), 
	\qquad \xi_k \sim N(0,k^{-2}),
\end{equation}
$\bbP$-a.s. where all random variables $\xi_k$ are independent.
Thus, inference on $u$ is equivalent to inference on $\vxi = (\xi_k)_{k\in\bbN}$.
This leads to the prior $\mu_0$ for $\vxi$ on $\mc H := \ell^2(\bbR)$ given by $\mu_0 = N(0,C)$ with
$C=\mathrm{diag}\{k^{-2}: k\in\bbN\}$.
Further, we denote by $\mu$ the resulting conditional distribution of $\vxi$ 
given the observed data $y_1,\dots,y_4$.
The measure $\mu$ is given by a density of the form \eqref{equ:mu} with $\Phi$ as in \eqref{equ:Phi} where $\Sigma = \sigma^2_\varepsilon I$ and $G(\vxi)$ is the mapping
\[
 \vxi \mapsto u(\cdot, \vxi) \mapsto p(\cdot, \vxi) \mapsto (p(0.2j, \vxi))_{j=1}^4.
\]

We test the performance of $\mu$-reversible Metropolis algorithms for computing expectations w.r.t. $\mu$ of a function
$f\colon\ell^2(\bbR)\to\bbR$.
We consider four Metropolis algorithms denoted by RW, pCN, GN-RW and 
gpCN with different proposal kernels:
\begin{itemize}
\item
RW: Gaussian random walk proposal $P_1(\vxi, \cdot) = N(\vxi, s^2 C)$,

\item
pCN: pCN proposal $P_2(\vxi, \cdot) = N(\sqrt{1-s^2}\vxi, s^2 C)$,

\item
GN-RW: Gauss-Newton random walk proposal $P_3(\vxi, \cdot) = N(\vxi, s^2 C_\Gamma)$,

\item
gpCN: gpCN proposal $P_4(\vxi, \cdot) = N(A_{\Gamma}\vxi, s^2 C_\Gamma)$.
\end{itemize}
Here we choose $\Gamma = \sigma_{\varepsilon}^{-2} LL^\top$ 
with $L = \nabla G(\vxi_\mathrm{MAP})$ and
\[
	\vxi_\mathrm{MAP} = 
	\operatornamewithlimits{argmin}_{\xi\in \Image C^{1/2}} 
	\left(\sigma_\varepsilon^{-2} 
	\abs{\vy - G(\vxi)}^2 + \| C^{-1/2} \vxi\|^2 \right).
\]
The solution of \eqref{equ:PDE} is given by 
$p(x) = 2 S_x(\e^{-u})/S_1(\e^{-u})$ with $S_x(f) = \int_0^x f(y) \d y$ and, thus, the gradient $\nabla G(\vxi)$ can be easily computed by differentiating the explicit formula for $p$ w.r.t. $\vxi$.\footnote{In general elliptic PDEs can be solved in a weak sense by variational methods. 
Then adjoint methods known from PDE constrained optimization and parameter identification can be employed to compute $\nabla G(\vxi)$, 
see \cite[Chapter 6]{Vogel2002} for details.}
Furthermore, we apply the Levenberg-Marquardt algorithm to solve the above optimization problem for the MAP estimator $\vxi_\mathrm{MAP}$.
For all Metropolis algorithms we tune $s$ such that the average acceptance rate is about $0.25$\footnote{
The empirical performance of each algorithm was best for this particular tuning.}.
As a metric for comparison we consider and estimate 
the \emph{effective sample size} 
\[
 \text{ESS} = \text{ESS}(n,f,(\vxi_k)_{k\in\bbN}) = n\left[1+2\sum_{k\geq 0}\gamma_f(k)\right]^{-1}.
\]
Here $n$ is the number of samples taken from a Markov chain $(\vxi_k)_{k\in\bbN}$ 
with, say, a Metropolis transition kernel $M$ 
and $\gamma_f$ denotes the autocorrelation 
function $\gamma_f(k) = \mathrm{Corr}(f(\vxi_{n_0}), f(\vxi_{n_0+k}))$ 
for a quantity of interest $f$.

The value of $\text{ESS}$ corresponds to the number of 
independent samples w.r.t. $\mu$ 
which would approximately yield the same mean squared error as the MCMC estimator $S_{n,n_0}(f)$ for computing $\bbE_\mu(f)$.
This can be justified under the assumption that $\vxi_{n_0}\sim\mu$, since then
by virtue of \cite[Proposition~3.26]{Ru12}
we have
\begin{align*}
	\lim_{n\to \infty} n\cdot \mathbb{E}\abs{S_{n,n_0}(f)-\mathbb{E}_\mu(f)}^2
	& = \sigma_{f,M}^2,\\
	1+2\sum_{k\geq 0}\gamma_f(k) & = \frac {\sigma_{f,M}^2}{\mathbb{E}_\mu(f^2)-\mathbb{E}_\mu(f)^2}
\end{align*}
where $\sigma_{f,M}^2$ denotes the asymptotic variance 
of the estimator $S_{n,n_0}(f)$ as in Section \ref{sec:MC}. 

For numerical simulations we use an uniform 
discretization of $[0,1]$ with $\Delta x = 2^{-9}$
and apply the trapezoidal rule for evaluating integrals w.r.t. $\d x$.
Furthermore, we truncate the expansion \eqref{equ:KLE} 
after $N$ terms where we vary 
$N$ in order to test the Metropolis algorithms for dimension independent performance. 
The noise-free 
observations are generated by $u(x) = 2\sin(2\pi x)$.
We also consider different 
noise levels $\sigma_\varepsilon$
to examine the effect of smaller variances $\sigma_\varepsilon^2$, 
leading to more concentrated posterior distributions $\mu$, to the performance of the Metropolis algorithms.
In all cases we take $n_0=10^5$ as burn-in length and $n=10^6$ as sample size.
We use 
$f(\vxi) := \int_0^1 \e^{u(x,\vxi)}\d x$ as the quantity of interest\footnote{
We also studied other functions such as
$f(\vxi) = \xi_1$, $f(\vxi) = \max_x \e^{u(x,\vxi)}$ and $f(\vxi) = p(0.5,\vxi)$
but the results of the comparison were essentially the same.}.
To estimate the $\text{ESS}$ we use the initial monotone sequence estimators\footnote{We also estimated the $\text{ESS}$ by batch means ($100$ batches of size $10^4$) to control our simulations. This lead to similar results.}, 
for details we refer to \cite[Section 3.3]{Geyer1992}.

The results of the simulations are illustrated in Figure~\ref{fig:acrf} and Figure~\ref{fig:ess}.
The former displays the estimated 
autocorrelation functions $\gamma_f$ 
resulting from the four Metropolis algorithms for $N=50$ and $\sigma_\varepsilon=0.1$ in (a), for $N=50$ and $\sigma_\varepsilon=0.01$ in (b),
for $N=400$ and $\sigma_\varepsilon=0.1$ in (c) and for $N=400$ and $\sigma_\varepsilon=0.01$ in (d).
In Figure~\ref{fig:ess} we display the estimated $\text{ESS}$ for varying $\sigma_\varepsilon= 0.1, 0.05, 0.025, 0.01$ with fixed $N = 100$ in (a) and varying $N = 50, 100, 200, 400, 800$ with fixed $\sigma_\varepsilon = 0.1$ in (b).

We see in both figures that the performance of pCN and gpCN is independent of the dimension 
and only GN-RW and gpCN perform robustly w.r.t. the noise variance. 
Thus, the gpCN Metropolis seems to be the only algorithm with both desirable properties.
Intuitively, the variance independent performance might come from the fact that
our choice of $C_\Gamma$ incorporates the noise covariance $\sigma_\varepsilon^2 I$ 
in a way as the posterior covariance might depend on. 
Thus, the smaller $\sigma_\varepsilon$ becomes, i.e., the more pronounced the change from prior to posterior is, the more pronounced is also the adaptation in the proposal covariance by $C_{\Gamma} = (C^{-1} + \sigma_{\varepsilon}^{-2} LL^\top)^{-1}$.
Moreover, the gpCN performs best among the four algorithms also in absolute terms of the $\text{ESS}$.

\begin{figure}[h]
\hfill
\begin{minipage}{0.49\textwidth}
	\centering(a) \\
	\includegraphics[width = \textwidth]{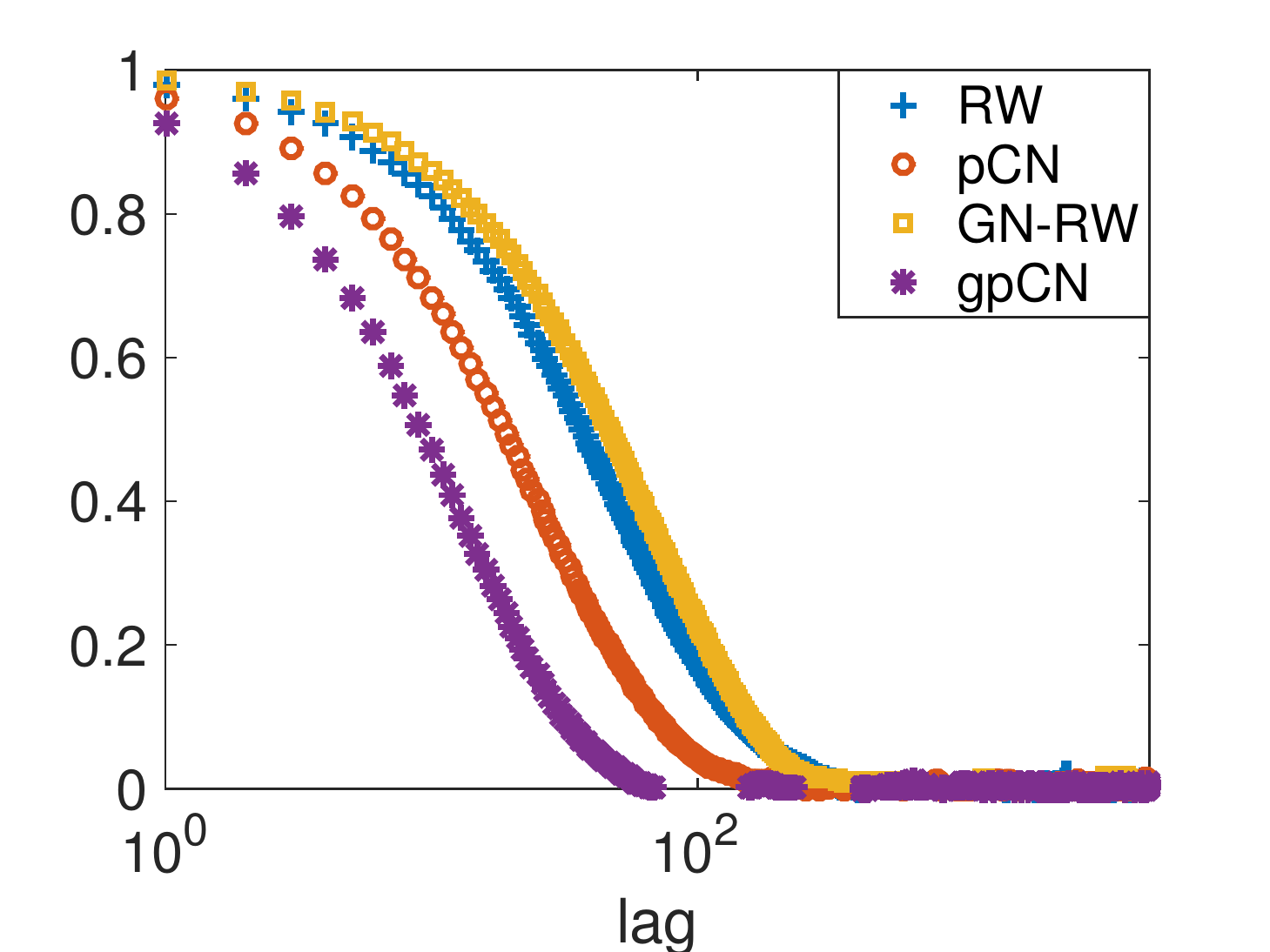}
\end{minipage}
\hfill
\begin{minipage}{0.49\textwidth}
	\centering(b) \\
	\includegraphics[width = \textwidth]{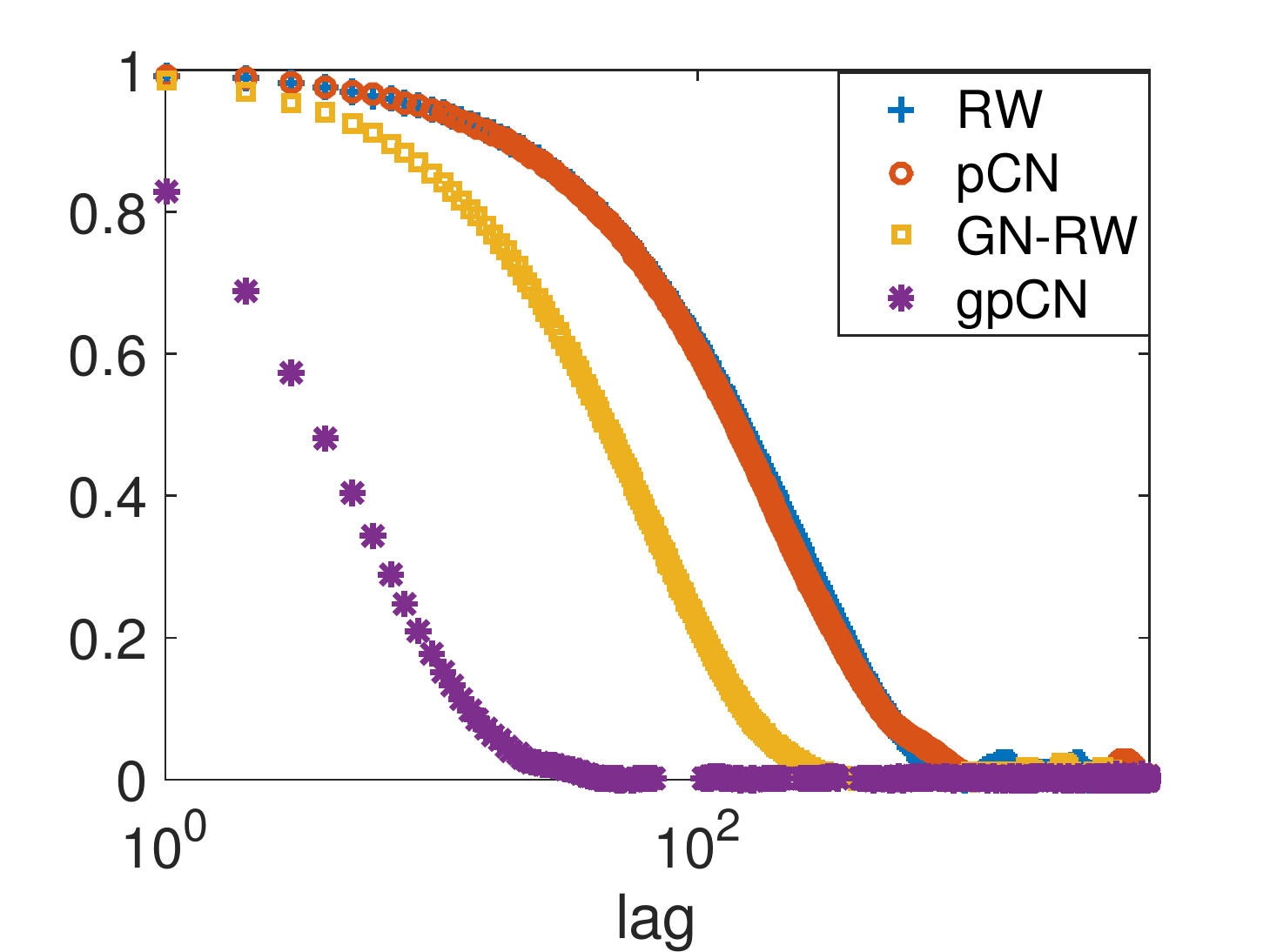}
\end{minipage}
\vspace*{1ex}
\hfill
\begin{minipage}{0.49\textwidth}
	\centering(c) \\
	\includegraphics[width = \textwidth]{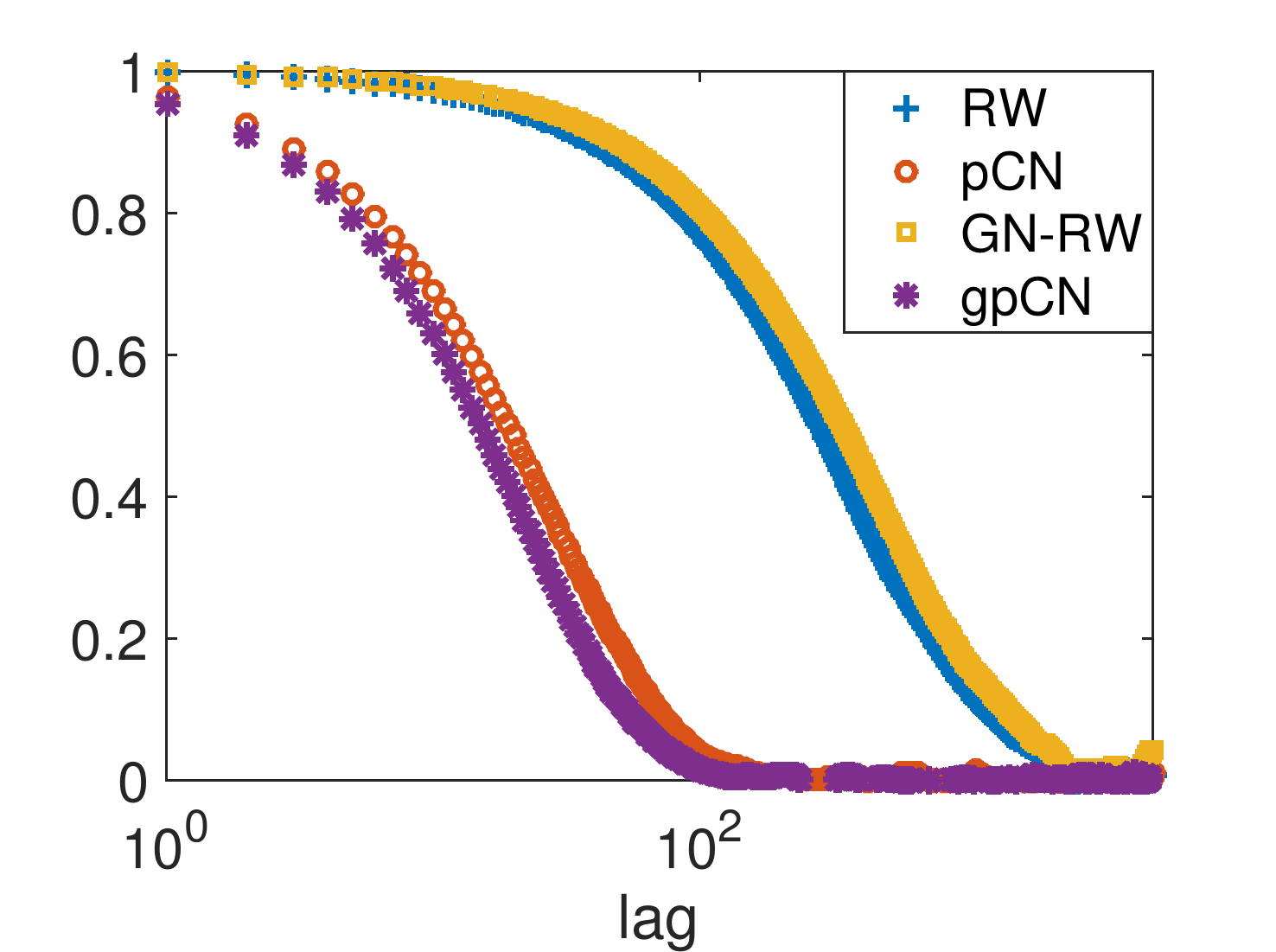}
\end{minipage}
\hfill
\begin{minipage}{0.49\textwidth}
	\centering(d) \\
	\includegraphics[width = \textwidth]{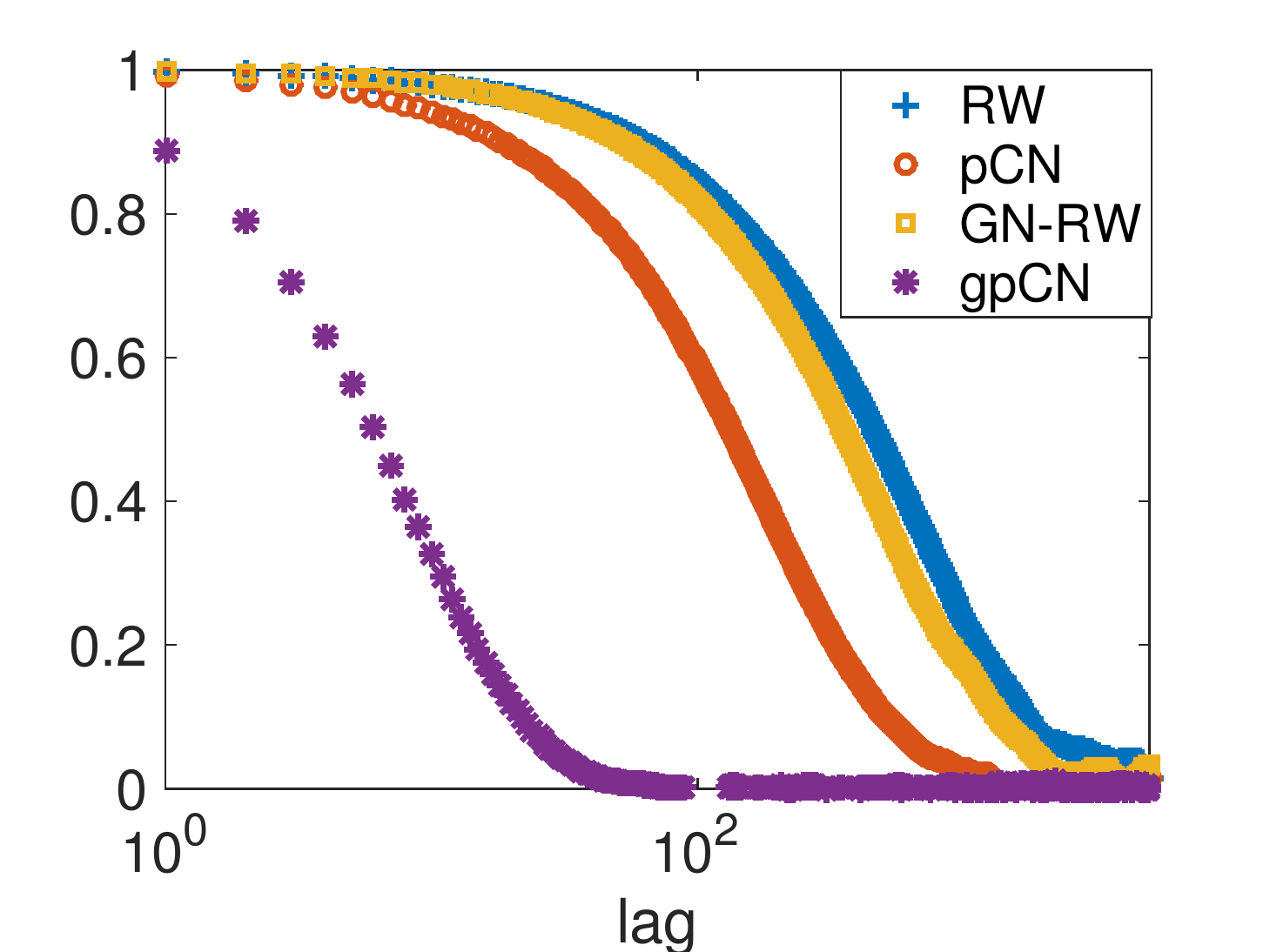}
\end{minipage}
\hfill
\caption{Autocorrelation of $f$ given 
samples generated by the four Metropolis algorithms 
denoted by RW, pCN, GN-RW and gpCN for: 
(a) state dimension $N = 50$ and noise standard deviation $\sigma_\varepsilon = 0.1$; 
(b) $N = 50$ and $\sigma_\varepsilon = 0.01$;
(c) $N = 400$ and $\sigma_\varepsilon = 0.1$;
(d) $N = 400$ and $\sigma_\varepsilon = 0.01$.}
\label{fig:acrf}
\end{figure}

\begin{figure}[h]
\hfill
\begin{minipage}{0.49\textwidth}
	\centering(a) \\
	\includegraphics[width = \textwidth]{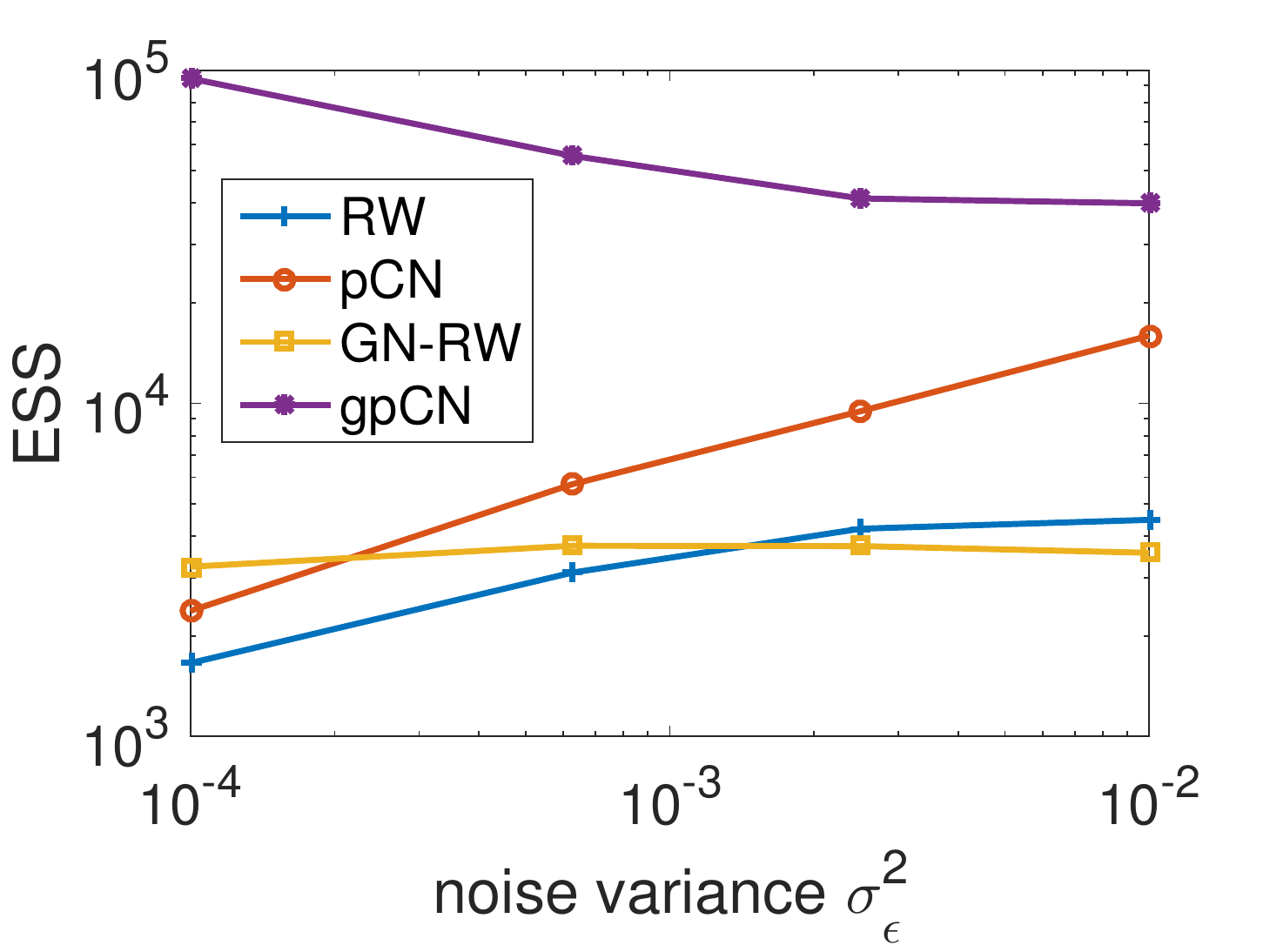}
\end{minipage}
\hfill
\begin{minipage}{0.49\textwidth}
	\centering(b) \\
	\includegraphics[width = \textwidth]{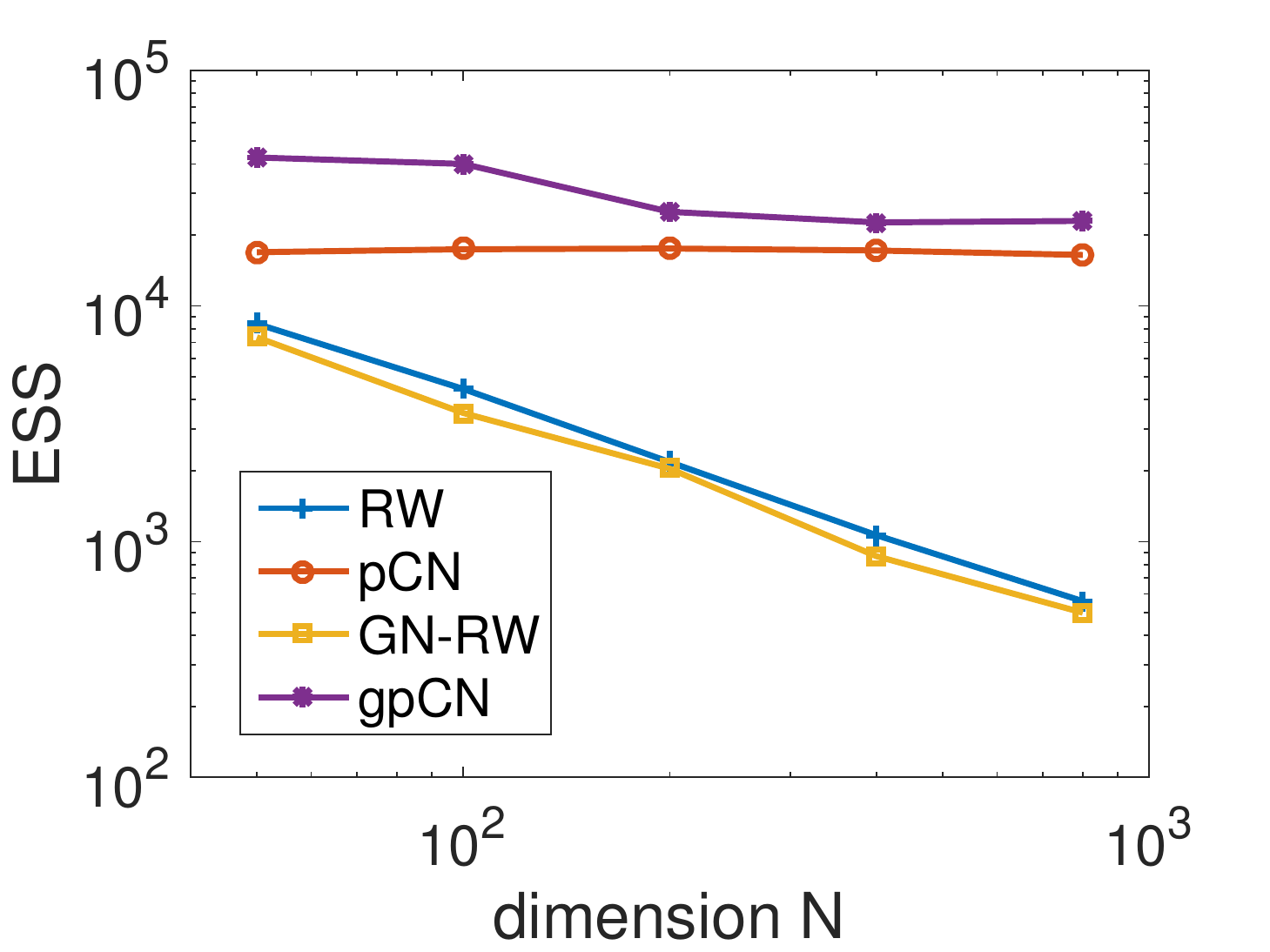}
\end{minipage}
\hfill
\caption{Dependence of empirical $\text{ESS}$ 
for each Metropolis algorithm RW, pCN, GN-RW and gpCN w.r.t.: 
(a) noise variance with fixed state dimension $N=100$; (b) state dimension with fixed noise variance $\sigma^2_\varepsilon = 0.01$.}
\label{fig:ess}
\end{figure}

\section{Qualitative comparison of gpCN Metropolis} \label{sec:Conv}

In this section we develop qualitative comparison arguments 
for Metropolis algorithms in a general setting 
and apply those results to the gpCN Metropolis algorithms.
In particular, we 
relate the existence of a spectral gap 
for the gpCN  to the existence of a spectral gap of the pCN Metropolis.
Here it is worth mentioning 
that in \cite{HaStVo14} sufficient conditions 
for the latter were proven under additional regularity assumptions on the function $\Phi$ in \eqref{equ:mu}. 
With our approach we do not need to rely on those conditions and will benefit from any improvement of the results stated in 
\cite{HaStVo14}.

We start with stating a general comparison result 
for the spectral gaps of Metropolis algorithms with equivalent proposals.
We then 
verify the corresponding assumptions 
for the gpCN Metropolis: positivity and equivalence to the pCN proposal.
In order to derive our main theorem, we consider in Section \ref{sec:restrict} 
restrictions of the target measure $\mu$ to arbitrary $R$-balls in $\mc H$ 
and prove convergence of the gpCN Metropolis to these 
restricted measures.

\subsection{Comparison of spectral gaps} \label{subsec: comp_conduct}
Let $K$ be a $\mu$-reversible transition kernel on $(\mc H,\mc B(\mc H))$, 
i.e., the associated Markov operator $K\colon L_2(\mu) \to L_2(\mu)$ is self-adjoint.
Let the largest element of the spectrum $\spec(K\,|\, L_2^0(\mu))$ be given by
\[
	\Lambda(K) := \sup\{ \lambda \colon \lambda \in \spec(K\,|\, L_2^0(\mu)) \}
\]
and define the \emph{conductance of $K$} (w.r.t. $\mu$) by
\[
	\varphi(K) := \inf_{\mu(A)\in (0,1/2]} \frac{\int_A K(u,A^c)\mu(\d u)}{ \mu(A)}.
\]
Under the assumptions above the Cheeger inequality for Markov operators, 
see \cite{LawlerSokal1988}, given by
\begin{equation} \label{eq:Cheeger}
 	\frac{\varphi(K)^2}{2} \leq 1-\Lambda(K) \leq 2 \varphi(K)
\end{equation}
provides a
useful relation between $\Lambda(K)$ and 
the conductance $\varphi(K)$. 

Let us assume that $M_1$ and $M_2$ are $\mu$-reversible transition kernels 
of Metropolis algorithms with the same acceptance probability $\alpha$ and
proposals $P_1$ and $P_2$, respectively. 
Then, we obtain the following result.

\begin{lem} \label{lem:conductance}
Let $\mu$ be a probability measure on $(\mc H, \mc B(\mc H))$ 
and for $i=1,2$ let
\[
	M_i(u, \d v) =   \alpha(u,v) P_i(u,\d v) +\delta_u(\d v) \,\int_{\mc H} (1 - \alpha(u, w)) \, P_i(u,\d w)
\]
be 
Metropolis kernels. 
Assume that for any $u\in\mc H$ the 
Radon-Nikodym derivative of 
$P_1(u,\d v)$ 
w.r.t. 
 $P_2(u,\d v)$ exists, i.e.,
the proposal kernels admit a density
\[
	\rho(u, v) = \frac{\d P_1(u)}{ \d P_2(u)}(v),\qquad u,v\in\mc H.
\]
If for a number $p> 1$ we have
\begin{equation} \label{eq: kappa}
	\kappa_p := \sup_{\mu(A)\in (0,1/2]} \frac{\int_A \int_{A^c} \rho(u,v)^{p} 
	P_2(u,\d v)\,\mu(\d u)}{\mu(A)} < \infty,
 \end{equation}
then
\[
	\varphi(M_1)  \leq \kappa_p^{1/p}\, \varphi(M_2)^{(p-1)/p}.
\]
\end{lem}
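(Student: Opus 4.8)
The plan is to argue directly at the level of the conductance by estimating, for an arbitrary measurable set $A$ with $\mu(A)\in(0,1/2]$, the numerator $\int_A M_1(u,A^c)\,\mu(\d u)$ in terms of the corresponding quantity for $M_2$. First I would observe that for $u\in A$ the Dirac term $\delta_u(\d v)$ appearing in $M_i$ (see \eqref{eq: metro_kern}) puts no mass on $A^c$, so that $M_i(u,A^c)=\int_{A^c}\alpha(u,v)\,P_i(u,\d v)$ for such $u$. Using the density $\rho$ of $P_1(u,\cdot)$ with respect to $P_2(u,\cdot)$, this turns the numerator for $M_1$ into
\[
  \int_A M_1(u,A^c)\,\mu(\d u)=\int_A\int_{A^c}\alpha(u,v)\,\rho(u,v)\,P_2(u,\d v)\,\mu(\d u).
\]

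Next I would apply H\"older's inequality on $A\times A^c$ with respect to the finite measure $P_2(u,\d v)\,\mu(\d u)$ (finite since $P_2(u,\cdot)$ and $\mu$ are probability measures), splitting the integrand as $\rho\cdot\alpha$ with exponents $p$ and $q:=p/(p-1)$:
\[
  \int_A\int_{A^c}\alpha\,\rho\,P_2\,\mu\le\Big(\int_A\int_{A^c}\rho(u,v)^{p}P_2(u,\d v)\,\mu(\d u)\Big)^{1/p}\Big(\int_A\int_{A^c}\alpha(u,v)^{q}P_2(u,\d v)\,\mu(\d u)\Big)^{1/q}.
\]
The first factor is bounded by $(\kappa_p\,\mu(A))^{1/p}$ by the very definition \eqref{eq: kappa} of $\kappa_p$. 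For the second factor I would use that $\alpha$ takes values in $[0,1]$ and $q>1$, so $\alpha^q\le\alpha$, whence $\int_A\int_{A^c}\alpha^q\,P_2\,\mu\le\int_A M_2(u,A^c)\,\mu(\d u)=\mu(A)\,\varphi_A(M_2)$, writing $\varphi_A(M):=\mu(A)^{-1}\int_A M(u,A^c)\,\mu(\d u)$ for brevity.

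Combining these two bounds, dividing by $\mu(A)$, and using $1/p+1/q=1$, the powers of $\mu(A)$ cancel and I obtain
\[
  \varphi_A(M_1)\le\kappa_p^{1/p}\,\varphi_A(M_2)^{(p-1)/p}\qquad\text{for every admissible }A.
\]
Since $t\mapsto\kappa_p^{1/p}t^{(p-1)/p}$ is nondecreasing on $[0,\infty)$, taking the infimum over all $A$ with $\mu(A)\in(0,1/2]$ gives $\varphi(M_1)=\inf_A\varphi_A(M_1)\le\inf_A\kappa_p^{1/p}\varphi_A(M_2)^{(p-1)/p}=\kappa_p^{1/p}\varphi(M_2)^{(p-1)/p}$, which is the claim. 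I do not expect any real obstacle in this lemma itself: the only points needing a little care are that the $\delta_u$ contribution genuinely vanishes, that the relevant measures are finite so H\"older applies, and the bookkeeping of the exponents; the substantive difficulty is deferred to the later verification that $\kappa_p<\infty$ for the pCN/gpCN proposal pair.
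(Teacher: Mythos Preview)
Your proof is correct and follows essentially the same approach as the paper: express $\int_A M_1(u,A^c)\,\mu(\d u)$ via the density $\rho$, apply H\"older's inequality with respect to $P_2(u,\d v)\,\mu(\d u)$, divide by $\mu(A)=\mu(A)^{1/p}\mu(A)^{1/q}$, and take the infimum. The only cosmetic difference is that the paper splits $\alpha=\alpha^{1/p}\alpha^{1/q}$ before H\"older and then uses $\alpha\le 1$, whereas you keep $\alpha$ with exponent $q$ and use $\alpha^q\le\alpha$; these lead to exactly the same bound.
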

\begin{proof}
 Let $A\in \mc B(\mc H)$ with $\mu(A) \in (0,1/2]$.
 Further, let $q=p/(p-1)$ such that $1/q + 1/p=1$.
 Then
 \begin{align*}
   \int_A M_1(u,A^c) \,\d\mu( u)
	 & = \int_{\mc H} \int_{\mc H} \mathbf{1}_{A^c}(v)\mathbf{1}_{A}(u)\, \alpha(u,v)\, P_1(u,\d v)\, \d\mu( u)\\
	 & = \int_{\mc H} \int_{\mc H} \mathbf{1}_{A^c}(v)\mathbf{1}_{A}(u)\, \alpha(u,v)\,\rho(u;v)\, P_2(u,\d v)\, \d\mu( u).
 \end{align*}
Note that $P_2(u,\d v)\mu(\d u)$ is a probability measure 
on $(\mc H\times \mc H,\mc B(\mc H\times \mc H))$
and we can apply H\"older's inequality according to this measure with parameters $p$ and $q$.
Thus, by using $\alpha(u,v) = \alpha(u,v)^{1/q} \alpha(u,v)^{1/p}$ we obtain
\begin{align*}
	& \int_A M_1(u,A^c)\, \d\mu(u) \\
	&\leq \left( \int_A M_2(u,A^c)\, \d \mu(u)\right)^{1/q} \left( \int_A \int_{A^c} \rho(u,v)^p \alpha(u,v)\, P_2(u,\d v)\,\d\mu(u)\right)^{1/p}\\
 	& \leq \left( \int_A M_2(u,A^c)\, \d \mu(u)\right)^{1/q} \left( \int_A \int_{A^c} \rho(u,v)^{p} P_2(u,\d v)\,\d\mu(u)\right)^{1/p}
\end{align*}
Dividing by $\mu(A)$, applying $\mu(A)^{-1} = \mu(A)^{-1/q}\,\mu(A)^{-1/p}$ and taking the infimum yields
\[
 \varphi(M_1) \leq \varphi(M_2)^{1/q} \kappa_p^{1/p}.
\]
\end{proof}
Employing comparison inequalities in terms of the conductance 
is not an entirely new idea, see for example \cite[Proof of Theorem~4]{LeeLat2014}. 
There the authors obtained a conductance inequality for transition 
kernels with bounded Radon-Nikodym derivatives w.r.t. each other.
An immediate consequence of Lemma \ref{lem:conductance} 
and \eqref{eq:Cheeger} is the following theorem.

\begin{theo}[Spectral gap comparison] \label{theo:comparison_gap}
Let the assumptions of Lemma \ref{lem:conductance} 
be satisfied and let the Markov operators associated with 
$M_1$ and $M_2$ be positive and self-adjoint on $L_2(\mu)$.
Then 
\[
	\left( \frac{\gap(M_1)}{2} \right)^p \leq \kappa_p\, (2\,\gap(M_2))^{(p-1)/2}.
\]
\end{theo}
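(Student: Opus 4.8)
The plan is to combine Lemma~\ref{lem:conductance} with the Cheeger inequality \eqref{eq:Cheeger} applied to both Markov operators. The positivity hypothesis is what makes this clean: when $M_i$ is a positive self-adjoint operator on $L_2(\mu)$, the smallest element of $\spec(M_i\,|\,L_2^0(\mu))$ contributes nothing to $\norm{M_i}{\mu}$, so $\norm{M_i}{\mu} = \Lambda(M_i)$ and hence $\gap(M_i) = 1 - \Lambda(M_i)$. Thus the two-sided Cheeger bound becomes a statement purely about the spectral gap: $\tfrac12\varphi(M_i)^2 \le \gap(M_i) \le 2\varphi(M_i)$.

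First I would use the \emph{lower} bound from \eqref{eq:Cheeger} applied to $M_1$: since $\gap(M_1) = 1 - \Lambda(M_1) \ge \tfrac12 \varphi(M_1)^2$, we get $\varphi(M_1) \ge \sqrt{2\,\gap(M_1)}$, i.e. $\varphi(M_1) \ge \gap(M_1)^{1/2}\sqrt 2$. Equivalently $\bigl(\gap(M_1)/2\bigr)^{1/2} \le \varphi(M_1)/2$, so $\bigl(\gap(M_1)/2\bigr)^{p/2}$ compares to $(\varphi(M_1)/2)^p$ — but it is cleaner to just write $\gap(M_1) \le \varphi(M_1)^2 \cdot (\tfrac12 \cdot 2) $; let me instead track it as $\varphi(M_1)^2 \ge 2\,\gap(M_1)$, hence $\varphi(M_1)^{2p} \ge (2\,\gap(M_1))^p$, so $\varphi(M_1)^p \ge (2\,\gap(M_1))^{p/2}$. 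Next, the \emph{upper} bound from \eqref{eq:Cheeger} applied to $M_2$ gives $\varphi(M_2) \le 2\,\gap(M_2)$ wait — the inequality reads $1 - \Lambda(M_2) \le 2\varphi(M_2)$, i.e. $\gap(M_2) \le 2\varphi(M_2)$, which bounds $\gap(M_2)$ from above by $\varphi(M_2)$, not the direction I want. I actually need $\varphi(M_2)$ bounded above in terms of $\gap(M_2)$, and the Cheeger inequality does not directly give that. Re-examining: Lemma~\ref{lem:conductance} bounds $\varphi(M_1)$ above by a power of $\varphi(M_2)$, so I want a lower bound on $\gap(M_1)$ and I can tolerate an upper bound on $\varphi(M_2)$. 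The correct reading of \eqref{eq:Cheeger} for $M_2$ is $\varphi(M_2)^2/2 \le \gap(M_2)$, hence $\varphi(M_2) \le \sqrt{2\,\gap(M_2)}$. Good — that is exactly the bound on $\varphi(M_2)$ I need. So the chain is: from Cheeger lower bound on $M_2$, $\varphi(M_2) \le (2\,\gap(M_2))^{1/2}$; plug into Lemma~\ref{lem:conductance}, $\varphi(M_1) \le \kappa_p^{1/p}\,\varphi(M_2)^{(p-1)/p} \le \kappa_p^{1/p}\,(2\,\gap(M_2))^{(p-1)/(2p)}$; finally from Cheeger upper bound on $M_1$, $\gap(M_1) \le 2\varphi(M_1)$, so $\gap(M_1)/2 \le \varphi(M_1) \le \kappa_p^{1/p}(2\,\gap(M_2))^{(p-1)/(2p)}$. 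Raising to the $p$th power yields $(\gap(M_1)/2)^p \le \kappa_p\,(2\,\gap(M_2))^{(p-1)/2}$, which is exactly the claimed inequality.

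So the proof is genuinely a short bookkeeping exercise: justify $\gap(M_i) = 1-\Lambda(M_i)$ from positivity, then chain the Cheeger bound on $M_2$ (lower form), Lemma~\ref{lem:conductance}, and the Cheeger bound on $M_1$ (upper form). The only subtle point — and the thing I would be careful to state correctly — is \textbf{which side} of the two-sided Cheeger inequality is used for which kernel, since both directions appear and mixing them up produces an inequality in the wrong direction or with the wrong exponents. There is no real analytic obstacle here; everything hard was already done in Lemma~\ref{lem:conductance} and in establishing \eqref{eq:Cheeger}. I would also remark that positivity of the Metropolis operators is not merely a convenience: without it $\norm{M_i}{\mu}$ could be governed by a negative eigenvalue and $\gap$ would not be controlled by $1-\Lambda$, so the hypothesis genuinely enters.
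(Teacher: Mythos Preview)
Your argument is correct and matches the paper's approach exactly: the paper states the theorem as ``an immediate consequence of Lemma~\ref{lem:conductance} and \eqref{eq:Cheeger}'' without further detail, and your chain---positivity gives $\gap(M_i)=1-\Lambda(M_i)$, then the lower Cheeger bound on $M_2$ yields $\varphi(M_2)\le(2\gap(M_2))^{1/2}$, Lemma~\ref{lem:conductance} transfers this to $\varphi(M_1)$, and the upper Cheeger bound on $M_1$ gives $\gap(M_1)/2\le\varphi(M_1)$---is precisely the intended derivation. Your explicit identification of which side of Cheeger is used for which kernel, and the remark that positivity is essential rather than cosmetic, are both on point.
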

We apply Theorem \ref{theo:comparison_gap} 
to prove our convergence result for 
the gpCN Metropolis.
We therefore verify in the following section the condition 
that the corresponding Markov operator is positive.

\subsection{Positivity of Metropolis with Gaussian proposals}
Recall that $\langle f,g \rangle_\mu = \int_{\mc H} f g\, \d\mu$ 
denotes the inner-product of $L_2(\mu)$ and that a 
Markov operator $K\colon L_2(\mu) \to L_2(\mu)$ is positive 
if $\langle Kf, f\rangle_{\mu} \geq 0$
for all $f\in L_2(\mu)$.

\begin{lem}[Positivity of proposals] \label{propo:GaussianRW_positiv} 
Let $\mu_0 = N(0,C)$ be a Gaussian measure 
on a separable Hilbert space $\mc H$ and 
let $P(u,\cdot) = N(A u, Q)$ be a $\mu_0$-reversible proposal kernel with a bounded, 
linear operator $A: \mc H\to\mc H$.
If there exists a bounded, linear operator $B: \mc H \to \mc H$ such that
\[
	B^2 = A, \qquad BC = CB^*, 
\]
and $D:=C - BCB^*$ 
is positive and trace class, then, 
the Markov operator associated with the proposal $P$ is positive on $L_2(\mu_0)$.
\end{lem}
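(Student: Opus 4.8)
The plan is to exhibit $P$ as the square of a $\mu_0$-reversible Markov operator, after which positivity follows formally. Concretely, I would introduce the auxiliary transition kernel
\[
\widetilde P(u,\cdot) := N(Bu, D), \qquad u \in \mc H .
\]
First one checks this is a genuine transition kernel: $D = C - BCB^*$ is self-adjoint, since $(BCB^*)^* = BCB^*$, and by hypothesis $D$ is positive and trace class, so $N(Bu,D)$ is a well-defined Gaussian measure on $\mc H$ for every $u$, depending measurably on $u$ through the bounded operator $B$. Hence the associated Markov operator $\widetilde P\colon L_2(\mu_0)\to L_2(\mu_0)$ is bounded with $\norm{\widetilde P}{\mu_0}\le 1$.

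The two facts to establish are that $\widetilde P$ is $\mu_0$-reversible and that $\widetilde P^2 = P$. For reversibility, I would compute the joint law exactly as in Section~\ref{sec: pCN}: by \eqref{equ:Gaussian_affine},
\[
\widetilde P(u, \d v)\, \mu_0(\d u) = N\left( \begin{bmatrix} 0\\0 \end{bmatrix}, \begin{bmatrix} C & CB^* \\ BC & BCB^* + D \end{bmatrix} \right),
\]
and substituting $BCB^* + D = C$ together with the hypothesis $CB^* = BC$ turns the covariance into $\begin{bmatrix} C & BC \\ BC & C \end{bmatrix}$, which is invariant under swapping the two blocks. Thus $\widetilde P(u,\d v)\mu_0(\d u) = \widetilde P(v,\d u)\mu_0(\d v)$, i.e. $\widetilde P$ is $\mu_0$-reversible and therefore self-adjoint on $L_2(\mu_0)$; in passing this also gives $\mu_0 \widetilde P = N(0, BCB^* + D) = \mu_0$. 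For the identity $\widetilde P^2 = P$, composing two Gaussian steps yields $\widetilde P^2(u,\cdot) = N(B^2 u,\; BDB^* + D)$; the mean operator equals $A$ because $B^2 = A$, and
\[
BDB^* + D = BCB^* - B^2 C (B^*)^2 + C - BCB^* = C - ACA^* .
\]
On the other hand, by the same computation that leads to \eqref{equ:A_cond} (now with the general covariance $Q$ in place of $s^2 C_\Gamma$), the $\mu_0$-reversibility of $P = N(Au,Q)$ forces $Q = C - ACA^*$. Hence $\widetilde P^2(u,\cdot) = N(Au,Q) = P(u,\cdot)$ for all $u$, so $\widetilde P^2 = P$ as operators on $L_2(\mu_0)$.

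With these two facts the conclusion is immediate: for every $f \in L_2(\mu_0)$,
\[
\langle P f, f\rangle_{\mu_0} = \langle \widetilde P^2 f, f\rangle_{\mu_0} = \langle \widetilde P f,\; \widetilde P^* f\rangle_{\mu_0} = \langle \widetilde P f, \widetilde P f\rangle_{\mu_0} = \norm{\widetilde P f}{2,\mu_0}^2 \ge 0 ,
\]
so the Markov operator of $P$ is positive on $L_2(\mu_0)$. The only delicate point in this plan is the $\mu_0$-reversibility of $\widetilde P$: this is precisely where the hypothesis $BC = CB^*$ is used — self-adjointness of $B$ itself is neither assumed nor needed — to make the joint Gaussian law symmetric under $(u,v)\mapsto(v,u)$. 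I would also stress that the existence of such a $B$, together with positivity and the trace-class property of $D = C - BCB^*$, is a genuine extra assumption and does not follow from $\mu_0$-reversibility of $P$ alone; this is why it has to be verified separately when the lemma is applied to the gpCN proposal kernel $P_\Gamma$.
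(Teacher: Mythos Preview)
Your proof is correct and follows essentially the same approach as the paper: introduce the auxiliary kernel $\widetilde P(u,\cdot)=N(Bu,D)$, verify its $\mu_0$-reversibility via the joint Gaussian covariance using $BC=CB^*$ and $BCB^*+D=C$, and then show $\widetilde P^2=P$ by computing $BDB^*+D=C-ACA^*=Q$ from the $\mu_0$-reversibility of $P$. The paper phrases the computation of $\widetilde P^2(u,\cdot)$ via an auxiliary two-step chain $X_{n+1}=BX_n+\xi_n$, but this is exactly your direct composition of Gaussian steps.
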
 
\begin{proof}
Because of the assumptions on $B$ and $D$ we obtain that the proposal kernel $P_1(u,\cdot) = N(Bu,D)$ is well-defined.
Further, since $BCB^* + D=C$ we derive 
\[
 P_1(u,\d v) \mu_0(\d u) 
  = N\left( 
 \begin{bmatrix}  0\\ 0 \end{bmatrix}, 
 \begin{bmatrix}  C & CB^* \\ BC & C\end{bmatrix} \right),
\]
which leads by $BC=CB^*$ to the $\mu_0$-reversibility of $P_1$ and, thus, 
to the self-adjointness of its associated Markov operator in $L_2(\mu_0)$.
It remains to prove that $P_1^2 = P$ holds for 
the associated Markov operators which then immediately yields the assertion.
The equality of the Markov operators is equivalent to the equality of the measures $P_1^2(u,\cdot)$ and $P(u,\cdot)$ for all $u\in \mc H$. 
In order to show that $P_1^2(u,\cdot) = P(u,\cdot)$ for all $u\in\mc H$, we take $(\xi_n)_{n\in\mathbb{N}}$ to be an i.i.d. sequence with $\xi_1\sim N(0,D)$ and construct an auxiliary Markov chain by
\[
	X_{n+1} = B X_{n} + \xi_n, \quad n\geq 1,
\]
where $X_1=u$ for an arbitrary $u\in \mc H$.
The transition kernel of the chain $(X_n)_{n\in\mathbb{N}}$ is the kernel $P_1$.
In particular, for $G\in \mathcal{B}(\mc H)$ holds $\mathbb{P}[X_3 \in G] = P_1^2(u,G)$.
By
\[
 X_3 = B X_2 + \xi_2 = B^2 u + B \xi_1 + \xi_2
\]
and $B\xi_1 + \xi_2 \sim N(0,BDB^*+D)$ we obtain $X_3\sim N(B^2u,BDB^*+D)$.
Due to the assumptions we have $B^2 = A$ and
\[
 BDB^*+D = B(C-BCB^*)B^* + C-BCB^* = C-ACA^*.
\]
The last step $C-ACA^* = Q$ follows by the assumed $\mu_0$-reversibility of $P$, because we know from Section \ref{sec:well_gpCN} that $P$ being $\mu_0$-reversible is equivalent to $A$ and $Q$ satisfying $AC = CA^* $ and $ACA^* + Q = C$.
We thus arrive at $X_3 \sim  N(Au, Q)$ which proves $P_1^2(u,\cdot) = P(u,\cdot)$.
\end{proof}

The next lemma extends the previous result 
to Markov operators associated with Metropolis algorithms. 
The proof follows by the same line of arguments as developed 
in \cite[Section~3.4]{RuUl13} and is therefore omitted.

\begin{lem}[Positivity of Metropolis kernels] \label{propo:MH_positiv} 
Let $\mu$ be a measure on $\mc H$ given by \eqref{equ:mu} and let $P$ be a $\mu_0$-reversible proposal kernel whose associated Markov operator is positive on $L_2(\mu_0)$.
Then the Markov operator associated with  
a $\mu$-reversible Metropolis kernel
\[
 M(u,\d v) = \alpha(u,v) P(u,\d v) 
 + \delta_{u}(\d v) \int_{\mc H} (1-\alpha(u,w))P(u,\d w)
\]
with $\alpha(u,v) = \min\{1,\frac{\d\mu}{\d\mu_0}(v) \frac{\d\mu_0}{\d\mu}(u) \}$
is positive on $L_2(\mu)$.
\end{lem}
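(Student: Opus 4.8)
The plan is to verify positivity of $M$ on $L_2(\mu)$ directly, by showing $\langle Mf, f\rangle_\mu \geq 0$ for every $f \in L_2(\mu)$. Write $g := \frac{\d\mu}{\d\mu_0} = \frac 1Z\,\e^{-\Phi}$, which is strictly positive since $\Phi$ is real-valued, so that $\mu$ and $\mu_0$ are equivalent and the acceptance probability reads $\alpha(u,v) = \min\{1, g(v)/g(u)\} = g(u)^{-1}\min\{g(u),g(v)\}$. Splitting $M$ into its off-diagonal (``accept'') and diagonal (``reject'') parts, one obtains
\[
\langle Mf, f\rangle_\mu = \int_{\mc H}\!\int_{\mc H} f(u)f(v)\,\alpha(u,v)\,P(u,\d v)\,\mu(\d u) \;+\; \int_{\mc H} f(u)^2 \Big( \int_{\mc H} (1-\alpha(u,w))\,P(u,\d w)\Big)\mu(\d u),
\]
and since the second summand is manifestly nonnegative, it remains to handle the first.

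For the first summand I would use $\mu(\d u) = g(u)\,\mu_0(\d u)$ together with the formula for $\alpha$ to cancel the factor $g(u)$, rewriting it as $\int\!\int f(u)f(v)\,\min\{g(u),g(v)\}\,P(u,\d v)\,\mu_0(\d u)$. The key step is then the layer-cake identity $\min\{g(u),g(v)\} = \int_0^\infty \mathbf 1_{B_t}(u)\,\mathbf 1_{B_t}(v)\,\d t$ with super-level sets $B_t := \{u\in\mc H : g(u) > t\}$; after interchanging the order of integration this expresses the first summand as $\int_0^\infty \langle P(f\mathbf 1_{B_t}),\, f\mathbf 1_{B_t}\rangle_{\mu_0}\,\d t$. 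For each $t>0$ the function $f\mathbf 1_{B_t}$ lies in $L_2(\mu_0)$, because on $B_t$ one has $g > t$ and hence $\int_{B_t} f^2\,\d\mu_0 \leq t^{-1}\norm{f}{2,\mu}^2 < \infty$; the assumed positivity of the Markov operator of $P$ on $L_2(\mu_0)$ then makes every integrand nonnegative, and therefore $\langle Mf,f\rangle_\mu \geq 0$.

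The step needing genuine care — and the main obstacle — is the justification of Fubini's theorem in the interchange above, i.e. absolute integrability of $(t,u,v)\mapsto f(u)f(v)\mathbf 1_{B_t}(u)\mathbf 1_{B_t}(v)$ against $\d t\otimes P(u,\d v)\,\mu_0(\d u)$. By Tonelli this reduces to finiteness of $\int\!\int |f(u)||f(v)|\,\min\{g(u),g(v)\}\,P(u,\d v)\,\mu_0(\d u)$, which I would control via $\min\{g(u),g(v)\}\leq \sqrt{g(u)g(v)}$: setting $h:=|f|\sqrt g \geq 0$, which satisfies $\norm{h}{2,\mu_0}^2 = \norm{f}{2,\mu}^2 < \infty$, the integral equals $\langle h, Ph\rangle_{\mu_0} \leq \norm{h}{2,\mu_0}^2 < \infty$, using that $P$, being $\mu_0$-reversible, leaves $\mu_0$ invariant and is hence a contraction on $L_2(\mu_0)$. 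Everything else is routine bookkeeping, and this is in essence the argument of \cite[Section~3.4]{RuUl13}; alternatively, the decomposition over the sets $B_t$ can be read as writing the accept-part of $M$ as an integral mixture of the positive, $\mu_0$-symmetric restrictions of $P$ to the super-level sets of $g$.
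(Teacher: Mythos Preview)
Your argument is correct and is precisely the layer-cake/super-level-set decomposition of \cite[Section~3.4]{RuUl13} to which the paper defers (the paper does not give its own proof of this lemma). In particular your Fubini justification via $\min\{g(u),g(v)\}\le\sqrt{g(u)g(v)}$ and the $L_2(\mu_0)$-contraction property of $P$ is exactly the kind of bookkeeping needed, so nothing is missing.
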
 

The previous two lemmas lead to the following result about the gpCN Metropolis.
\begin{theo}[Positivity of gpCN Metropolis] \label{theo:gpCN_positiv} 
Let $\mu_0 = N(0,C)$ and $\mu$ as in \eqref{equ:mu} and let $M_{\Gamma}$ denote the gpCN Metropolis kernel as in Corollary \ref{cor:gpCN}. Then the associated Markov operator $M_{\Gamma}$ is self-adjoint and positive on $L_2(\mu)$.
\end{theo}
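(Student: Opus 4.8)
The plan is to verify the hypotheses of Lemma~\ref{propo:GaussianRW_positiv} for the gpCN proposal $P_\Gamma(u,\cdot) = N(A_\Gamma u, s^2 C_\Gamma)$ and then invoke Lemma~\ref{propo:MH_positiv}. Concretely, I need to produce a bounded linear operator $B\colon \mc H \to \mc H$ with $B^2 = A_\Gamma$, $BC = CB^*$, and $D := C - BCB^*$ positive and trace class; then Lemma~\ref{propo:GaussianRW_positiv} gives positivity of the Markov operator of $P_\Gamma$ on $L_2(\mu_0)$, Lemma~\ref{propo:MH_positiv} upgrades this to positivity of $M_\Gamma$ on $L_2(\mu)$, and self-adjointness is already known from the $\mu$-reversibility established in Corollary~\ref{cor:gpCN}.

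The natural candidate for $B$ comes from the explicit formula \eqref{equ:A}: since $A_\Gamma = C^{1/2}\sqrt{I - s^2(I+H_\Gamma)^{-1}}\,C^{-1/2}$, I would set
\[
	B = B_\Gamma := C^{1/2} \left( I - s^2 (I + H_\Gamma)^{-1} \right)^{1/4} C^{-1/2},
\]
so that formally $B^2 = A_\Gamma$. The operator $I - s^2(I+H_\Gamma)^{-1}$ is bounded, self-adjoint, and positive (indeed $\geq 0$; strict positivity fails only if $s=1$, which is excluded since $s\in[0,1)$, so in fact $I - s^2(I+H_\Gamma)^{-1} \geq (1-s^2)I > 0$), hence its fourth root is a well-defined bounded self-adjoint positive operator commuting with $H_\Gamma$. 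To see that $B_\Gamma$ is actually bounded on $\mc H$ — the conjugation $C^{1/2}(\cdot)C^{-1/2}$ being the delicate point, exactly as flagged after Lemma~\ref{lem:A} — I would note that $(I - s^2(I+H_\Gamma)^{-1})^{1/4}$ is a bounded positive perturbation of the identity built from the trace-class operator $H_\Gamma$ (Proposition~\ref{propo:tildeC}), and then either reuse the argument behind Lemma~\ref{lem:A} verbatim with the fourth root in place of the square root, or — cleaner — observe that $B_\Gamma$ is itself a gpCN-type ``mean operator'': with $\tilde s^2 := 1 - \sqrt{1 - s^2(I+H_\Gamma)^{-1}}$ one can check $B_\Gamma$ has the form $C^{1/2}\sqrt{I - \tilde s^2}\,C^{-1/2}$ appearing in \eqref{equ:A}, so Lemma~\ref{lem:A} applies directly to give boundedness and the continuous extension to $\mc H$.

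For the remaining identities: $BC = CB^*$ follows because $C^{-1/2} \cdot C = C^{1/2}$ and the middle factor $(I - s^2(I+H_\Gamma)^{-1})^{1/4}$ is self-adjoint, so $BC = C^{1/2}(\cdot)^{1/4}C^{1/2} = (BC)^*$ and hence $BC = CB^*$; one computes $BCB^* = C^{1/2}\sqrt{I - s^2(I+H_\Gamma)^{-1}}\,C^{1/2}$, whence
\[
	D = C - BCB^* = C^{1/2}\left( I - \sqrt{I - s^2(I+H_\Gamma)^{-1}} \right) C^{1/2}.
\]
The bracketed operator is bounded, self-adjoint, and positive (since $\sqrt{I - s^2(I+H_\Gamma)^{-1}} \leq I$), so $D$ is positive; it is trace class because it is sandwiched between two Hilbert--Schmidt factors $C^{1/2}$ (same argument as in Proposition~\ref{propo:tildeC}). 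This verifies all hypotheses of Lemma~\ref{propo:GaussianRW_positiv}. I expect the only real obstacle to be the rigorous justification that $B_\Gamma$ extends to a bounded operator on all of $\mc H$ — the conjugation by $C^{\pm 1/2}$ is genuinely subtle, but reducing it to an application of Lemma~\ref{lem:A} (by exhibiting $B_\Gamma$ in the form already covered there) should dispatch it without repeating the technical work in Appendix~\ref{sec:proof_lem:A}.
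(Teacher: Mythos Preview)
Your proposal is correct and follows essentially the same route as the paper: choose $B = C^{1/2}(I - s^2(I+H_\Gamma)^{-1})^{1/4}C^{-1/2}$, verify $B^2 = A_\Gamma$, $BC = CB^*$, and that $D = C^{1/2}\bigl(I - \sqrt{I - s^2(I+H_\Gamma)^{-1}}\bigr)C^{1/2}$ is positive and trace class, then apply Lemmas~\ref{propo:GaussianRW_positiv} and~\ref{propo:MH_positiv}. One minor remark: your ``cleaner'' alternative of recasting $B_\Gamma$ as an $A_\Gamma$-type operator via $\tilde s^2 := 1 - \sqrt{1 - s^2(I+H_\Gamma)^{-1}}$ does not quite work as stated, since $\tilde s$ would then be operator-valued while Lemma~\ref{lem:A} takes a scalar $s\in[0,1)$; but your first suggestion --- rerunning the functional-calculus argument of Appendix~\ref{sec:proof_lem:A} with $f(z) = (1 - s^2(1+z)^{-1})^{1/4}$ in place of the square root --- goes through verbatim and is in fact more explicit than the paper, which simply writes down $B$ without commenting on its boundedness.
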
 
\begin{proof}
It is enough to verify the assumptions of Lemma~\ref{propo:GaussianRW_positiv} for the gpCN proposal.
Recall that $P_{\Gamma}(u,\cdot) = N(A_{\Gamma}u, s^2C_\Gamma)$ 
which is $\mu_0$-reversible by construction with bounded 
$A_{\Gamma} = C^{1/2} \sqrt{I-s^2 (I+H_\Gamma)^{-1}} C^{-1/2}$.
By choosing
\[
 B := C^{1/2} \sqrt[4]{I-s^2 (I+H_\Gamma)^{-1}} C^{-1/2},
\]
we obtain $B^2 = A_{\Gamma}$ and $BC = CB^*$.
Moreover,  
\[
 	D= C-BCB^* = C^{1/2} (I-\sqrt{I-s^2(I+H_\Gamma)^{-1}}) C^{1/2}.
\]
The eigenvalues of $I-\sqrt{I-s^2(I+H_\Gamma)^{-1}}$ take the form 
$1 - \sqrt{1-\frac{s^2}{1+\lambda}} \geq 0$ with $\lambda\geq0$ being an eigenvalue of $H_\Gamma$.
Thus, $I-\sqrt{I-s^2(I+H_\Gamma)^{-1}}$ 
is positive and bounded which yields $D$ being positive and trace class since $D$ is then a product of two Hilbert-Schmidt and one bounded operator.
Thus, the conditions of Lemma \ref{propo:GaussianRW_positiv} are satisfied and the assertion follows.
\end{proof}

\subsection{Density between pCN and gpCN proposal} \label{sec:gpCN_density}
In this section we show 
that for any state $u\in \mc H$ the gpCN proposal 
is equivalent to the pCN proposal in the sense of measures.
Moreover, we will also derive an integrability result for the corresponding density.
For proving the equivalence we need the following technical result.

\begin{lem}\label{lem:Delta_Gamma}
Let the assumptions of Corollary \ref{cor:gpCN} be satisfied and define the bounded, linear operator $\Delta_\Gamma:\mc H \to \mc H$ by
\begin{equation}\label{equ:Delta_Gamma}
\Delta_\Gamma := A_{0}-A_{\Gamma} = \sqrt{1-s^2} I - C^{1/2} \; \sqrt{I - s^2 \left( I + H_\Gamma \right)^{-1} }  C^{-1/2}.
\end{equation}
Then $\mathrm{Im}\, \Delta_\Gamma \subseteq \mathrm{Im}\, C^{1/2}$, i.e., $C^{-1/2}\Delta_\Gamma$ is a bounded operator on $\mc H$.
\end{lem}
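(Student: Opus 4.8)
The plan is to show that $\Delta_\Gamma$ factors as $C^{1/2}$ composed on the right with a bounded operator; this yields $\mathrm{Im}\,\Delta_\Gamma\subseteq\mathrm{Im}\,C^{1/2}$ and the boundedness of $C^{-1/2}\Delta_\Gamma$ simultaneously. The obstacle to keep in mind is exactly the remark made after Lemma~\ref{lem:A}: for a generic bounded $B$ the operator $C^{1/2}BC^{-1/2}$ need not extend boundedly, so the argument has to exploit the special algebraic form of $\Delta_\Gamma$, namely that the operator function generating it vanishes at the origin.

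Concretely, set $R_\Gamma:=\sqrt{I-s^2(I+H_\Gamma)^{-1}}$, a bounded self-adjoint operator. By \eqref{equ:A} and Lemma~\ref{lem:A}, $A_\Gamma$ is the continuous extension to $\mc H$ of $C^{1/2}R_\Gamma C^{-1/2}$, a priori only defined on $\mathrm{Im}\,C^{1/2}$, while $A_0=\sqrt{1-s^2}\,I$. Since scalars commute with $C^{\pm1/2}$, on $\mathrm{Im}\,C^{1/2}$ we can write $\Delta_\Gamma=A_0-A_\Gamma=C^{1/2}T_\Gamma C^{-1/2}$ with the bounded operator $T_\Gamma:=\sqrt{1-s^2}\,I-R_\Gamma=f(H_\Gamma)$, where $f(\lambda):=\sqrt{1-s^2}-\sqrt{1-s^2/(1+\lambda)}$.

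The crucial observation is that $f(0)=0$. Consequently $g(\lambda):=f(\lambda)/\lambda$, extended continuously to $\lambda=0$, is a bounded continuous function on the compact spectrum of $H_\Gamma\in\mc L_+(\mc H)$ (Proposition~\ref{propo:tildeC}), so $g(H_\Gamma)\in\mc L(\mc H)$, and by the functional calculus together with \eqref{equ:tildeC_2} one has $T_\Gamma=g(H_\Gamma)\,H_\Gamma=g(H_\Gamma)\,C^{1/2}\Gamma\,C^{1/2}=S_\Gamma\,C^{1/2}$, where $S_\Gamma:=g(H_\Gamma)\,C^{1/2}\,\Gamma\in\mc L(\mc H)$ is a product of three bounded operators. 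Substituting this into the identity for $\Delta_\Gamma$ gives $\Delta_\Gamma x=C^{1/2}S_\Gamma C^{1/2}C^{-1/2}x=C^{1/2}S_\Gamma x$ for every $x\in\mathrm{Im}\,C^{1/2}$. Since $\Delta_\Gamma$ and $C^{1/2}S_\Gamma$ are both bounded on $\mc H$ and coincide on the dense subspace $\mathrm{Im}\,C^{1/2}$, they are equal; hence $\Delta_\Gamma=C^{1/2}S_\Gamma$, so $\mathrm{Im}\,\Delta_\Gamma\subseteq\mathrm{Im}\,C^{1/2}$ and $C^{-1/2}\Delta_\Gamma=S_\Gamma$ is bounded.

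The only genuinely non-routine point is spotting that $A_0-A_\Gamma$ is generated by a function with $f(0)=0$, which is what allows one factor of $C^{1/2}$ to be pulled out of $H_\Gamma=C^{1/2}\Gamma C^{1/2}$ and thereby cancel the $C^{-1/2}$; after that, everything reduces to the continuous functional calculus for the bounded positive operator $H_\Gamma$, to citing Lemma~\ref{lem:A}, and to the density of the Cameron--Martin space. The smoothness of $f$ near $\lambda=0$ needed for $g$ to be bounded on $\mathrm{spec}(H_\Gamma)$ is elementary.
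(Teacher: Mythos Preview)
Your proof is correct and rests on the same key observation as the paper's: the function $f(\lambda)=\sqrt{1-s^2}-\sqrt{1-s^2/(1+\lambda)}$ satisfies $f(0)=0$, so one factor of $C^{1/2}$ can be extracted from $H_\Gamma=C^{1/2}\Gamma C^{1/2}$ to cancel the $C^{-1/2}$. The execution, however, is genuinely different. The paper works with the \emph{holomorphic} functional calculus: it chooses a complex neighborhood $V$ of $\spec(H_\Gamma)$ on which $f$ is analytic, approximates $f$ by polynomials $p_n$ with $p_n(0)=0$, rewrites $p_n(H_\Gamma)=C^{1/2}\Gamma^{1/2}q_{n-1}(\Gamma^{1/2}C\Gamma^{1/2})\Gamma^{1/2}C^{1/2}$ with $q_{n-1}(z)=p_n(z)/z$, and then proves that the $q_n$ form a Cauchy sequence in $C(\partial V)$ by exploiting $\min_{\zeta\in\partial V}|\zeta|=\varepsilon>0$. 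Your route is more elementary: you stay with the \emph{continuous} functional calculus for the bounded self-adjoint operator $H_\Gamma$, observe directly that $g(\lambda)=f(\lambda)/\lambda$ extends continuously to $\lambda=0$ (with value $f'(0)$), and factor $f(H_\Gamma)=g(H_\Gamma)H_\Gamma=g(H_\Gamma)C^{1/2}\Gamma\cdot C^{1/2}$. This avoids the complex contour, the polynomial approximation, the passage to the symmetrized operator $\Gamma^{1/2}C\Gamma^{1/2}$, and the Cauchy-sequence argument for the $q_n$. What the paper's approach buys is a little more: since it passes through $\Gamma^{1/2}C\Gamma^{1/2}$ it also yields an explicit expression for $C^{-1/2}\Delta_\Gamma$ that is manifestly of the form $C^{1/2}\cdot(\text{bounded})$, hence Hilbert--Schmidt, though this extra information is not used downstream.
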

The proof of this lemma can be found in Appendix \ref{sec:proof_lem:Delta_Gamma}.
It is similar to the proof of Lemma \ref{lem:A} and again rather technical.
However, Lemma \ref{lem:Delta_Gamma} 
ensures that we can apply the Cameron-Martin theorem, 
Theorem~\ref{thm: Cameron_Martin_form} in Appendix \ref{sec:Gaussian}, 
in the proof of the following result.
The other main tool for 
deriving the next theorem is a 
variant of the Feldman-Hajek theorem 
as stated in Theorem \ref{thm: Feldman_form} in Appendix \ref{sec:Gaussian}.

\begin{theo}[Density of pCN w.r.t. gpCN] \label{theo:gpCN_density}
With the notation and assumptions of Corollary \ref{cor:gpCN} holds the following.
\begin{enumerate}
\item
The measures $\mu_0 = N(0,C)$ 
and $\mu_\Gamma = N(0,C_\Gamma)$ are equivalent with
\begin{equation} \label{eq: density_fh}
	\pi_{\Gamma}(v) := \frac {\d \mu_{0}}{\d \mu_{\Gamma}}(v) 
	= \frac {\exp\left( \frac 12 \langle \Gamma v, v \rangle\right)}{\sqrt{\det(I+H_\Gamma)}}.
\end{equation}
\item
For $u\in\mc H$ the measures $P_{0}(u,\cdot)$ 
and $P_{\Gamma}(u,\cdot)$ are equivalent with
\begin{equation} \label{eq: density_fh_cm}
	\frac {\d P_{0}(u)}{\d P_{\Gamma}(u)}(v) 
	= \pi_{\mathrm{CM}}\Big( \Delta_\Gamma u, 
	\frac {1}s(v - A_{\Gamma}u) \Big) \, 
	\pi_{\Gamma}\Big( \frac {1}s(v - A_{\Gamma}u) \Big)
\end{equation}
where $\Delta_\Gamma$ as in \eqref{equ:Delta_Gamma} and 
\begin{equation} \label{eq: density_cm}
	\pi_{\mathrm{CM}}(h,v) := \exp\left( - \frac 12 \|C^{-1/2} h\|^2 + \langle C^{-1} h, v \rangle \right).
\end{equation}
(The subscript in $\pi_{\mathrm{CM}}$ indicates the Cameron-Martin formula.)
\end{enumerate}
\end{theo}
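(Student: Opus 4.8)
The plan is to treat the two parts sequentially, since Part 2 builds on Part 1. For Part 1, I would apply the Feldman--Hajek-type criterion (Theorem~\ref{thm: Feldman_form}) to the centered Gaussians $\mu_0 = N(0,C)$ and $\mu_\Gamma = N(0,C_\Gamma)$. Using the factorization $C_\Gamma = C^{1/2}(I+H_\Gamma)^{-1}C^{1/2}$ from \eqref{equ:tildeC_2}, the ``whitened'' operator comparing the two covariances is $C^{-1/2}C_\Gamma C^{-1/2} = (I+H_\Gamma)^{-1}$, and the equivalence condition reduces to the statement that $(I+H_\Gamma)^{-1} - I = -H_\Gamma(I+H_\Gamma)^{-1}$ is Hilbert--Schmidt; this holds because $H_\Gamma$ is trace class by Proposition~\ref{propo:tildeC}, hence Hilbert--Schmidt, and $(I+H_\Gamma)^{-1}$ is bounded with norm $\le 1$. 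Given equivalence, the Radon--Nikodym derivative is the classical finite-dimensional-looking formula: diagonalizing $H_\Gamma$ with eigenvalues $\lambda_k \ge 0$, one gets $\frac{\d\mu_0}{\d\mu_\Gamma}(v) = \prod_k (1+\lambda_k)^{1/2} \exp(-\frac12 \frac{\lambda_k}{1+\lambda_k} y_k^2)$ in the appropriate coordinates, which I would then rewrite intrinsically. The identity $\langle \Gamma v, v\rangle = \langle H_\Gamma C^{-1/2}v, C^{-1/2}v\rangle$ together with $\sum_k \frac{\lambda_k}{1+\lambda_k} y_k^2$ matching $\langle \Gamma v, v\rangle$ minus the correction, and $\prod_k(1+\lambda_k) = \det(I+H_\Gamma)$, yields \eqref{eq: density_fh}. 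Some care with the exponent is needed: the product formula naturally produces $\exp(-\frac12\langle H_\Gamma(I+H_\Gamma)^{-1}\cdot,\cdot\rangle)$ in whitened coordinates, and one checks that this equals $\exp(\frac12\langle\Gamma v,v\rangle)$ divided by a Gaussian-type normalization only after combining with the formal prefactor; I would verify this by direct computation in the eigenbasis.

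For Part 2, the idea is to decompose the change of measure from $P_0(u,\cdot) = N(A_0 u, s^2 C)$ to $P_\Gamma(u,\cdot) = N(A_\Gamma u, s^2 C_\Gamma)$ into a mean shift followed by a covariance change. First I would rescale: under the affine map $v \mapsto w = \frac1s(v - A_\Gamma u)$, the measure $P_\Gamma(u,\cdot)$ pushes forward to $N(0, C_\Gamma) = \mu_\Gamma$, while $P_0(u,\cdot)$ pushes forward to $N(\frac1s(A_0 u - A_\Gamma u), C) = N(\frac1s\Delta_\Gamma u, C)$. So it suffices to compute $\frac{\d N(\frac1s\Delta_\Gamma u, C)}{\d\mu_\Gamma}$. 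I would factor this through $\mu_0 = N(0,C)$ via the chain rule:
\[
\frac{\d N(\tfrac1s\Delta_\Gamma u, C)}{\d\mu_\Gamma}(w) = \frac{\d N(\tfrac1s\Delta_\Gamma u, C)}{\d\mu_0}(w) \cdot \frac{\d\mu_0}{\d\mu_\Gamma}(w).
\]
The second factor is $\pi_\Gamma(w)$ from Part 1. The first factor is a pure mean-shift, handled by the Cameron--Martin theorem (Theorem~\ref{thm: Cameron_Martin_form}), which applies precisely because Lemma~\ref{lem:Delta_Gamma} guarantees $\Delta_\Gamma u \in \mathrm{Im}\,C^{1/2}$, i.e.\ the shift lies in the Cameron--Martin space; this factor equals $\pi_{\mathrm{CM}}(\frac1s\Delta_\Gamma u, w)$. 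Finally, by the scaling invariance $\pi_{\mathrm{CM}}(\frac1s h, w) = \pi_{\mathrm{CM}}(h, \frac1s w)\cdot(\text{something})$ — actually I would instead absorb the $\frac1s$ directly, noting $\pi_{\mathrm{CM}}$ as defined in \eqref{eq: density_cm} scales so that $\pi_{\mathrm{CM}}(\tfrac1s h, w)$ with $w = \tfrac1s(v-A_\Gamma u)$ needs to be reorganized; the claimed form has $\pi_{\mathrm{CM}}(\Delta_\Gamma u, \frac1s(v-A_\Gamma u))$, so I would check $-\frac12\|C^{-1/2}\tfrac1s\Delta_\Gamma u\|^2 + \langle C^{-1}\tfrac1s\Delta_\Gamma u, w\rangle$ versus $-\frac12\|C^{-1/2}\Delta_\Gamma u\|^2 + \langle C^{-1}\Delta_\Gamma u, \tfrac1s(v-A_\Gamma u)\rangle$ — these differ, so the precise bookkeeping of the $s$ factors is the one place demanding attention. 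Substituting back $w = \frac1s(v-A_\Gamma u)$ and assembling the two factors gives \eqref{eq: density_fh_cm}.

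\textbf{Main obstacle.} I expect the genuine mathematical content to be Part 1 and the invocation of Lemma~\ref{lem:Delta_Gamma}, both of which are essentially handed to us (the Hilbert--Schmidt condition via Proposition~\ref{propo:tildeC}, and the Cameron--Martin membership via Lemma~\ref{lem:Delta_Gamma}). The real care is purely in the \emph{bookkeeping}: tracking how the stepsize factor $s$ propagates through the affine rescaling, making sure the exponential prefactors and the $\det(I+H_\Gamma)^{1/2}$ land exactly as written in \eqref{eq: density_fh} and \eqref{eq: density_fh_cm}, and correctly composing the two Radon--Nikodym derivatives via the chain rule on the intermediate measure $\mu_0$. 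None of this is deep, but it is the kind of calculation where a misplaced factor of $s$ or a sign in the Cameron--Martin exponent is easy to commit, so I would carry out the eigenbasis computation for Part 1 explicitly and then verify the $s$-scaling in Part 2 by pushing forward a test Gaussian random variable, as is done in the proof of Lemma~\ref{propo:GaussianRW_positiv}.
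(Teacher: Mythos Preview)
Your approach is essentially identical to the paper's: Part~1 via Theorem~\ref{thm: Feldman_form} applied to $T_\Gamma = I - (I+H_\Gamma)^{-1}$, and Part~2 by factoring through the intermediate measure $N(A_\Gamma u, s^2C)$ (the paper calls this $K_\Gamma(u,\cdot)$), so that one factor is a pure covariance change handled by Part~1 plus a change of variables, and the other is a pure mean shift handled by Cameron--Martin together with Lemma~\ref{lem:Delta_Gamma}.

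Your flagged concern about the $s$-bookkeeping is not a gap in your argument but a genuine discrepancy with the paper's stated formula. Carrying out your pushforward $w = \tfrac1s(v-A_\Gamma u)$ correctly, $P_0(u,\cdot)$ becomes $N(\tfrac1s\Delta_\Gamma u, C)$, and Cameron--Martin then yields $\pi_{\mathrm{CM}}\big(\tfrac1s\Delta_\Gamma u,\; \tfrac1s(v-A_\Gamma u)\big)$ for the mean-shift factor, not $\pi_{\mathrm{CM}}\big(\Delta_\Gamma u,\; \tfrac1s(v-A_\Gamma u)\big)$ as printed in \eqref{eq: density_fh_cm}. (Equivalently: the paper's second invocation of Lemma~\ref{lem: change_of_variables} is applied to two Gaussians with \emph{different} means, a case in which the lemma as stated does not literally hold.) This is a harmless typo for the rest of the paper --- in Theorem~\ref{theo:int_rho} it only rescales the constant $b$ by $1/s^2$ --- but you should trust your computation and record the first argument of $\pi_{\mathrm{CM}}$ as $\tfrac1s\Delta_\Gamma u$.
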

\begin{proof}
We prove \eqref{eq: density_fh} by verifying the assumptions of Theorem \ref{thm: Feldman_form} from Appendix \ref{sec:Gaussian}.
We observe
\[
	I - C^{-1/2} C_\Gamma C^{-1/2} = I - (I+H_\Gamma)^{-1}
\]
and set $T_\Gamma :=I - (I+H_\Gamma)^{-1}$.
The eigenvalues $(t_n)_{n\in\bbN}$ of the self-adjoint operator $T_\Gamma$ are given by
\[
	t_n = 1 - \frac 1{1+\lambda_n}= \frac {\lambda_n}{1+\lambda_n} < 1
\]
where $(\lambda_n)_{n\in\bbN}$ are the eigenvalues of the positive 
trace class operator $H_\Gamma$.
Thus,  $T_\Gamma$ is also trace class and satisfies $\langle T_\Gamma u, u\rangle < \|u\|^2$ for any $u\in\mc H$.
Then, the assertion follows by Theorem \ref{thm: Feldman_form} and 
\[
	T_\Gamma (I-T_\Gamma)^{-1} = \left( I - (I+H_\Gamma)^{-1} \right) (I+H_\Gamma) = H_\Gamma
\]
as well as
\[
	\langle H_\Gamma \, C^{-1/2}v, C^{-1/2}v \rangle = \langle \Gamma v, v \rangle \qquad \forall v \in \mc H.
\]
To show the equivalence of $P_{0}(u,\cdot)$ and $P_{\Gamma}(u,\cdot)$ for any $u\in\mc H$ we introduce the auxiliary kernel $K_{\Gamma}(u,\cdot) = N(A_{\Gamma} u, s^2C)$.
The first assertion and a simple change of variables, see Lemma \ref{lem: change_of_variables} in the appendix, lead to
\[
\frac {\d K_{\Gamma}(u)}{\d P_{\Gamma}(u)}(v) 
= \pi_{\Gamma}\left( \frac 1s \left[v - A_{\Gamma}u\right] \right),  \qquad u,v\in\mc H.
\]
Thus, it remains to prove the equivalence of $K_{\Gamma}(u,\cdot)$ and $P_{0}(u,\cdot)$ for any $u\in \mc H$.
By the Cameron-Martin formula, see Theorem \ref{thm: Cameron_Martin_form} in Appendix \ref{sec:Gaussian}, 
this holds iff 
\[
	\Image(A_{\Gamma} - \sqrt{1-s^2} I) \subseteq \Image(C^{1/2})
\]
which was shown in Lemma \ref{lem:Delta_Gamma}.
Now Theorem \ref{thm: Cameron_Martin_form} 
combined with a change of variables, see Lemma \ref{lem: change_of_variables}, then yields
\[
	\frac{\d P_{0}(u)}{\d K_{\Gamma}(u)}(v) 
	= \pi_{\mathrm{CM}}\left( [ \sqrt{1-s^2} I-A_{\Gamma}] u, \frac{1}{s}(v - A_{\Gamma}u) \right)
\]
and the assertion follows by
\[
	\frac{\d P_{0}(u)}{\d P_{\Gamma}(u)}(v) 
	=
	\frac{\d P_{0}(u)}{\d K_{\Gamma}(u)}(v)
	\frac{\d K_{\Gamma}(u)}{\d P_{\Gamma}(u)}(v).
	\qedhere
\]
\end{proof}

Note that Theorem \ref{theo:gpCN_density} implies that for any $\Gamma_1, \Gamma_2\in\mc L_+(\mc H)$ there exists a density between the two gpCN proposals $P_{\Gamma_1}(u)$ and $P_{\Gamma_2}(u)$.
However, for the application
of Theorem \ref{theo:comparison_gap} we
still have to verify condition \eqref{eq: kappa}.
This is partly addressed in the following result.

\begin{theo}[Integrability of gpCN density] \label{theo:int_rho}
Let the assumptions of Lemma \ref{lem:Delta_Gamma} be satisfied and set
\[
	\rho_{\Gamma}(u, v) := \frac{\d P_{0}(u)}{\d P_{\Gamma}(u)}(v), \qquad u,v\in\mc H.
\]
Then, for any $0 < p < 1 + \frac {1}{2\|H_\Gamma\|}$ 
there exist constants $c = c(p, H_\Gamma) < \infty$ 
and $b = b(p, \|C^{-1/2}\Delta_\Gamma\|) < \infty$ such that
\[
	\int_{\mc H} \rho^{p}_{\Gamma}(u, v) \, P_{\Gamma}(u, \d v) 
	\leq c \, \exp\left(\frac{b}{2} \|u\|^2\right).
\]
\end{theo}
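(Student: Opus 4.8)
The plan is to start from the explicit formula for $\rho_\Gamma$ obtained in Theorem~\ref{theo:gpCN_density}, namely
\[
\rho_{\Gamma}(u,v) = \pi_{\mathrm{CM}}\Big( \Delta_\Gamma u, \tfrac1s(v - A_\Gamma u)\Big)\,\pi_\Gamma\Big(\tfrac1s(v-A_\Gamma u)\Big),
\]
and to substitute $w := \tfrac1s(v - A_\Gamma u)$, which under $v\sim P_\Gamma(u,\cdot) = N(A_\Gamma u, s^2 C_\Gamma)$ means $w \sim N(0, C_\Gamma)$. This reduces the integral to $\int_{\mc H} \pi_{\mathrm{CM}}(\Delta_\Gamma u, w)^p\,\pi_\Gamma(w)^p\, N(0,C_\Gamma)(\d w)$. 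Plugging in the formulas \eqref{eq: density_fh} and \eqref{eq: density_cm}, the integrand is $\det(I+H_\Gamma)^{-p/2}\exp\big(p\langle C^{-1}\Delta_\Gamma u, w\rangle - \tfrac p2\|C^{-1/2}\Delta_\Gamma u\|^2 + \tfrac p2\langle \Gamma w, w\rangle\big)$, so the task is to integrate a Gaussian-type exponential $\exp\big(\tfrac p2\langle \Gamma w, w\rangle + \langle \ell, w\rangle\big)$ against $N(0,C_\Gamma)$, where $\ell = p C^{-1}\Delta_\Gamma u$ is linear in $u$.

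The key computation is therefore a Gaussian integral of the form $\int \exp(\tfrac12\langle Sw, w\rangle + \langle\ell, w\rangle)\, N(0,C_\Gamma)(\d w)$ with $S = p\Gamma$. I would diagonalize in the Cameron-Martin geometry: writing $w = C_\Gamma^{1/2}z$ with $z\sim N(0,I)$, the quadratic term becomes $\tfrac12\langle C_\Gamma^{1/2} S C_\Gamma^{1/2} z, z\rangle$. Since $C_\Gamma^{1/2} = C^{1/2}(I+H_\Gamma)^{-1/2}C^{1/2}$ by \eqref{equ:tildeC_2}, one checks $C_\Gamma^{1/2}\Gamma C_\Gamma^{1/2}$ is similar to $(I+H_\Gamma)^{-1/2}H_\Gamma(I+H_\Gamma)^{-1/2}$, which is self-adjoint, positive, trace class, with eigenvalues $\lambda_n/(1+\lambda_n) < 1$ bounded by $\|H_\Gamma\|/(1+\|H_\Gamma\|)$. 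The convergence of the Gaussian integral requires $I - p\,C_\Gamma^{1/2}\Gamma C_\Gamma^{1/2} > 0$, i.e.\ $p\cdot\frac{\|H_\Gamma\|}{1+\|H_\Gamma\|} < 1$, which is exactly $p < 1 + \frac{1}{\|H_\Gamma\|}$ — slightly weaker than the stated $p < 1 + \frac{1}{2\|H_\Gamma\|}$, so the extra factor of $2$ gives room to spare (and presumably also controls the $\det(I+H_\Gamma)^{-p/2}$ normalization and the determinant $\det(I - pC_\Gamma^{1/2}\Gamma C_\Gamma^{1/2})^{-1/2}$ arising from the integral, which must be finite — this is where trace-class-ness of $H_\Gamma$, hence of $C_\Gamma^{1/2}\Gamma C_\Gamma^{1/2}$, is used via the standard determinant formula for Gaussian integrals). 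The result of the integral is $c(p,H_\Gamma)\exp(\tfrac12\langle(I-S_\Gamma)^{-1}\tilde\ell, \tilde\ell\rangle)$ for the appropriately transformed linear functional, and since $\tilde\ell$ depends linearly on $u$ through the bounded operator $C^{-1}\Delta_\Gamma$ (bounded by Lemma~\ref{lem:Delta_Gamma}), this bound is of the form $c\exp(\tfrac b2\|u\|^2)$ with $b$ depending only on $p$ and $\|C^{-1/2}\Delta_\Gamma\|$, after absorbing the $-\tfrac p2\|C^{-1/2}\Delta_\Gamma u\|^2$ term (which only helps).

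The main obstacle is making the Gaussian integral rigorous in infinite dimensions: one cannot simply write $\det(I - pC_\Gamma^{1/2}\Gamma C_\Gamma^{1/2})^{-1/2}$ without knowing this is a well-defined Fredholm determinant, and one must justify that $\exp(\tfrac p2\langle\Gamma w, w\rangle)$ — an unbounded functional of $w$ — is genuinely $N(0,C_\Gamma)$-integrable. The cleanest route is probably to reduce to a countable product over the eigenbasis of $C_\Gamma^{1/2}\Gamma C_\Gamma^{1/2}$: in that basis the integral factorizes into one-dimensional Gaussian integrals $\int_{\bbR}\exp(\tfrac p2 \tau_n z_n^2 + \ell_n z_n)\,(2\pi)^{-1/2}e^{-z_n^2/2}\d z_n = (1-p\tau_n)^{-1/2}\exp(\tfrac{\ell_n^2}{2(1-p\tau_n)})$, and then show $\prod_n(1-p\tau_n)^{-1/2} < \infty$ (using $\sum_n\tau_n < \infty$) and $\sum_n \frac{\ell_n^2}{1-p\tau_n} \leq \frac{1}{1-p\tau_*}\|\ell\|^2 \lesssim \|u\|^2$ where $\tau_* = \sup_n\tau_n = \|H_\Gamma\|/(1+\|H_\Gamma\|)$. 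I would present this factorization argument carefully and relegate the one-dimensional computation to a remark, since that is the routine part; the structural point is that everything converges precisely because $H_\Gamma$ is trace class (Proposition~\ref{propo:tildeC}) and $p$ stays below the threshold.
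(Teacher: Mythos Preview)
Your direct-integration approach is correct, but it is not the route the paper takes. After the same change of variables $w=\tfrac1s(v-A_\Gamma u)$, the paper absorbs one factor of $\pi_\Gamma$ into the measure via $\pi_\Gamma\,\d\mu_\Gamma=\d\mu_0$ and then applies Cauchy--Schwarz under $\mu_0$:
\[
\int_{\mc H}\pi_{\mathrm{CM}}^{p}(\Delta_\Gamma u,w)\,\pi_\Gamma^{p-1}(w)\,\mu_0(\d w)
\;\leq\;\Big(\int_{\mc H}\pi_{\mathrm{CM}}^{2p}(\Delta_\Gamma u,w)\,\mu_0(\d w)\Big)^{1/2}\Big(\int_{\mc H}\pi_\Gamma^{2p-2}(w)\,\mu_0(\d w)\Big)^{1/2}.
\]
Each factor is then evaluated by a ready-made identity, \eqref{equ:int_Wu} and \eqref{equ:int_Tu} respectively, giving $b=(2p^2-p)\|C^{-1/2}\Delta_\Gamma\|^2$ and $c^2=\big(\det(I-(2p-2)H_\Gamma)\det(I+H_\Gamma)^{2p-2}\big)^{-1/2}$. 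The Cauchy--Schwarz step is exactly the source of the factor $2$ that puzzled you: the second integral converges iff $(2p-2)\|H_\Gamma\|<1$, i.e.\ $p<1+\tfrac1{2\|H_\Gamma\|}$. Your direct computation avoids this loss and in fact yields the larger range $p<1+\tfrac1{\|H_\Gamma\|}$, at the price of having to justify a combined quadratic-plus-linear Gaussian integral in infinite dimensions rather than quoting two off-the-shelf formulas; note also that your $b$ then picks up an extra dependence on $\|H_\Gamma\|$ through the factor $(1-p\tau_*)^{-1}$, whereas the paper's $b$ does not. One small slip in your sketch: \eqref{equ:tildeC_2} gives $C_\Gamma$, not $C_\Gamma^{1/2}$, so $C_\Gamma^{1/2}\neq C^{1/2}(I+H_\Gamma)^{-1/2}C^{1/2}$ in general; your eigenvalue claim survives, however, if you use the non-symmetric factorization $w=C^{1/2}(I+H_\Gamma)^{-1/2}z$ with $z\sim N(0,I)$, for which the quadratic form becomes exactly $\langle(I+H_\Gamma)^{-1/2}H_\Gamma(I+H_\Gamma)^{-1/2}z,z\rangle$.
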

\begin{proof}
We employ the same notation as in Theorem \ref{theo:gpCN_density}, i.e., 
let $\mu_0 = N(0,C)$ 
and $\mu_\Gamma = N(0,C_\Gamma)$ as well as $\pi_{\Gamma}$ and $\pi_{\mathrm{CM}}$ be as in \eqref{eq: density_fh} and \eqref{eq: density_cm}, respectively.
By Theorem \ref{theo:gpCN_density} we know
\[
	\rho_{\Gamma}(u, v) = \pi_{\mathrm{CM}}\Big( \Delta_\Gamma u, \frac {1}s(v - A_{\Gamma}u) \Big)\; 
	\pi_{\Gamma}\Big( \frac {1}s(v - A_{\Gamma}u) \Big).
\]
By first applying a change of variables, see Lemma \ref{lem: change_of_variables}, and then the 
Cauchy-Schwarz inequality we obtain
\begin{align*}
 	\int_{\mc H} \rho_{\Gamma}^{p}(u, v) \, P_{\Gamma}(u, \d v)
 	& = \int_{\mc H} \pi^{p}_{\mathrm{CM}}(\Delta_\Gamma u, v) \; \pi^{p}_{\Gamma}(v) \, \mu_\Gamma(\d v)\\
 	& = \int_{\mc H} \pi^{p}_{\mathrm{CM}}(\Delta_\Gamma u, v) \; \pi^{p-1}_{\Gamma}(v) \, \mu_0(\d v)\\
	& \leq \left( \int_{\mc H} \pi^{2p}_{\mathrm{CM}}(\Delta_\Gamma u, v) \mu_0(\d v) \right)^{1/2} \; 
	\left( \int_{\mc H} \pi^{2p-2}_{\Gamma}(v) \mu_0(\d v) \right)^{1/2}.
\end{align*}
Furthermore, we have by applying \eqref{equ:int_Wu} from Appendix \ref{sec:Gaussian}
\begin{align*}
	\int_{\mc H} \pi^{2p}_{\mathrm{CM}}\Big( \Delta_\Gamma u, v\Big) \, \mu_0(\d v)
	& = \int_{\mc H} \e^{- \frac {2p}2 \|C^{-1/2}\Delta_\Gamma u \|^2} \e^{2p\; \langle C^{-1} \Delta_\Gamma u, v\rangle} \, \mu_0(\d v)\\
	& = \exp\left((2p^2-p) \|C^{-1/2} \Delta_\Gamma u\|^2\right).
\end{align*}
We apply
$\|C^{-1/2} \Delta_\Gamma u\| \leq \|C^{-1/2} \Delta_\Gamma\|\, \|u\|$ 
and set 
\[
	b := (2p^2-p)\,\|C^{-1/2} \Delta_\Gamma\|.
\]
Note, that $b\leq 0$ for $p\leq \frac 12$.
Due to the assumptions on $p$ we have
\[
	\langle (2p-2)H_\Gamma v, v\rangle < \frac {\langle H_\Gamma v, v\rangle}{\|H_\Gamma\|} \leq \|v\|^2, \qquad v\in \mc H.
\]
Thus, we can apply \eqref{equ:int_Tu} from Appendix \ref{sec:Gaussian} and get
\begin{align*}
	\int_{\mc H} \pi^{2p-2}_{\Gamma}(v) \mu_0(\d v)
	& = \int_{\mc H} \frac {\exp\left( \frac 12 \langle (2p-2)H_\Gamma \, C^{-1/2}v, C^{-1/2}v \rangle\right)}{\det(I+H_\Gamma)^{(2p-2)/2}} \, \mu_0(\d v)\\
	& = \left( \det(I-(2p-2)H_\Gamma) \; \det(I+H_\Gamma)^{2p-2}\right)^{-1/2}\\
	& =: c^2.
\end{align*}
Since $H_\Gamma$ is positive and trace class, $\det(I+H_\Gamma)$ 
is well-defined (see Appendix \ref{sec:Gaussian}) and $\det(I+H_\Gamma) \in [1,\infty)$. 
Furthermore, due to $\langle (2p-2)H_\Gamma v, v\rangle < \|v\|^2$, 
the eigenvalues of $(2p-2)H_\Gamma$ lie within $[0,1)$ 
which ensures that $\det(I-(2p-2)H_\Gamma)>0$ and, hence $0 < c^2<\infty$.
This proves the assertion.
\end{proof}
Thus, the above theorem allows 
us to estimate the integral in \eqref{eq: kappa}. 
We obtain for $0 < p < 1 + 1/(2\|H_\Gamma\|)$ that
\begin{align*}
	\int_A \int_{A^c} \rho_{\Gamma}(u;v)^{p} 
	P_{\Gamma}(u,\d v)\,\mu(\d u) 
	\leq c\, \int_A \exp\left(\frac {b}2\, \|u\|^2\right) \mu(\d u).
\end{align*}
Unfortunately, if we divide the right-hand side by $\mu(A)$ 
and take the supremum over all $\{A: 0 < \mu(A) \leq 0.5\}$ this is 
unbounded.
In the next section we introduce restrictions of the target measure for which we 
can circumvent this problem.

\subsection{Restrictions of the target measure} \label{sec:restrict}

In order to show boundedness of $\kappa_p$ from \eqref{eq: kappa} 
for the gpCN proposal we consider restrictions of the target measure to bounded sets.
For appropriately chosen sets, the restricted measures become arbitrarily close to the target measure.
Let $R\in (0,\infty]$ and set
\[
 \mc H_R := \{ u\in \mc H \colon \norm{u}{}< R \}.
\]
\begin{defi}[Restricted measure] \label{defi:mu_R}
Let $\mu$ be a probability measure on $(\mc H, \mc B(\mc H))$ and $R\in (0,\infty]$. 
We define its restriction to $\mc H_R$ as the probability measure $\mu_R$ on $\mc H$ given by
\begin{equation}\label{equ:mu_R}
	\mu_R(\d u) := \frac{1}{\mu(\mc H_R)} \mathbf{1}_{\mc H_R}(u) \mu(\d u).
\end{equation}
\end{defi}
For sufficiently large $R$ the measure $\mu_R$ is close to $\mu$, because
\[
	\| \mu_R - \mu \|_{\text{tv}} 
	= \int_{\mc H} \left|\frac{\d \mu_R}{\d \mu}(u) - 1 \right| \d\mu( u) 
	= \mu(\mc H_R^c) + 1 - \mu(\mc H_R)
	= 2\mu(\mc H_R^c)
\]
and since $\mu$ is a probability 
measure on $(\mc H,\mathcal{B}(\mc H))$ 
there exists for any $\varepsilon>0$ a number $R>0$ 
such that $2\mu(\mc H_R^c)< \varepsilon$.
Let us mention here that restricted measures appear, 
for example, also in \cite[Equation (3.5)]{Bou-Rabee13}
and in the recent work \cite{HuYaLi2015},
in order to analyze the convergence of Metropolis-Hastings based algorithms.

We ask now whether good convergence properties of a $\mu$-reversible transition kernel $K$ are inherited on a 
suitably modified $\mu_R$-reversible
transition kernel $K_R$.
\begin{defi}[Restricted transition kernel] \label{defi:K_R}
Let $K$ be a transition kernel on $\mc H$ and $R\in (0,\infty]$. 
We define its restriction to $\mc H_R$ as the following transition kernel 
$K_R\colon \mc H \times \mc B(\mc H) \to [0,1]$
given by
\begin{equation}\label{equ:K_R}
	K_R(u,\d v) := \mathbf{1}_{\mc H_R}(v)\, K(u, \d v) + K(u, \mc H_R^c) \, \delta_u(\d v).
\end{equation}
\end{defi}
Note that if $K$ is $\mu$-reversible, then $K_R$ is $\mu_R$-reversible and if $K$ is of Metropolis form  \eqref{eq: metro_kern}, then so is $K_R$.

\begin{propo} \label{propo: K_R_reversible}
Let $\mu$ be a probability measure on $(\mc H, \mc B(\mc H))$ and $K$ be a $\mu$-reversible transition kernel.
Then for any $R>0$ the transition kernel $K_R$ given in \eqref{equ:K_R} is 
$\mu_R$-reversible with $\mu_R$ as in \eqref{equ:mu_R}.
Moreover, for a Metropolis kernel $M$ of the form \eqref{eq: metro_kern} 
the corresponding restricted kernel $M_R$ is again a Metropolis kernel
\[
	M_R(u, \d v) =  \alpha_R(u,v)P(u,\d v) + \delta_u(\d v)\left(1-\int_{\mc H} \alpha_R(u,w)P(u,\d w)\right)
\]
with $\alpha_R(u,v) := \mathbf{1}_{\mc H _R}(v) \alpha(u,v)$.
\end{propo}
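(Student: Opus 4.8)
The plan is to prove the two claims in Proposition~\ref{propo: K_R_reversible} in turn, both by direct computation with the defining formula~\eqref{equ:K_R}.

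\textbf{Step 1: $\mu_R$-reversibility of $K_R$.} I would write the measure $K_R(u,\d v)\,\mu_R(\d u)$ on $\mc H\times\mc H$ explicitly using~\eqref{equ:K_R} and~\eqref{equ:mu_R}. It splits into two pieces: the "off-diagonal" part $\frac1{\mu(\mc H_R)}\mathbf{1}_{\mc H_R}(u)\mathbf{1}_{\mc H_R}(v)\,K(u,\d v)\,\mu(\d u)$, and the "diagonal" part $\frac1{\mu(\mc H_R)}\mathbf{1}_{\mc H_R}(u)\,K(u,\mc H_R^c)\,\delta_u(\d v)\,\mu(\d u)$. For the off-diagonal part I invoke the $\mu$-reversibility of $K$, i.e.~\eqref{eq: det_balance_gen}, together with the symmetry of $\mathbf{1}_{\mc H_R}(u)\mathbf{1}_{\mc H_R}(v)$ in $(u,v)$; this shows it equals the corresponding expression with $u$ and $v$ swapped. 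For the diagonal part, any measure of the form $g(u)\,\delta_u(\d v)\,\nu(\d u)$ is automatically symmetric under swapping $u\leftrightarrow v$ (it is supported on the diagonal), so this piece is symmetric as soon as it is well-defined; hence $K_R(u,\d v)\,\mu_R(\d u)=K_R(v,\d u)\,\mu_R(\d v)$, which is~\eqref{eq: det_balance_gen} for $K_R$ and $\mu_R$.

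\textbf{Step 2: Metropolis form of $M_R$.} Now specialize $K=M$ of the form~\eqref{eq: metro_kern}. I substitute $M(u,\d v)=\alpha(u,v)P(u,\d v)+\delta_u(\d v)\int_{\mc H}(1-\alpha(u,w))P(u,\d w)$ into~\eqref{equ:K_R}. The term $\mathbf{1}_{\mc H_R}(v)M(u,\d v)$ produces $\mathbf{1}_{\mc H_R}(v)\alpha(u,v)P(u,\d v)=\alpha_R(u,v)P(u,\d v)$ from the absolutely continuous part, plus $\mathbf{1}_{\mc H_R}(v)\delta_u(\d v)\int(1-\alpha(u,w))P(u,\d w)$ from the atom; since $\delta_u$ forces $v=u$, the indicator $\mathbf{1}_{\mc H_R}(v)$ contributes a factor $\mathbf{1}_{\mc H_R}(u)$ there, but this cancels against an analogous contribution, so it is cleanest to collect all $\delta_u$ terms at the end. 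The remaining term $M(u,\mc H_R^c)\delta_u(\d v)$ equals $\big(\int_{\mc H_R^c}\alpha(u,w)P(u,\d w)+\int_{\mc H}(1-\alpha(u,w))P(u,\d w)\big)\delta_u(\d v)$ (the atom of $M(u,\cdot)$ lies at $u$; if $u\in\mc H_R$ it contributes nothing to $M(u,\mc H_R^c)$, and if $u\notin\mc H_R$ both $M_R(u,\cdot)$ and the claimed formula reduce to $\delta_u$ anyway). Adding the $\delta_u$-coefficients and simplifying $\int_{\mc H_R^c}\alpha(u,w)P(u,\d w)+\int_{\mc H}(1-\alpha(u,w))P(u,\d w)=1-\int_{\mc H_R}\alpha(u,w)P(u,\d w)=1-\int_{\mc H}\alpha_R(u,w)P(u,\d w)$ yields exactly the claimed expression for $M_R$, with acceptance probability $\alpha_R(u,v)=\mathbf{1}_{\mc H_R}(v)\alpha(u,v)$.

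\textbf{Main obstacle.} The computation is entirely routine; the only point requiring care is the bookkeeping of the $\delta_u$ (atomic) contributions in Step~2 — in particular making sure that the indicator factors arising from $\mathbf{1}_{\mc H_R}(v)$ evaluated on the diagonal $v=u$ are handled consistently, and checking that the formula behaves correctly both for $u\in\mc H_R$ and for $u\in\mc H_R^c$. One should also note in passing that $K_R$ is a genuine transition kernel, i.e.\ $K_R(u,\cdot)$ is a probability measure (the masses $\mathbf{1}_{\mc H_R}(v)K(u,\d v)$ and $K(u,\mc H_R^c)\delta_u$ add up to $1$) and $K_R(\cdot,A)$ is measurable, which is immediate from~\eqref{equ:K_R}.
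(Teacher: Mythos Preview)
Your proposal is correct and follows essentially the same approach as the paper: direct computation from~\eqref{equ:K_R} for both claims. The only cosmetic difference is that for Step~1 the paper verifies $\int_A K_R(u,B)\,\mu_R(\d u)=\int_B K_R(u,A)\,\mu_R(\d u)$ by testing against arbitrary $A,B\in\mc B(\mc H)$, whereas you argue directly at the level of the measure $K_R(u,\d v)\,\mu_R(\d u)$ on $\mc H\times\mc H$; these are equivalent formulations of~\eqref{eq: det_balance_gen}. Your more explicit bookkeeping of the indicator $\mathbf 1_{\mc H_R}(v)$ on the $\delta_u$-atom in Step~2 (and the case split $u\in\mc H_R$ versus $u\notin\mc H_R$) is a point the paper's proof passes over more quickly, but the computations coincide.
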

\begin{proof}
Recall that $K$ is $\mu$-reversible iff
\[
 \int_A K(u,B)\,\d\mu(u) = \int_B K(u,A)\, \d \mu(u), \qquad \forall A,B\in \mathcal{B}(\mc H).
\]
Let $A,B \in \mc B(\mc H)$. We have
\begin{align*}
& \int_A K_R(u,B) \, \d\mu_R(u)
 = \int_A K(u,B\cap \mc H_R)\,\d\mu_R(u)
    + \int_{A\cap B} K(u,\mc H_R^c)\, \d\mu_R(u)\\
& \qquad =
\frac{1}{\mu(\mc H_R)} \int_{A\cap \mc H_R} K(u,B\cap \mc H_R)\, \d\mu(u)
+\int_{A\cap B} K(u,\mc H_R^c)\,\d \mu_R(u).
\end{align*}
Because of the $\mu$-reversibility of $K$ we can interchange $A$ and $B$ which leads to the first assertion.
The second statement follows by
\begin{align*}
 M_R(u,\d v) 
 &  = \mathbf{1}_{\mc H_R}(v) M(u,\d v) + \delta_u(\d v) M(u,\mc H_R^c) \\
 &  = \mathbf{1}_{\mc H_R}(v) \alpha(u,v) P(u,\d v) \\
 & \qquad + \delta_u(\d v)\left(1-\int_{\mc H} \alpha(u,w)P(u,\d w) + \int_{\mc H_R^c} \alpha(u,w)P(u,\d w) \right)\\
 &  = \mathbf{1}_{\mc H_R}(v) \alpha(u,v)P(u,\d v) +
 \delta_u(\d v)\left(1-\int_{\mc H_R} \alpha(u,w)P(u,\d w)\right).
\end{align*}
\end{proof}

Now we ask whether a spectral gap of $K$ on $L_2(\mu)$ implies a spectral gap 
of the Markov operator associated with  
$K_R$ on $L_2(\mu_R)$.
Note that
\[
	K_R f(u) = \int_{\mc H} f(v)\, K_R(u,\d v)
	= \int_{\mc H_R} f(v)\, K(u,\d v) + f(u)\,K(u,\mc H_R^c).
\]
We have the following relation between $\|K_R \|_{\mu_R}$ and $\| K \|_{\mu}$.

\begin{lem} \label{lem:K_R_gap}
With the notation and assumptions from above holds
\begin{equation}  \label{eq: rel_spec_gap}
	\| K_R \|_{\mu_R} 
	\leq \| K \|_{\mu}  + \sup_{u\in \mc H_R} K(u,\mc H_R^c). 
\end{equation}
Furthermore, if the Markov operator $K$ is positive on $L_2(\mu)$, then $K_R$ is also positive on $L_2(\mu_R)$.
\end{lem}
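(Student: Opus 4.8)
The plan is to decompose $K_R$, viewed as an operator on $L_2(\mu_R)$, into the part that behaves like $K$ and the ``holding'' part, and to estimate each separately. Everything takes place in the Hilbert space $L_2(\mu_R)$ and uses only the definitions together with the $\mu_R$-reversibility of $K_R$ established in Proposition~\ref{propo: K_R_reversible}.

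Concretely, for $f\in L_2(\mu_R)$ I would write
\[
	K_Rf(u) = \int_{\mc H} \mathbf{1}_{\mc H_R}(v)\, f(v)\, K(u,\d v) \; + \; K(u,\mc H_R^c)\, f(u) \; =: \; \widetilde Kf(u) + Df(u),
\]
so that $D$ is multiplication by the function $g(u):=K(u,\mc H_R^c)$. Since $0\le g(u)\le \beta:=\sup_{u\in\mc H_R}K(u,\mc H_R^c)$ on $\mc H_R$ and $\mu_R$ is carried by $\mc H_R$, the operator $D$ is bounded and positive on $L_2(\mu_R)$ with $\|Df\|_{2,\mu_R}\le\beta\,\|f\|_{2,\mu_R}$ and $\langle Df,f\rangle_{\mu_R}\ge 0$.

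For $\widetilde K$ I would introduce the zero-extension $\bar f:=f\,\mathbf{1}_{\mc H_R}$ (set equal to $0$ off $\mc H_R$), which satisfies $\bar f\in L_2(\mu)$, $\|\bar f\|_{2,\mu}^2=\mu(\mc H_R)\,\|f\|_{2,\mu_R}^2$, and $\widetilde Kf(u)=K\bar f(u)$ $\mu$-a.e. The crucial point is that $f\in L_2^0(\mu_R)$ forces $\int_{\mc H}\bar f\,\d\mu=\mu(\mc H_R)\int_{\mc H}f\,\d\mu_R=0$, i.e. $\bar f\in L_2^0(\mu)$, so that $\|K\bar f\|_{2,\mu}\le\|K\|_{\mu}\,\|\bar f\|_{2,\mu}$; restricting the integral defining $\|\widetilde Kf\|_{2,\mu_R}$ to $\mc H_R$ and normalizing by $\mu(\mc H_R)^{-1}$ then gives $\|\widetilde Kf\|_{2,\mu_R}\le\|K\|_{\mu}\,\|f\|_{2,\mu_R}$. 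Combining this with the bound on $D$ via the triangle inequality, and recalling that $K_R$ maps $L_2^0(\mu_R)$ into itself, yields \eqref{eq: rel_spec_gap}.

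The positivity claim follows from the same splitting: for $f\in L_2(\mu_R)$ one has $\langle K_Rf,f\rangle_{\mu_R}=\langle\widetilde Kf,f\rangle_{\mu_R}+\langle Df,f\rangle_{\mu_R}$, where the second term is $\ge 0$ as above and the first equals $\mu(\mc H_R)^{-1}\langle K\bar f,\bar f\rangle_{\mu}\ge 0$ because $K$ is positive on $L_2(\mu)$. I do not expect a real obstacle here; the only thing requiring genuine care is the measure-theoretic bookkeeping for the extension map $f\mapsto\bar f$, in particular the fact that it sends $L_2^0(\mu_R)$ into $L_2^0(\mu)$ --- this is precisely what makes the factor $\|K\|_{\mu}$ appear in \eqref{eq: rel_spec_gap} rather than the trivial bound $1$.
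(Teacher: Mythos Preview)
Your proposal is correct and follows essentially the same route as the paper's proof: the paper introduces the zero-extension $Ef:=\mathbf{1}_{\mc H_R}f$ (your $\bar f$), observes that $E$ maps $L_2^0(\mu_R)$ into $L_2^0(\mu)$ with $\|Ef\|_{2,\mu}=\sqrt{\mu(\mc H_R)}\,\|f\|_{2,\mu_R}$, writes $K_Rf=K(Ef)+g\cdot Ef$ on $\mc H_R$ with $g(u)=K(u,\mc H_R^c)$, and bounds the two pieces exactly as you do. The positivity argument is identical.
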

\begin{proof}
For $f \in L_2(\mu_R)$ let
\[
	 (Ef) (u) := \mathbf 1_{\mc H_R}(u) f(u) \in L_2(\mu).
\]
Note that 
$\|f\|_{2,\mu_R} =\frac{1}{\sqrt{\mu(\mc H_R)}}\, \|Ef\|_{2,\mu}$ and for 
$\int_{\mc H_R} f \, \d \mu_R =0$ follows $\int_{\mc H} Ef \, \d \mu = 0$.
Further, for any $f\in L_2(\mu_R)$ we have 
 \begin{align*}
  \| K_R f \|^2_{2,\mu_R} 
  & = \int_{\mc H_R} \left| \int_{ \mc H_R}  f(v)\, K(u,\d v) 
  + f(u)\,K(u, \mc H_R^c)\right|^2 \d\mu_R(u)\\
  & = \int_{\mc H_R} \left| \int_{\mc H} Ef(v)\, K(u,\d v) +  Ef(u) \,K(u,\mc H_R^c) \right |^2 \d\mu_R(u)\\
  & =  \| K(Ef) + g\, Ef \|^2_{2,\mu_R} 
 \end{align*}
with $g(u) := \mathbf 1_{\mc H_R}(u) \, K(u,\mc H_R^c)$. 
Then
\begin{align*}
	\frac{\| K_R f \|_{2,\mu_R}}{\| f \|_{2,\mu_R}}
	& = \frac{ \| K(Ef) + g\,Ef \|_{2,\mu_R}}{\| Ef \|_{2,\mu_R}} 
	= \frac{ \| E(K(Ef)) + g\,Ef \|_{2,\mu}}{\| Ef \|_{2,\mu}}\\
	& \leq \frac{ \| K(Ef)  \|_{2,\mu} + \| g\,Ef \|_{2,\mu}}{\| Ef \|_{2,\mu}}\\
	& \leq \frac{ \| K(Ef)  \|_{2,\mu}}{\| Ef \|_{2,\mu}} + \sup_{u\in \mc H_R} K(u,\mc H_R^c),
\end{align*}
where we applied $\|Ef\|_{2,\mu} \leq \|f\|_{2,\mu}$ in the first inequality.
By taking the supremum over all $f \in L^0_2(\mu_R)$ and because of $E(L^0_2(\mu_R)) \subseteq L^0_2(\mu)$ the first assertion follows.
Moreover, we have for $f \in L_2(\mu_R)$ that
\begin{align*}
	\langle K_R f, f\rangle_{\mu_R}
	& = \int_{\mc H} K_R f(u) \, f(u) \, \mu_R(\d u)\\
	& = \int_{\mc H} \left( \int_{\mc H_R} f(v)\, K(u,\d v) + f(u)\,K(u,\mc H_R^c)\right) \, f(u) \, \mu_R(\d u)\\
	& = \int_{\mc H} \int_{\mc H} (Ef)(v)\, K(u,\d v) \; (Ef)(u) \, \frac{\mu(\d u)}{\mu(\mc H_R)}\\
	& \qquad  + \int_{\mc H} f^2(u)\,K(u,\mc H_R^c) \, \mu_R(\d u).
\end{align*}
The second term is always positive since $f^2(u)\,K(u,\mc H_R^c) \geq 0$ for all $u\in\mc H$ and the first term coincides with $\langle K(Ef), Ef\rangle_{\mu} \, /\mu(\mc H_R)$.
Thus, the second statement is proven.
\end{proof}

Lemma \ref{lem:K_R_gap} tells us that there exists an absolute spectral gap of $K_R$ if there exists an absolute spectral gap of $K$ and $\sup_{u\in \mc H_R} K(u,\mc H_R^c)$ is sufficiently small. 
Indeed, we can apply this result to the pCN Metropolis algorithm.

\begin{theo}[Spectral gap of restricted pCN Metropolis] \label{theo:Gap_restricted}
Let $\mu$ be as in \eqref{equ:mu} and let $M_{0}$ denote the $\mu$-reversible pCN Metropolis kernel.
If there exists a spectral gap of $M_{0}$ in $L_2(\mu)$, 
then for any $\varepsilon>0$ there exists a number $R\in(0,\infty)$ such that $M_{0,R}$ possesses a spectral gap in $L_2(\mu_R)$, i.e.,
\[
	{\rm gap}(M_{0,R}) =  1 - \| M_{0,R} \|_{\mu_R} \geq \gap(M_{0}) - \varepsilon,
\] 
where $\mu_R$ as in \eqref{equ:mu_R} and $M_{0,R}$ according to Definition \ref{defi:K_R}. 
\end{theo}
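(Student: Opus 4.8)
The plan is to apply Lemma~\ref{lem:K_R_gap} to $K = M_{0}$, so that the whole statement reduces to showing that the tail probability $\sup_{u\in\mc H_R} M_{0}(u,\mc H_R^c)$ can be made smaller than $\varepsilon$ by choosing $R$ large enough. Once that is done, Lemma~\ref{lem:K_R_gap} gives
\[
	\| M_{0,R} \|_{\mu_R} \leq \| M_{0} \|_{\mu} + \sup_{u\in\mc H_R} M_{0}(u,\mc H_R^c)
	\leq \| M_{0} \|_{\mu} + \varepsilon,
\]
which is exactly ${\rm gap}(M_{0,R}) \geq \gap(M_{0}) - \varepsilon$, and we are done. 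So the real content is a uniform-in-$u$ tail bound on the pCN Metropolis kernel.

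First I would bound the Metropolis kernel by its proposal: since $M_{0}(u,\mc H_R^c) = \int_{\mc H_R^c}\alpha(u,v)\,P_{0}(u,\d v) \leq P_{0}(u,\mc H_R^c)$ for $u\in\mc H_R$ (the Dirac part $\delta_u$ puts no mass on $\mc H_R^c$ when $u\in\mc H_R$), it suffices to estimate $P_{0}(u,\mc H_R^c) = N(\sqrt{1-s^2}\,u, s^2C)(\{\|v\|\geq R\})$ uniformly over $\|u\| < R$. If $V\sim P_{0}(u,\cdot)$, write $V = \sqrt{1-s^2}\,u + sW$ with $W\sim N(0,C)$; then $\|V\| \leq \sqrt{1-s^2}\,\|u\| + s\|W\| < \sqrt{1-s^2}\,R + s\|W\|$, so $\|V\| \geq R$ forces $s\|W\| > (1-\sqrt{1-s^2})R$, i.e. $\|W\| > \frac{1-\sqrt{1-s^2}}{s}R =: c_s R$ with $c_s > 0$ for $s\in(0,1)$. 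Hence
\[
	\sup_{u\in\mc H_R} M_{0}(u,\mc H_R^c) \leq \mu_0(\{\|W\| > c_s R\}),
\]
and since $\mu_0 = N(0,C)$ is a genuine probability measure on $\mc H$, $\mu_0(\{\|W\| > c_s R\}) \to 0$ as $R\to\infty$. Thus for any $\varepsilon>0$ we can pick $R$ finite with $\sup_{u\in\mc H_R} M_{0}(u,\mc H_R^c) < \varepsilon$, and $R$ may be enlarged further (without spoiling this bound, since the tail is monotone in $R$) to also ensure $2\mu(\mc H_R^c) < \varepsilon$ so that $\mu_R$ is close to $\mu$ in total variation as in Definition~\ref{defi:mu_R}.

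The main (minor) obstacle is the degenerate case $s=1$, where $\sqrt{1-s^2}=0$: then $P_{0}(u,\cdot) = N(0,C)$ is independent of $u$, and the argument above still works with $c_s = 1$, so this is not a real problem; but one should note that the statement is vacuous (or the bound trivial) when $s\notin(0,1)$ in a way that makes $c_s=0$, which does not occur. I expect the write-up to be short: invoke Lemma~\ref{lem:K_R_gap}, reduce $M_{0}$ to $P_{0}$, reduce $P_{0}$ to a fixed Gaussian tail via the triangle inequality in $\mc H$, and conclude by countable additivity of $\mu_0$. No heavy machinery is needed; all the technical work sits in the already-proven Lemma~\ref{lem:K_R_gap}.
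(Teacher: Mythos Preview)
Your proposal is correct and follows essentially the same route as the paper's proof: invoke Lemma~\ref{lem:K_R_gap}, dominate $M_{0}(u,\mc H_R^c)$ by $P_{0}(u,\mc H_R^c)$ for $u\in\mc H_R$, and use the triangle inequality on the representation $V=\sqrt{1-s^2}\,u+sW$ to reduce to the fixed Gaussian tail $\mu_0(\{\|W\|>c_sR\})$ with $c_s=(1-\sqrt{1-s^2})/s$. The extra remark about enlarging $R$ to also get $\norm{\mu-\mu_R}{\text{tv}}<\varepsilon$ is not part of this theorem's statement and can be dropped.
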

\begin{proof}
Given the results of Proposition \ref{propo: K_R_reversible} and Lemma \ref{lem:K_R_gap} it suffices to prove that for any $\varepsilon>0$ there exists an $R>0$ such that $\sup_{u\in \mc H_R} M_{0}(u,\mc H_R^c) \leq \varepsilon$.
We recall that the proposal kernel of $M_{0}$ is $P_{0}(u,\cdot) = N(\sqrt{1-s^2}u, s^2 C)$ 
and obtain with $\mu^s := N(0, s^2C)$ that
\begin{align*}
	\sup_{u\in \mc H_R} M_{0}(u,\mc H_R^c) 
	& \leq \sup_{u\in \mc H_R} P_{0}(u,\mc H_R^c)
	= \sup_{u\in \mc H_R} \int_{\| \sqrt{1-s^2} u + v\| \geq R} \d \mu^s(v)\\
	& \leq \sup_{u\in \mc H_R} \int_{\| \sqrt{1-s^2} u\| + \|v\| \geq R}  \d\mu^s(v)\\
	& = \sup_{u\in \mc H_R} \int_{\|v\| \geq R - \sqrt{1-s^2}\|u\|}  \d\mu^s(v)\\
	& \leq \int_{\|v\| \geq (1 - \sqrt{1-s^2}) R}  \d\mu^s(v) 
	= \mu_0( \mc H_{R_s}^c)
\end{align*}
where $R_s = \frac{1 - \sqrt{1-s^2}}{s}R$ and $\mu_0 = N(0, C)$.
Again, since $\mu_0$ is a probability measure on $\mc H$ we know that there exists a number $R$, such 
that $\mu_0( \mc H_{R_s}^c)\leq \varepsilon$.
\end{proof}

\subsection{Spectral gap of restricted gpCN Metropolis}
Now, we are able to formulate and to prove our main convergence result.

\begin{theo}[Convergence of restricted gpCN Metropolis] \label{theo:gpCN_conv}
Let $\mu$ be as in \eqref{equ:mu} and assume that the pCN Metropolis kernel possesses a spectral gap in $L_2(\mu)$, i.e., $\gap(M_{0})>0$.
Then, 
for any $\Gamma \in \mc L_+(\mc H)$ and any $\varepsilon \in (0,\gap(M_{0}))$ 
there exists a number $R_0 = R_0(\varepsilon)\in(0,\infty)$ such that for any $R\geq R_0$ holds
\[
	\norm{\mu-\mu_R}{\text{tv}} < \varepsilon
	\quad \text{and} \quad
	\gap(M_{\Gamma,R})>0
\]
where $\gap(M_{\Gamma,R}) = 1 - \| M_{\Gamma,R} \|_{\mu_R}$ denotes the spectral gap of $M_{\Gamma,R}$ in $L_2(\mu_R)$.
\end{theo}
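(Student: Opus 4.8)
The plan is to combine the ingredients assembled in Sections~\ref{subsec: comp_conduct}--\ref{sec:restrict}. Fix $\Gamma \in \mc L_+(\mc H)$ and $\varepsilon \in (0,\gap(M_{0}))$. First I would apply Theorem~\ref{theo:Gap_restricted} with the tolerance $\varepsilon/2$ (say) to obtain a radius $R_1 \in (0,\infty)$ such that for all $R \geq R_1$ the restricted pCN Metropolis kernel $M_{0,R}$ is $\mu_R$-reversible and satisfies $\gap(M_{0,R}) \geq \gap(M_{0}) - \varepsilon/2 > 0$; simultaneously, by the total variation computation preceding Definition~\ref{defi:K_R} (namely $\|\mu - \mu_R\|_{\text{tv}} = 2\mu(\mc H_R^c)$), one can enlarge $R_1$ if necessary so that $\|\mu - \mu_R\|_{\text{tv}} < \varepsilon$ for all $R \geq R_1$. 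This already secures the first conclusion and gives a spectral gap for the \emph{reference} chain $M_{0,R}$.

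Next I would transfer the spectral gap from $M_{0,R}$ to $M_{\Gamma,R}$ via the comparison Theorem~\ref{theo:comparison_gap}, applied with $\mu$ replaced by $\mu_R$, with $M_1 = M_{\Gamma,R}$, $M_2 = M_{0,R}$, and with the common acceptance probability $\alpha_R(u,v) = \mathbf{1}_{\mc H_R}(v)\min\{1,\exp(\Phi(u)-\Phi(v))\}$ from Proposition~\ref{propo: K_R_reversible}. The density of $P_{\Gamma}(u,\cdot)$ with respect to $P_{0}(u,\cdot)$ needed for the lemma is $\rho_\Gamma(u,v)^{-1}$, which exists by Theorem~\ref{theo:gpCN_density}; here I should be careful that Lemma~\ref{lem:conductance} wants the density of $P_1$ w.r.t.\ $P_2$, so I either invoke the symmetric roles of $\Gamma$ and $0$ in Theorem~\ref{theo:gpCN_density} or equivalently note $\frac{\d P_\Gamma(u)}{\d P_0(u)}(v) = \rho_\Gamma(u,v)^{-1}$ and the corresponding integrability. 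The positivity hypothesis of Theorem~\ref{theo:comparison_gap} holds by Theorem~\ref{theo:gpCN_positiv} for $M_\Gamma$ (hence for $M_{\Gamma,R}$ by the last assertion of Lemma~\ref{lem:K_R_gap}) and for $M_0$ (hence $M_{0,R}$) likewise. Choosing a fixed $p$ with $1 < p < 1 + \frac{1}{2\|H_\Gamma\|}$, it then remains to check that
\[
	\kappa_p = \sup_{\mu_R(A)\in(0,1/2]} \frac{\int_A \int_{A^c} \rho_\Gamma(u,v)^{p}\, P_\Gamma(u,\d v)\, \mu_R(\d u)}{\mu_R(A)} < \infty.
\]

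The main obstacle is precisely this finiteness of $\kappa_p$, which failed for the unrestricted measure $\mu$ at the end of Section~\ref{sec:gpCN_density}. The restriction to $\mc H_R$ is what saves it: by Theorem~\ref{theo:int_rho} the inner integral is bounded by $c\,\exp(\frac{b}{2}\|u\|^2)$ with $c$ and $b$ depending only on $p$, $H_\Gamma$ and $\|C^{-1/2}\Delta_\Gamma\|$, and since the outer integration is over $A \subseteq \mc H_R$ we have $\|u\| < R$ on the domain, so $\int_A \int_{A^c}\rho_\Gamma^{p}\,P_\Gamma(u,\d v)\,\mu_R(\d u) \leq c\,e^{bR^2/2}\,\mu_R(A)$, whence $\kappa_p \leq c\,e^{bR^2/2} < \infty$ for every fixed $R$. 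Feeding this into Theorem~\ref{theo:comparison_gap} gives
\[
	\Big(\tfrac{\gap(M_{\Gamma,R})}{2}\Big)^p \leq \kappa_p\,(2\,\gap(M_{0,R}))^{(p-1)/2},
\]
and since the right-hand side is strictly positive (because $\gap(M_{0,R}) > 0$ and $\kappa_p>0$), we conclude $\gap(M_{\Gamma,R}) > 0$. Setting $R_0 := R_1$ finishes the proof; note that $R_0$ depends on $\varepsilon$ through Theorem~\ref{theo:Gap_restricted} but the bound on $\kappa_p$, and hence the positivity of $\gap(M_{\Gamma,R})$, holds automatically for every such $R$.
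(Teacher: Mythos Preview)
Your overall strategy matches the paper's, but the application of the comparison theorem is in the wrong direction and the final logical step is invalid. Theorem~\ref{theo:comparison_gap} gives
\[
\left(\frac{\gap(M_1)}{2}\right)^p \le \kappa_p\,(2\,\gap(M_2))^{(p-1)/2},
\]
which is an \emph{upper} bound on $\gap(M_1)$ in terms of $\gap(M_2)$. With your assignment $M_1=M_{\Gamma,R}$, $M_2=M_{0,R}$ you obtain exactly the inequality you display, namely $(\gap(M_{\Gamma,R})/2)^p\le\kappa_p(2\gap(M_{0,R}))^{(p-1)/2}$, but from ``LHS $\le$ RHS'' and ``RHS $>0$'' you cannot conclude ``LHS $>0$''. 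The correct application is the reverse: take $M_1=M_{0,R}$, $M_2=M_{\Gamma,R}$, and $\rho=\frac{\d P_0(u)}{\d P_\Gamma(u)}=\rho_\Gamma$. Then the relevant quantity is precisely the $\kappa_p$ you wrote down (with $\rho_\Gamma^p$ integrated against $P_\Gamma$), the bound $\kappa_p\le c\,e^{bR^2/2}$ from Theorem~\ref{theo:int_rho} applies for $1<p<1+\tfrac{1}{2\|H_\Gamma\|}$, and you get
\[
\left(\frac{\gap(M_{0,R})}{2}\right)^p \le \kappa_p\,(2\,\gap(M_{\Gamma,R}))^{(p-1)/2},
\]
which, since $\gap(M_{0,R})>0$ and $\kappa_p<\infty$, yields the desired lower bound $\gap(M_{\Gamma,R})\ge \tfrac{1}{2}\bigl(\kappa_p^{-1}(\gap(M_{0,R})/2)^p\bigr)^{2/(p-1)}>0$.

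Note also that your passage about needing $\rho_\Gamma^{-1}$ and ``symmetric roles'' becomes moot once the assignment is corrected: the $\kappa_p$ you actually verify is for $\rho_\Gamma=\d P_0/\d P_\Gamma$ integrated against $P_\Gamma$, which is exactly what $M_1=M_{0,R}$, $M_2=M_{\Gamma,R}$ requires. Everything else in your plan (the choice of $R_0$ via Theorem~\ref{theo:Gap_restricted} and the total variation estimate, positivity via Theorem~\ref{theo:gpCN_positiv} and Lemma~\ref{lem:K_R_gap}, the Metropolis form of the restricted kernels via Proposition~\ref{propo: K_R_reversible}) is correct and matches the paper.
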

\begin{proof}
By Theorem~\ref{theo:Gap_restricted} we have that for any $\varepsilon \in (0,\gap(M_0))$ there exists a number $R_0\in (0,\infty)$ such that for any $R\geq R_0$ holds
\[
	\norm{\mu-\mu_R}{\text{tv}} \leq \varepsilon
	\quad \text{and} \quad
	\gap(M_{0,R}) >0.
\]
Moreover, Proposition~\ref{propo: K_R_reversible}, Theorem~\ref{theo:Gap_restricted} 
and Theorem~\ref{theo:gpCN_positiv} yield 
that for any $\Gamma \in  \mc L_+(\mc H)$ the Markov operator associated 
to $M_{\Gamma,R}$ is self-adjoint and positive on $L_2(\mu_R)$.
In particular, $M_{\Gamma,R}$ 
is again a Metropolis kernel with proposal $P_\Gamma$ and acceptance probability $\alpha_R$.
Thus, in order to apply Theorem \ref{theo:comparison_gap} to $M_{0,R}$ and $M_{\Gamma,R}$ 
it remains to verify that there exists a $p>1$ so that
\[
  \kappa_{p,R} := \sup_{\mu_R(A)\in (0,1/2]} \frac{\int_A \int_{A^{c}} 
  \rho_{\Gamma}(u,v)^p\,P_{\Gamma}(u,\d v)\, \d\mu_R(u)}{\mu_R(A)} < \infty
\]
where $\rho_{\Gamma}(u, v) = \frac{\d P_{0}(u)}{\d P_{\Gamma}(u)}(v)$.
By Theorem \ref{theo:int_rho} we have for any $p < 1 + \frac {1}{2\|H_\Gamma\|}$ that
\begin{align*}
	\kappa_{p,R} 
	& \leq  \sup_{\mu_R(A)\in (0,1/2]} \frac{\int_A c \exp\left(\frac {b}2\, \|u\|^2\right) \d\mu_R(u)}{\mu_R(A)} 
	\leq c \exp\left(\frac {b}2\, R^2\right) <\infty.
\end{align*}
Hence, Theorem~\ref{theo:comparison_gap} leads to
\[
	\gap(M_{\Gamma,R})^{(p-1)/2} \geq \frac{1}{2^{(3p-1)/2}} \; \frac{\gap(M_{0,R})^{p}}{\kappa_{p,R}} > 0
\]
which proves the assertion.
\end{proof}

Theorem~\ref{theo:gpCN_conv} tells us that the 
corresponding restricted gpCN Metropolis converges exponentially 
fast to any, arbitrarily close, restriction $\mu_R$ of $\mu$ whenever 
the pCN Metropolis has a spectral gap, e.g., under the conditions of \cite[Theorem~2.14]{HaStVo14}.
In particular, Theorem~\ref{theo:gpCN_conv} is a statement about the inheritance of geometric convergence 
from the pCN to the restricted gpCN Metropolis.
We emphasize that a quantitative comparison of their spectral gaps is
not proven.
We provide a lower bound for the spectral gap of $\gap(M_{\Gamma,R})$ in nonlinear terms of the spectral gap of the pCN Metropolis.
Additionally, the stated estimate behaves rather poor in $R$, more precise, it decays exponentially as $R\to \infty$.

Although we argued in the above theorem with restrictions of $\mu$ 
in order to bound $\kappa_p$ from Theorem~\ref{theo:comparison_gap}, 
let us mention that, in simulations when $R$ is sufficiently large
one cannot distinguish between $\mu$ and $\mu_R$ as well as between Markov
chains with transition kernels $M_\Gamma$ and $M_{\Gamma,R}$.

Moreover, 
we conjecture that the gpCN Metropolis targeting $\mu$ has a strictly positive
spectral gap whenever the pCN Metropolis has one. 
Recalling the results of the numerical simulations 
in Section~\ref{subsec: numerics} we even conjecture that 
the spectral gap of the gpCN Metropolis with suitably chosen $\Gamma \in \mc L_+(\mc H)$ is much larger than the one of the pCN Metropolis.

\section{Outlook on gpCN proposals with state-dependent covariances} \label{subsec: loc_gpCN}
In this section we comment on state-dependent proposal covariances as they are a natural extension of the idea behind the gpCN proposal.
The advantage of such a state-dependent approach is that
the resulting Metropolis algorithm might be even better adapted to the
target measure by allowing locally different proposal covariances.
For an illustrative motivation of state-dependent proposal 
covariances we refer to \cite{GirolamiCalderhead2011},\cite{MartinEtAl2012}
and for recent positive and negative theoretical results we refer to \cite{Livingstone15}. 
In the Hilbert space setting we are now able to define MH algorithms 
by means of Theorem \ref{theo:gpCN_density}.
Consider the proposal kernel
\begin{equation} \label{equ:Ploc}
	P_\mathrm{loc}(u,\cdot) = N(A_{\Gamma(u)} u, s^2 C_{\Gamma(u)})
\end{equation}
where we assume 
that for $u\in \mc H$ we have $\Gamma(u)\in\mc L_+(\mc H)$
and that the corresponding mapping $u \mapsto \Gamma(u) $
is measurable. 
Further, 
by $A_{\Gamma(u)}$ and $C_{\Gamma(u)}$ we denote
the components of the gpCN proposal for $\Gamma = \Gamma(u)$.
Following the heuristic presented 
in Section \ref{subsec: Motiv} for Bayesian inference problems 
where $\Phi$ in \eqref{equ:mu} is of the form \eqref{equ:Phi}, 
we could chose for instance
\begin{equation} \label{eq: local_G}
 	\Gamma(u) = \nabla G(u)^*\, \Sigma^{-1}\, \nabla G(u).
\end{equation}
When considering the measure 
$\eta_\mathrm{loc}(\d u , \d v) = P_\mathrm{loc}(u,\d v)\mu_0(\d u)$ 
we notice that $\eta_\mathrm{loc}$ is no longer a Gaussian measure due to the dependence of $\Gamma$ on $u$.
However, to construct a $\mu$-reversible Metropolis kernel 
with the proposal $P_\mathrm{loc}$ above, 
we can apply the same trick as in \cite[Theorem 4.1]{BeskosEtAl2008}. 
Namely, with $\rho_{\Gamma}(u,v) = \frac{\d P_0(u)}{\d P_\Gamma (u)}(v)$ as given in Theorem \ref{theo:gpCN_density} we obtain
\begin{align*}
	P_\mathrm{loc}(u,\d v)\mu_0(\d u)
	& = \frac 1{\rho_{\Gamma(u)}(u, v)} \, P_0(u, \d v) \mu_0(\d u)\\
	& = \frac {1}{\rho_{\Gamma(u)}(u, v)} \, P_0(v, \d u) \mu_0(\d v)\\
	& = \frac {\rho_{\Gamma(v)}(v, u)}{\rho_{\Gamma(u)}(u, v)} \, P_\mathrm{loc}(v, \d u) \mu_0(\d v),
\end{align*}
where we used the $\mu_0$-reversibility of the pCN proposal $P_0$.
Hence, according to the general Metropolis kernel 
construction outlined in Section \ref{sec: pCN}, 
we have that a Metropolis kernel $M_\mathrm{loc}$ 
with proposal $P_\mathrm{loc}$ and acceptance probability
\begin{equation}\label{equ:aloc}
	\alpha_\mathrm{loc}(u,v) = \min \left\{1,  \exp(\Phi(u)-\Phi(v)) \; \frac {\rho_{\Gamma(u)}(u, v)}{\rho_{\Gamma(v)}(v, u)} \right\}
\end{equation}
is $\mu$-reversible.
Note, that the same construction can analogously be applied to proposals
of the form
\begin{equation}\label{equ:Ploc2}
	P'_\mathrm{loc}(u, \cdot ) = N(\sqrt{1-s^2}u, s^2 C_{\Gamma(u)}),
\end{equation}
where the modified acceptance probability is then given by
\begin{equation}\label{equ:aloc2}
	\alpha'_\mathrm{loc}(u,v) = \min \left\{1,  \exp(\Phi(u)-\Phi(v)) \; 
	\frac {\pi_{\Gamma(u)}(\frac 1s [v-A_0u])}{\pi_{\Gamma(v)}(\frac 1s [u-A_0v])} 
	\right\}
\end{equation}
with $\pi_\Gamma$ as stated in Theorem \ref{theo:gpCN_density}.
The arguments above show that this type of algorithms are well-posed
in infinite dimensions.
Of course, the question arises if the  
additional computational costs of evaluating $\Gamma(u)$ 
and $\rho_{\Gamma(u)}$ or $\pi_{\Gamma(u)}$ in each 
step pay off in a significantly higher statistical efficiency.
Related to this concern, one could think of 
substituting $\nabla G(u)$ in \eqref{eq: local_G} by
a cheaper approximation in order to reduce the computational work. 
This might help to make MH algorithms 
with local proposal covariances feasible.
Unfortunately, the tools and results developed and presented in 
Section~\ref{sec:Conv} are not sufficient to prove spectral gaps 
of these MH algorithms with state-dependent proposals. 
The main reason for this is the missing reversibility of the proposals w.r.t. $\mu_0$. 
This condition played a key role in Theorem \ref{theo:comparison_gap} 
and is the main reason why the analysis of Section~\ref{sec:Conv} is not applicable.
We leave this open for future research.

\subsection*{Acknowledgement}
We thank Oliver Ernst and Hans-J\"org Starkloff for fruitful discussions and valuable
comments. D.R. was supported by the DFG priority program 1324
and the DFG Research training group 1523. B.S. was supported by the DFG priority program 1324.

\appendix
 
\section*{Appendix}

\section{Gaussian measures} \label{sec:Gaussian}

The following brief introduction to Gaussian measures 
is based on the presentations given in \cite[Section~1]{DaPrato2004}
and \cite[Section~3]{Hairer2009}.
Another comprehensive reference for this topic is \cite{Bogachev1998}.

Let $\mc H$ be a Hilbert space with norm $\|\cdot\|$ and inner-product 
$\langle\cdot, \cdot \rangle$ and let $\mc L^1_+(\mc H)$ 
denote the set of all linear, bounded, self-adjoint, positive 
and trace class operators $A:\mc H \to \mc H$.

Let $\mu$ be a measure on $(\mc H, \mc B(\mc H))$ and for simplicity let us assume that
$\int_{\mc H} \|v\|^2\, \mu(\d v) < \infty$.
The \emph{mean} $m \in \mc H$ of $\mu$ is defined 
as the Bochner integral $m = \int_{\mc H} v\, \mu(\d v)$ 
and the \emph{covariance} of $\mu$ is the unique operator 
$C\in\mc L^1_+(\mc H)$ given by
\[
	\langle C u, u' \rangle = \int_{\mc H} \langle u , v - m \rangle \langle u' , v - m \rangle \mu(\d v), \qquad \forall u, u' \in \mc H.
\]
A measure $\mu$ on $\mc H$ is called a \emph{Gaussian measure} 
with mean $m\in \mc H$ and covariance operator $C\in\mc L^1_+(\mc H)$,
denoted by $N(m,C)$, iff 
\[
	\int_{\mc H} \e^{\i \langle u, v \rangle} \mu(\d v) = \e^{\i \langle m , u \rangle - \frac 12\langle C u,u\rangle}, \qquad \forall u \in \mc H.
\]
This definition is equivalent to $\langle u \rangle_*\mu = N(\langle u, m\rangle, \langle Cu, u\rangle)$ for all $u\in\mc H$ where $\langle u \rangle: \mc H \to \bbR$ with $\langle u \rangle(v) := \langle u, v\rangle$ and where $\langle u \rangle_*\mu$ denotes the pushforward measure of $\mu$ under the mapping $\langle u \rangle$.
Gaussian measures are uniquely determined by their mean and covariance, i.e., for any $m\in\mc H$ and any $C \in \mc L^1_+(\mc H)$ there exists a unique Gaussian measure $\mu=N(m,C)$ on $\mc H$.
Moreover, the set of random variables on $\mc H$ distributed according to a Gaussian measure is closed w.r.t. affine transformations.
In detail, let $X \sim N(m,C)$ be a Gaussian randon variable on $\mc H$ and let $b\in\mc H$ and $T\colon \mc H \to \mc H$ be a bounded, linear operator, then due to 
\cite[Proposition 1.2.3]{DaPrato2004} we have
\begin{equation} \label{equ:Gaussian_affine}
	b+TX \sim N(b+Tm, TCT^*).
\end{equation}

The \emph{Cameron-Martin space} $\mc H_\mu$ of a Gaussian measure $\mu = N(m,C)$ on $\mc H$ is defined as  the image space $\Image C^{1/2}$ which forms equipped with $\langle u, v \rangle_{C^{-1}}:= \langle C^{-1/2}u, C^{-1/2}v\rangle$ again a Hilbert space.
The space $\mc H_\mu$ has some surprising properties: 
it is the intersection of all measurable linear subspaces $\mc X \subseteq \mc H$ with $\mu(\mc X)=1$;
if $\ker C = \{0\}$ then $\mc H_\mu$ is dense in $\mc H$ and if $\mc H$ is infinite dimensional then $\mu(\mc H_\mu)= 0$.
Moreover, the space $\mc H_\mu$ plays an important role for the equivalence of Gaussian measures as rigorously expressed in the Cameron-Martin theorem below.
Before stating the result we need some more notation.

In the following let $\mu = N(0,C)$.
For $u\in\mc H_\mu$ we set 
\[
	 W_u(v) := \langle C^{-1/2} u, v \rangle, \qquad \forall v\in \mc H,
\]
and understand $W_u$ as an element of $L_2(\mu)$.
Since the mapping $\mc H_\mu \ni u \mapsto W_u \in L_2(\mu)$ is an isometry \cite[Section 1.2.4]{DaPrato2004}, we can define for any $u\in \mc H$
\[
	 \langle C^{-1/2} u, \cdot \rangle := L_2(\mu)\text{-}\lim_{n\to \infty} W_{u_n}
\]
where $u_n \in \mc H_\mu$ and $u_n \to u$ in $\mc H$ as $n\to \infty$.
And by \cite[Proposition 1.2.7]{DaPrato2004} it holds that
\begin{equation} \label{equ:int_Wu}
	\int_{\mc H} \e^{\langle C^{-1/2} u, v \rangle}\, \mu(\d v) = \e^{\frac 12 \|u\|^2}, \qquad \forall u\in\mc H.
\end{equation}
Hence, if $h\in\mc H_\mu$, we understand $\langle C^{-1} h, \cdot \rangle$ as $\langle C^{-1/2} (C^{-1/2} h), \cdot \rangle \in L_2(\mu)$.

\begin{theo}[Cameron-Martin formula, {\cite[Theorem 1.3.6]{DaPrato2004}}] 
\label{thm: Cameron_Martin_form}
Let $\mu = N(0, C)$ and $\mu_h = N(h, C)$ be Gaussian measures on a separable Hilbert space $\mc H$.
Then, $\mu$ and $\mu_h$ are equivalent iff $h\in \mc H_\mu = \Image C^{1/2}$ in which case
\[
	\frac{\d \mu_h}{\d \mu}(v) = \exp\left( -\frac 12 \|C^{-1/2}h\|^2 + \langle C^{-1} h, v  \rangle \right).
\]
\end{theo}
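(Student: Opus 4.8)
I would diagonalise $C$, which turns both $\mu$ and $\mu_h$ into countable products of one-dimensional Gaussians, and then run an elementary martingale argument on the associated likelihood ratios. Fix an orthonormal eigenbasis $(e_n)_{n\in\bbN}$ of the (nonsingular) covariance $C$ with $Ce_n=\lambda_ne_n$, $\lambda_n>0$ and $\sum_n\lambda_n<\infty$. Writing $v_n:=\langle v,e_n\rangle$ and $h_n:=\langle h,e_n\rangle$, the coordinates $(v_n)_n$ are independent under $\mu$ with $v_n\sim N(0,\lambda_n)$, and independent under $\mu_h$ with $v_n\sim N(h_n,\lambda_n)$; thus both measures are product measures on the coordinate sequence, and $\sum_n h_n^2/\lambda_n=\|C^{-1/2}h\|^2$, which is finite precisely when $h\in\Image C^{1/2}=\mc H_\mu$. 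With $\mc F_N:=\sigma(v_1,\dots,v_N)$, the key object is
\[
 M_N(v):=\exp\Big(\sum_{n=1}^N\Big(-\tfrac{h_n^2}{2\lambda_n}+\tfrac{h_nv_n}{\lambda_n}\Big)\Big),
\]
the Radon--Nikodym derivative of the $N$-dimensional marginal of $\mu_h$ with respect to that of $\mu$; one-dimensional Gaussian moment computations give $\bbE_\mu[M_N]=1$ (so $(M_N)_N$ is a nonnegative $(\mc F_N)$-martingale under $\mu$), $\bbE_\mu[M_N^2]=\exp(\sum_{n\le N}h_n^2/\lambda_n)$ and $\bbE_\mu[\sqrt{M_N}]=\exp(-\tfrac18\sum_{n\le N}h_n^2/\lambda_n)$.

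\textbf{Absolutely continuous case ($h\in\mc H_\mu$).} Here $\sup_N\bbE_\mu[M_N^2]\le\exp(\|C^{-1/2}h\|^2)<\infty$, so $(M_N)$ is bounded in $L_2(\mu)$ and converges $\mu$-a.s.\ and in $L_1(\mu)$ to some $M_\infty$ with $\bbE_\mu[M_\infty]=1$. I would then note that for every cylinder set $A\in\mc F_N$ we have $\mu_h(A)=\int_AM_N\,\d\mu=\int_AM_\infty\,\d\mu$ by $L_1$-convergence; since $\mc H$ is separable the cylinder sets form a $\pi$-system generating $\mc B(\mc H)$, so the monotone class theorem forces $\mu_h=M_\infty\,\mu$, i.e.\ $\mu_h\ll\mu$ with density $M_\infty$. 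To identify $M_\infty$, the partial sums $S_N:=\sum_{n\le N}h_nv_n/\lambda_n$ form an $L_2(\mu)$-bounded martingale with $\Var(S_N)=\sum_{n\le N}h_n^2/\lambda_n$, so converge $\mu$-a.s.\ and in $L_2(\mu)$; by the construction of the measurable linear functional recalled in Appendix~\ref{sec:Gaussian}, applied to $u_N:=\sum_{n\le N}(h_n/\sqrt{\lambda_n})e_n\to C^{-1/2}h$ in $\mc H$, this limit is $\langle C^{-1}h,\cdot\rangle\in L_2(\mu)$. Passing to the limit in $M_N=\exp(-\tfrac12\sum_{n\le N}h_n^2/\lambda_n+S_N)$ gives $M_\infty=\exp(-\tfrac12\|C^{-1/2}h\|^2+\langle C^{-1}h,v\rangle)$ $\mu$-a.s., the asserted density; being strictly positive it makes $\mu$ and $\mu_h$ equivalent.

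\textbf{Singular case ($h\notin\mc H_\mu$).} Now $\sum_nh_n^2/\lambda_n=\infty$, so $\bbE_\mu[\sqrt{M_N}]\to0$. Since $M_N$ converges $\mu$-a.s.\ to some $M_\infty\ge0$, Fatou's lemma gives $\bbE_\mu[\sqrt{M_\infty}]\le\liminf_N\bbE_\mu[\sqrt{M_N}]=0$, hence $M_\infty=0$ $\mu$-a.s. Running the same argument with $\mu$ and $\mu_h$ interchanged (equivalently, with the nonnegative $\mu_h$-martingale $1/M_N$) gives $M_N\to\infty$ $\mu_h$-a.s. The disjoint events $\{M_N\to0\}$ and $\{M_N\to\infty\}$ then carry full mass under $\mu$ and $\mu_h$ respectively, so $\mu\perp\mu_h$; equivalently, this is Kakutani's dichotomy for product measures, with Hellinger affinity $\prod_n\exp(-h_n^2/(8\lambda_n))=\exp(-\tfrac18\|C^{-1/2}h\|^2)$. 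Together with the previous step this gives the equivalence ``iff''.

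\textbf{Main obstacle.} The one genuinely delicate step is identifying $M_\infty$ in the absolutely continuous case: $\langle C^{-1}h,\cdot\rangle$ is in general only an $L_2(\mu)$-limit of honest linear functionals, not a pointwise inner product, so the limit has to be taken through the isometry $\mc H_\mu\ni g\mapsto W_g$ and the $\mc H$-convergence $u_N\to C^{-1/2}h$ rather than by any coordinatewise rearrangement; separability of $\mc H$ is likewise needed for the monotone class step. Everything else is a collection of one-dimensional Gaussian moment generating function evaluations.
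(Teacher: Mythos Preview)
Your proof is correct. The paper, however, does not prove this statement at all: it is quoted in Appendix~\ref{sec:Gaussian} as a background result with a direct citation to \cite[Theorem~1.3.6]{DaPrato2004}, so there is no in-paper proof to compare against. What you have written is essentially the standard argument one finds in that reference and in \cite{Bogachev1998}: diagonalise $C$, view $\mu$ and $\mu_h$ as product measures on coordinates, form the finite-dimensional likelihood ratios $M_N$, and invoke the $L_2$-bounded martingale convergence theorem when $h\in\Image C^{1/2}$ and Kakutani's dichotomy (via the Hellinger affinity going to zero) otherwise. Your identification of the limit $M_\infty$ with $\exp(-\tfrac12\|C^{-1/2}h\|^2+\langle C^{-1}h,\cdot\rangle)$ through the isometry $u\mapsto W_u$ is exactly the mechanism the appendix sets up when it defines $\langle C^{-1/2}u,\cdot\rangle$ as an $L_2(\mu)$-limit, and you are right to flag that this is the only step requiring care. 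The moment computations, the monotone class passage from cylinder sets to $\mc B(\mc H)$, and the Fatou argument in the singular case are all sound.
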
 
Thus, two Gaussian measures $N(m, C)$ and $N(m+h, C)$ are only equivalent 
if $h\in\Image C^{1/2}$.
Consider now $\mu = N(0,C)$ and $\nu = N(0, Q)$ with $C\neq Q$.
Before stating a theorem about the 
equivalence of $\mu$ and $\nu$, we need some more notations.
Let $T:\mc H \to \mc H$ be in the following a self-adjoint trace class operator and let $(t_n)_{n\in\bbN}$ denote the sequence of its eigenvalues. 
We set
\[
	\det(I+T) := \prod_{n=1}^\infty (1+t_n)
\]
and define
\[
	\langle TC^{-1/2}u, C^{-1/2}u\rangle := \lim_{N\to\infty} \langle TC^{-1/2} \, \Pi_N u, C^{-1/2}\, \Pi_N u\rangle, \qquad \mu\text{-a.e.}
\]
where $\Pi_N$ denotes the projection operator to $\mathrm{span}\{e_1,\ldots,e_N\}$ 
with $e_n$ denoting the $n$th eigenvector of $C$.
The existence of the $\mu\text{-a.e.}$-limit above 
is proven in \cite[Proposition 1.2.10]{DaPrato2004} 
and, furthermore, if $\langle Tu, u\rangle < \|u\|^2$ holds for any $u\in\mc H$, 
then by \cite[Proposition 1.2.11]{DaPrato2004} we have
\begin{equation} \label{equ:int_Tu}
	\int_{\mc H} \e^{ \frac 12 \langle T C^{-1/2} u, C^{-1/2} u \rangle)}\, \d \mu(u) = \frac 1{\sqrt{\det(1-T)}}.
\end{equation}

\begin{theo}[{\cite[Proposition 1.3.11]{DaPrato2004}}] 
\label{thm: Feldman_form}
Let $\mu = N(0, C)$ and $\nu = N(0, Q)$ be Gaussian measures on a separable Hilbert space $\mc H$.
If $T:= I - C^{-1/2}QC^{-1/2}$ is self-adjoint, 
trace class and satisfies $\langle Tu, u\rangle < \|u\|^2$ 
for any $u\in\mc H$, then $\mu$ and $\nu$ are equivalent with
\[
	\frac{\d \nu}{\d \mu}(u) = \frac 1{\sqrt{\det(I-T)}} \exp\left( - \frac 12 \langle T(I-T)^{-1}\, C^{-1/2} u, C^{-1/2} u  \rangle \right), \quad u\in\mc H.
\]
\end{theo}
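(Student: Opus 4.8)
The statement splits into the two asserted equivalences, and I would treat them in the order given, using Part~1 as an ingredient for Part~2.

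For Part~1 the plan is to invoke the Feldman--Hajek variant, Theorem~\ref{thm: Feldman_form}, for the centered Gaussians $\mu_0 = N(0,C)$ and $\mu_\Gamma = N(0,C_\Gamma)$. The relevant operator is $T_\Gamma := I - C^{-1/2}C_\Gamma C^{-1/2}$, and using the factorization $C_\Gamma = C^{1/2}(I+H_\Gamma)^{-1}C^{1/2}$ from \eqref{equ:tildeC_2} this collapses to $T_\Gamma = I - (I+H_\Gamma)^{-1}$. To apply the theorem I would check its hypotheses: since $H_\Gamma \in \mc L_+(\mc H)$ is trace class by Proposition~\ref{propo:tildeC}, its eigenvalues $(\lambda_n)$ are nonnegative and summable, whence the eigenvalues $t_n = \lambda_n/(1+\lambda_n)$ of $T_\Gamma$ are summable and lie in $[0,1)$; thus $T_\Gamma$ is self-adjoint trace class and satisfies $\langle T_\Gamma u, u\rangle < \|u\|^2$ for every $u\neq 0$. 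The two remaining computations are $\det(I-T_\Gamma) = \det(I+H_\Gamma)^{-1}$ and $T_\Gamma(I-T_\Gamma)^{-1} = (I-(I+H_\Gamma)^{-1})(I+H_\Gamma) = H_\Gamma$, together with the identity $\langle H_\Gamma\, C^{-1/2}v, C^{-1/2}v\rangle = \langle \Gamma v, v\rangle$. Assembling Feldman's formula for $\d\mu_\Gamma/\d\mu_0$ and passing to the reciprocal then yields $\pi_\Gamma = \d\mu_0/\d\mu_\Gamma$ in the claimed form.

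For Part~2 the idea is to realize the transition from $P_\Gamma(u,\cdot) = N(A_\Gamma u, s^2 C_\Gamma)$ to $P_0(u,\cdot) = N(\sqrt{1-s^2}\,u, s^2 C)$ as the composition of an elementary change of covariance followed by a change of mean. Accordingly I would introduce the auxiliary Gaussian kernel $K_\Gamma(u,\cdot) := N(A_\Gamma u, s^2 C)$, which shares the mean of $P_\Gamma(u,\cdot)$ but the prior-type covariance, and factor
\[
	\frac{\d P_0(u)}{\d P_\Gamma(u)}(v) = \frac{\d P_0(u)}{\d K_\Gamma(u)}(v)\;\frac{\d K_\Gamma(u)}{\d P_\Gamma(u)}(v).
\]
The second factor compares $N(A_\Gamma u, s^2 C)$ and $N(A_\Gamma u, s^2 C_\Gamma)$, which have a common mean; applying the affine change of variables $v \mapsto \tfrac1s(v - A_\Gamma u)$ of Lemma~\ref{lem: change_of_variables}, which sends these measures to $\mu_0$ and $\mu_\Gamma$ respectively, reduces it directly to Part~1 and gives $\pi_\Gamma(\tfrac1s(v - A_\Gamma u))$. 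The first factor compares $N(\sqrt{1-s^2}\,u, s^2 C)$ and $N(A_\Gamma u, s^2 C)$, which share the covariance $s^2 C$ and differ only by the mean shift $(\sqrt{1-s^2}I - A_\Gamma)u = \Delta_\Gamma u$; this is exactly the setting of the Cameron--Martin formula, Theorem~\ref{thm: Cameron_Martin_form}, after the same rescaling.

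The crux of Part~2 is the applicability of Cameron--Martin: the formula requires the mean shift to lie in the Cameron--Martin space $\Image C^{1/2}$ of the covariance. This is precisely the content of Lemma~\ref{lem:Delta_Gamma}, which guarantees $\Image \Delta_\Gamma \subseteq \Image C^{1/2}$, so that $\Delta_\Gamma u \in \Image C^{1/2}$ for every $u\in\mc H$ and the density $\pi_{\mathrm{CM}}(\Delta_\Gamma u, \cdot)$ is well-defined. I expect the main bookkeeping obstacle to be keeping the factor $1/s$ consistent through the change of variables and the Cameron--Martin evaluation, so that the exponents assemble into $\pi_{\mathrm{CM}}(\Delta_\Gamma u, \tfrac1s(v - A_\Gamma u))$; this is routine once the two reductions above are in place. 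Multiplying the two factors then produces the asserted formula.
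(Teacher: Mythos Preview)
Your proposal does not address the stated theorem. The statement you were given is Theorem~\ref{thm: Feldman_form}, the Feldman--Hajek variant quoted from \cite[Proposition~1.3.11]{DaPrato2004}: it asserts the equivalence of $\mu=N(0,C)$ and $\nu=N(0,Q)$ under a trace class hypothesis on $T=I-C^{-1/2}QC^{-1/2}$ and gives the density $\d\nu/\d\mu$. This is a single assertion about a general pair of centered Gaussians; there are no ``two asserted equivalences'' in it, and the paper provides no proof since it is simply a citation of an external result.

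What you have actually written is a proof outline for Theorem~\ref{theo:gpCN_density} (Density of pCN w.r.t.\ gpCN), which \emph{uses} Theorem~\ref{thm: Feldman_form} as a black box. Your sketch of that other theorem is in fact correct and follows the paper's own argument essentially line for line: the verification of the hypotheses of Theorem~\ref{thm: Feldman_form} via the eigenvalues $t_n=\lambda_n/(1+\lambda_n)$ of $T_\Gamma$, the algebraic identities $T_\Gamma(I-T_\Gamma)^{-1}=H_\Gamma$ and $\langle H_\Gamma C^{-1/2}v,C^{-1/2}v\rangle=\langle\Gamma v,v\rangle$, the introduction of the auxiliary kernel $K_\Gamma(u,\cdot)=N(A_\Gamma u,s^2C)$, the factorization of the density through $K_\Gamma$, and the appeal to Lemma~\ref{lem:Delta_Gamma} to justify Cameron--Martin. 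But none of this constitutes a proof of Theorem~\ref{thm: Feldman_form} itself; you have assumed it rather than proved it. If the task was indeed to prove Theorem~\ref{thm: Feldman_form}, you would need an argument in the spirit of \cite{DaPrato2004} or \cite{Bogachev1998}, e.g.\ diagonalizing $T$ in an eigenbasis of $C$ and reducing to a product of one-dimensional Gaussian densities.
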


We note that the assumptions of Theorem \ref{thm: Feldman_form} can be relaxed to $I - C^{-1/2}QC^{-1/2}$ being Hilbert-Schmidt which is known as Feldman-Hajek theorem.
Also in this case expression for the Radon-Nikodym derivative can be obtained, see \cite[Corollary 6.4.11]{Bogachev1998}.

Finally, we recall two simple but useful facts resulting from a change of variables.
\begin{lem} \label{lem: change_of_variables}
Let $\mc H$ be a separable Hilbert space, $0<s<\infty$ and $h\in \mc H$.
\begin{itemize}
\item
Assume $\mu = N(m,C)$, $\nu = N(m+h, s^2C)$ on $\mc H$ and $f:\mc H \to \bbR$. Then
\[
	\int_{\mc H} f(v) \mu(\d v) = \int_{\mc H} f\left( \frac 1s (v - h)\right) \, \nu(\d v).
\]

\item
Assume $\mu_1 = N(m_1, C_1)$ and $\mu_2 = N(m_2, C_2)$ are equivalent with $\frac{\d \mu_2}{\d \mu_1}(u) = \pi(u)$.
Then the measures $\nu_1 = N(m_1+h, s^2C_1)$ and $\nu_2 = N(m_2+h, s^2C_2)$  are also equivalent with
\[
	\frac{\d \nu_2}{\d \nu_1}(u) = \pi\left( \frac{u-h}s\right).
\]
\end{itemize}
\end{lem}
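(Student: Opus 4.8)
The plan is to deduce both identities from a single fact: the way a Gaussian measure transforms under the invertible affine map $\tau(v) := s v + h$, whose inverse is $\tau^{-1}(w) = \tfrac 1s (w-h)$. Since $\tau$ is continuous with continuous inverse it is a Borel bijection, and the only tools needed are the affine-image formula \eqref{equ:Gaussian_affine} together with the abstract change-of-variables (pushforward) formula $\int_{\mc H} (g\circ \phi)\,\d\lambda = \int_{\mc H} g\,\d(\phi_*\lambda)$, valid for any measurable $\phi$ and any measurable $g$ for which either side exists. I would state this substitution once and invoke it verbatim in each of the two parts.

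First I would treat the scalar identity. The right-hand side is exactly $\int_{\mc H}(f\circ \tau^{-1})\,\d\nu$, so by the pushforward formula it equals $\int_{\mc H} f \, \d\big((\tau^{-1})_*\nu\big)$, and it remains to identify $(\tau^{-1})_*\nu$. Reading $\tau^{-1}$ as the affine map with linear part $\tfrac 1s I$ and shift $-\tfrac hs$ and feeding $\nu = N(m+h,s^2 C)$ into \eqref{equ:Gaussian_affine}, the covariance becomes $\tfrac 1{s^2}\,s^2 C = C$ while the mean becomes $\tfrac 1s (m+h) - \tfrac hs = \tfrac ms$; in the centered case $m=0$ relevant to every application in the paper this is exactly $\mu = N(0,C)$, which yields the claim. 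The point to check is simply that contracting by $1/s$ scales the covariance by $1/s^2$ and thereby cancels the factor $s^2$ in $\nu$.

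For the second identity I apply the same map $\tau$ to both measures. By \eqref{equ:Gaussian_affine} the pushforwards satisfy $\tau_*\mu_i = N(s m_i + h, s^2 C_i)$, which is $\nu_i$ in the centered setting $m_1=m_2=0$ used here. I then exploit that absolute continuity is preserved by a bimeasurable bijection and that the density transforms by precomposition with the inverse: since $\d\mu_2 = \pi\,\d\mu_1$, for every $A\in\mc B(\mc H)$ one has
\[
	\nu_2(A) = \mu_2\big(\tau^{-1}(A)\big) = \int_{\tau^{-1}(A)} \pi \, \d\mu_1 = \int_A \big(\pi\circ \tau^{-1}\big)\,\d\nu_1,
\]
where the last equality is once more the pushforward formula applied to $\mu_1$ (with $g = \mathbf 1_A\,(\pi\circ\tau^{-1})$, so that $g\circ\tau = \mathbf 1_{\tau^{-1}(A)}\,\pi$). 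Hence $\frac{\d\nu_2}{\d\nu_1}(u) = \pi(\tau^{-1}(u)) = \pi\big(\tfrac{u-h}{s}\big)$, as asserted.

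I expect no genuine analytic obstacle here: both parts are bookkeeping with one affine substitution. The only steps deserving care are the correct tracking of the mean and covariance through the scaling-and-shift in \eqref{equ:Gaussian_affine}, and the observation that $\tau$ being a Borel bijection legitimates the pushforward manipulations and the transformation rule for the Radon--Nikodym derivative. It is worth flagging that the mean computation in part one produces $m/s$, so both identities are really statements about centered measures (which is precisely how they are applied, with $m=0$ and $m_1=m_2=0$); restricting to that case makes the argument clean and rigorous.
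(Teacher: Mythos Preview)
The paper does not supply a proof of this lemma; it is simply stated in Appendix~\ref{sec:Gaussian} as ``two simple but useful facts resulting from a change of variables.'' Your argument via the affine bijection $\tau(v)=sv+h$, the pushforward formula, and the affine-image identity \eqref{equ:Gaussian_affine} is exactly the intended one-line justification and is correct.

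Your observation about the means is on point and worth stating explicitly: pushing $\nu=N(m+h,s^2C)$ forward by $\tau^{-1}$ yields $N(m/s,C)$, not $N(m,C)$, so the first identity as written holds only for $m=0$; likewise $\tau_*\mu_i=N(sm_i+h,s^2C_i)$, so the second identity requires $m_1=m_2=0$ (or else $\nu_i$ should read $N(sm_i+h,s^2C_i)$). Every application in the paper is indeed in the centered situation, so the lemma is used correctly, but you have caught a small imprecision in its statement.
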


\section{Proofs} \label{sec:proofs}

The following proofs are rather operator theoretic 
and rely heavily on the holomorphic functional calculus. 
We refer to \cite[Section VII.3]{DunfordSchwartz1958} 
for a comprehensive introduction.

\subsection{Proof of Lemma \ref{lem:A}} \label{sec:proof_lem:A}
From the proof of Proposition \ref{propo:tildeC} we know 
that $(I + H_\Gamma)^{-1}:\mc H \to \mc H$ is self-adjoint and that $\|(I + H_\Gamma)^{-1}\| \leq 1$.
Thus, $I-s^2(I+H_\Gamma)^{-1}$ is also a self-adjoint, bounded 
and positive operator on $\mc H$ and its square root operator appearing 
in \eqref{equ:A} exists. 
This yields the well-definedness of $A_\Gamma: \mathrm{Im}\,C^{1/2}\to\mc H$. 
We now prove that $A_\Gamma$ is a bounded operator on $\mathrm{Im}\,C^{1/2}$.
For $s=0$ we get $A_\Gamma = I$ and the assertion follows, so that we assume $s\in(0,1)$.
Let us now define $f\colon \mathbb{C}\setminus \{-1\} \to \mathbb{C}$ by 
\[
	f(z) = \sqrt{1 - s^2(1+z)^{-1}}.
\]
The function $f$ is analytic in the complex half plane $\{z \in \bbC: \Re(z) > s^2-1\}$, since $\Re(1+z) > s^2$ implies
\[
	\Re\left( (1+z)^{-1} \right) = \frac{\Re(1+z)}{|1+z|^2} \leq \frac1{\Re(1+z)}< \frac 1{s^2}.
\]
Denoting $\gamma:=\|H_\Gamma\|$ the spectrum of $H_\Gamma = C^{1/2}\Gamma C^{1/2}$ is contained in $[0,\gamma]$.
Then, since $s<1$ we have that $f$ is analytic in a neighborhood, say, $\mathcal{N}[0,\gamma]$ of $[0,\gamma]$. 
Hence, by functional calculus we obtain
\[
	\sqrt{I - s^2 \left( I + H_\Gamma \right)^{-1} } 
	= f(H_\Gamma) 
	= \frac 1{2\pi i} \int_{\partial \mc N[0,\gamma]} f(\zeta)\, (\zeta I - H_\Gamma)^{-1} \, \d \zeta.
\]
Due to analyticity we can approximate $f$ by a sequence of polynomials $p_n$ with degree $n$ which converge uniformly on $\mathcal{N}[0,\gamma]$ to $f$ for $n\to\infty$.
Then, by \cite[Lemma VII.3.13]{DunfordSchwartz1958} holds
\[
	\|p_n(H_\Gamma) - f(H_\Gamma) \|_{\mc H\to \mc H} \to 0,
\]
for $n \to \infty$.
Since the polynomials $p_n$ can be represented as $ p_n(z) = \sum_{k=0}^n a_k^{(n)} z^k$, we obtain further
\[
	C^{1/2} \, p_n(H_\Gamma) 
	= C^{1/2}  \sum_{k=0}^n a^{(n)}_k  \, (C^{1/2}\Gamma C^{1/2})^k = p_n(C\Gamma) \, C^{1/2}.
\]
By \cite[Proposition~1]{Hladnik88} we have
\[
 \spec(C\Gamma\mid \mc H) 
 = \spec(C^{1/2}\Gamma C^{1/2} \mid \mc H)\subseteq [0,\gamma]
\]
where $\spec(\cdot\mid \mc H)$ denotes the spectrum on $\mc H$, and, thus, we can conclude
$\|p_n(C\Gamma) - f(C\Gamma) \|_{\mc H \to \mc H} \to 0$ as $n \to \infty$ again by \cite[Lemma VII.3.13]{DunfordSchwartz1958}.
Hence,
\begin{align*}
	C^{1/2} f(H_\Gamma) 
	& = \lim_{n\to\infty} C^{1/2} \, p_n(H_\Gamma)
	= \lim_{n\to\infty} p_n(C\Gamma) \, C^{1/2}
	= f(C\Gamma) C^{1/2}
\end{align*}
and
\[
	A_{\Gamma}= C^{1/2} f(H_\Gamma) C^{-1/2} = f(C\Gamma) C^{1/2}C^{-1/2}  = f(C\Gamma)
\]
where $f(C\Gamma)$ is by construction a bounded operator on $\mc H$. \hfill $\qed$

\subsection{Proof of Lemma \ref{lem:Delta_Gamma}} \label{sec:proof_lem:Delta_Gamma}
By \cite[Theorem 1]{Douglas1966} the relation $\Image(\Delta_\Gamma) \subseteq \Image(C^{1/2})$ holds iff there exists a bounded operator $B:\mc H\to\mc H$ such that
\begin{equation} \label{equ:ACB}
	\Delta_\Gamma = C^{1/2} B.
\end{equation}
Thus, $\Image(\Delta_\Gamma) \subseteq \Image(C^{1/2})$ 
is equivalent to $C^{-1/2}\Delta_\Gamma$ being bounded on $\mc H$.
In order to construct and analyze the operator $B$, 
we define $f\colon \bbC\setminus\{-1\} \to \bbC$ by
\[
	f(z) := \sqrt{1 - s^2(1+z)^{-1}} - \sqrt{1-s^2}, 
\]
which is analytic in $\{z \in \bbC: \Re(z) > s^2-1\}$, 
cf. the proof of Lemma \ref{lem:A}, and particularly in
\[
	V = \{z \in \bbC: \dist(z, [0,\gamma]) \leq \varepsilon\}, \qquad 0 < \varepsilon < 1-s^2,
\]
where $\gamma := \|H_\Gamma\|$.
We have the following representation
\begin{align*}
	 -\Delta_\Gamma & = A_{\Gamma} - \sqrt{1-s^2} I\\
	& = C^{1/2}\,\left( \sqrt{I - s^2 \left( I + H_\Gamma \right)^{-1} } - \sqrt{1-s^2} I\right) \, C^{-1/2}\\
	& = C^{1/2}\, f(H_\Gamma) \, C^{-1/2}
\end{align*}
with
\[
	f(H_\Gamma) = \frac 1{2\pi i} \int_{\partial V} f(\zeta)\, (\zeta I - H_\Gamma)^{-1} \, \d \zeta
\]
see \cite[Chapter VII.3]{DunfordSchwartz1958}.
Hence, if we can prove that $B = - f(H_\Gamma) \, C^{-1/2}$ 
is a bounded operator on $\mc H$, we have shown the assertion.

For this let $p_n(z) = \sum_{k=0}^n a_k^{(n)} z^k$ be polynomials of degree $n$, with $n\in\bbN$, which converge uniformly on $V$ to $f$.
Such polynomials exist due to the analyticity of $f$ and 
by the fact that $f(0) = 0$ we can assume w.l.o.g. that $a_0^{(n)} = 0$ for all $n\in\bbN$.
This leads to
\begin{align*}
	p_n(H_\Gamma) 
	& =  C^{1/2} \Gamma^{1/2} \left( \sum_{k=1}^n a_k^{(n)} (\Gamma^{1/2} C \Gamma^{1/2} )^{k-1} \right) \Gamma^{1/2} C^{1/2}\\
	& = C^{1/2} \Gamma^{1/2} \; q_{n-1}(\Gamma^{1/2} C \Gamma^{1/2}) \; \Gamma^{1/2} C^{1/2}
\end{align*}
with $q_{n-1}(z) := \sum_{k=1}^{n} a_{k}^{(n)} z^{k-1} = p_n(z)/z$.
Now, \cite[Proposition~1]{Hladnik88} implies that the operators $C^{1/2}\Gamma C^{1/2}$ and $\Gamma^{1/2}C \Gamma^{1/2}$ share the same spectrum, since $C$ and $\Gamma$ are positive.
Thus, $\spec(\Gamma^{1/2}C \Gamma^{1/2}\mid \mc H)  \subset [0,\gamma]$ and we have
\[
	q_n(\Gamma^{1/2}C \Gamma^{1/2})	= \frac 1{2\pi i} \int_{\partial V} q_n(\zeta)\, (\zeta I - \Gamma^{1/2}C \Gamma^{1/2})^{-1} \, \d \zeta, \qquad n\in\bbN.
\]
Moreover, the polynomials $q_n$ are a Cauchy sequence in $C(\partial V)$, since
\begin{align*}
	\sup_{\zeta \in \partial V}|q_n(\zeta) - q_m(\zeta)| 
	& \leq \sup_{\zeta \in \partial V}\frac {|\zeta|}{\min_{\eta \in \partial V} |\eta|} |q_n(\zeta) - q_m(\zeta)| \\
	& = \frac {1}{\min_{\eta \in \partial V} |\eta|} \sup_{\zeta \in \partial V}|\zeta q_n(\zeta) - \zeta q_m(\zeta)| \\
	& = \frac {1}{\min_{\eta \in \partial V} |\eta|} \sup_{\zeta \in \partial V}|p_{n+1}(\zeta) - p_{m+1}(\zeta))|
\end{align*}
where $\min_{\eta \in \partial V} |\eta| = \varepsilon > 0$ due to our choice of $V$.
Thus, the polynomials $q_n$ converge uniformly on $\partial V$ to a function $g$. 
This implies that the operators 
$q_n(\Gamma^{1/2}C \Gamma^{1/2})$
converge in the operator norm to a bounded operator
\[
	g(\Gamma^{1/2}C \Gamma^{1/2}) := \frac 1{2\pi i} \int_{\partial V} g(\zeta)\, (\zeta I - \Gamma^{1/2}C \Gamma^{1/2})^{-1} \, \d \zeta.
\]
We arrive at
\begin{align*}
	f(H_\Gamma) & = \lim_{n\to\infty} p_n(C^{1/2}\Gamma C^{1/2}) \\
	& = \lim_{n\to\infty} C^{1/2} \Gamma^{1/2} \; q_{n-1}(\Gamma^{1/2} C \Gamma^{1/2}) \; \Gamma^{1/2} C^{1/2} \\
	& =  C^{1/2} \Gamma^{1/2} \; g(\Gamma^{1/2} C \Gamma^{1/2}) \; \Gamma^{1/2} C^{1/2}, 
\end{align*}
which yields
\[
	B = -f(H_\Gamma) C^{-1/2} = -C^{1/2} \Gamma^{1/2} \; g(\Gamma^{1/2} C \Gamma^{1/2}) \Gamma^{1/2}
\]
being bounded on $\mc H$. \hfill $\qed$


\end{document}